\title[Discrete scalar curvature from ORC]{Discrete scalar curvature as a weighted sum of Ollivier-Ricci curvatures}
\author{Abigail Hickok$^1$ and Andrew J. Blumberg$^{1, 2, 3}$}
\address{$^1$Department of Mathematics, Columbia University, New York, NY, USA. $^2$Department of Computer Science, Columbia University, New York, NY, USA. $^3$Irving Institute for Cancer Dynamics, Columbia University, New York, NY, USA.}
\thanks{AH was supported by NSF grant DMS-2303402. AJB was partially supported by ONR grant N00014-22-1-2679.}
\date{\today}
\newcommand{\ratiosumA}{T_N}
\newcommand{\R}{\mathbb{R}}
\newcommand{\Ric}[3]{\text{Ric}_{#1}(#2, #3)}
\newcommand{\Ricc}[2]{\text{Ric}_{#1}(#2, #2)}
\newcommand{\Riccc}[2]{\widehat{\text{Ric}}_{#1}(#2)}
\newcommand{\orc}[2]{\kappa_G(#1, #2)}
\newcommand{\sorc}[1]{\text{SORC}(#1)}
\newcommand{\orsc}[1]{\text{SORC}(#1)}
\newcommand{\rsc}{\text{SRC}}
\newcommand{\fracc}[1]{\frac{1}{#1}}
\newcommand{\E}{\mathbb{E}}
\newcommand{\p}{\mathbb{P}}
\newcommand{\tr}{\text{tr}}
\newcommand{\var}{\text{var}}
\newcommand{\dimn}{n}
\newcommand{\vol}{\text{vol}}
\newcommand{\BM}[2]{B^M({#1}, {#2})}
\newcommand{\Zntrunc}{\overline{Z_{N, k}}}
\newcommand{\Ztrunc}{\overline{Z_k}}
\newcommand{\Snk}{S_{N, k}} 
\newcommand{\Snktrunc}{\overline{S_{N, k}}} 
\newcommand{\W}[1]{W_1^{#1}}
\newcommand{\Winf}{W_{\infty}^M}
\newtheorem{theorem}{Theorem}[section]
\theoremstyle{definition}
\newtheorem{remark}[theorem]{Remark}
\newtheorem{lemma}[theorem]{Lemma}
\newtheorem{proposition}[theorem]{Proposition}
\theoremstyle{definition}
\newtheorem{definition}[theorem]{Definition}
\numberwithin{equation}{section}
\begin{document}

\maketitle

\begin{abstract}
We study the relationship between discrete analogues of Ricci and scalar curvature that are defined for point clouds and graphs.  In the discrete setting, Ricci curvature is replaced by Ollivier-Ricci curvature.  Scalar curvature can be computed as the trace of Ricci curvature for a Riemannian manifold; this motivates a new definition of a scalar version of Ollivier-Ricci curvature.  We show that our definition converges to scalar curvature for nearest neighbor graphs obtained by sampling from a manifold.  We also prove some new results about the convergence of Ollivier-Ricci curvature to Ricci curvature.
\end{abstract}

\section{Introduction}

The curvature of a Riemannian manifold $M$ is a fundamental way of characterizing its geometry.  The curvature of a manifold roughly speaking encodes the way that the tangent plane is moving as we move around on the manifold.  Unlike Euclidean space (which is a ``flat'' manifold with zero curvature), manifolds in general may be curved either positively or negatively. Curvature affects properties such as whether parallel geodesics (locally-shortest paths) curve either towards or away from each other, and whether geodesic balls are smaller or larger than their Euclidean analogs. It also constrains the global topology of a manifold.

In the case of a curve, the curvature is essentially encoded in the second derivative.  But in general, the curvature is a much more complicated quantity.  There are several distinct but intimately related notions of curvature; we consider two of them in this paper.  Arguably the simplest notion of curvature is the {\em scalar curvature}, which is a function $S\colon M \to \R$ that quantifies the local curvature at each point $x \in M$.  Scalar curvature is an isometry-invariant, and manifolds with positive scalar curvature have been extensively studied; this condition puts constraints on the possible topology of the manifold, and moreover arises naturally in the theory of general relativity.  A richer notion of curvature is captured by the {\em Ricci curvature} tensor, a linear transformation $\text{Ric}_x\colon T_xM \to T_xM$ on the tangent space at $x$.  Ricci curvature controls the local deformations of shapes along geodesic paths on the manifold, and plays an essential role in our understanding of the geometry of $3$-manifolds (as indicated by Perelman's proof of Thurston's geometrization conjecture).
Scalar and Ricci curvature measure curvature in different ways, but they are related by the fact that scalar curvature is the trace of Ricci curvature.

In recent years, there has been growing interest in discrete analogues of curvature for graphs. There are several definitions of ``discrete Ricci curvature'' that take the form of functions $\kappa\colon E \to \R$ on the edges of a graph (e.g., see \cite{ollivier_ricci, forman_ricci, resistance_curvature}). 
We focus on Ollivier-Ricci curvature (ORC), a version of discrete Ricci curvature that is based on optimal transport. For a weighted graph $G$, the ORC of an edge $(x, y)$ is
\begin{equation*}
    \kappa_G(x, y) = 1 - \frac{W_1(\mu_x, \mu_y)}{w(x, y)}\,,
\end{equation*}
where $\mu_x, \mu_y$ are uniform probability measures on neighbors of nodes $x, y$ respectively, and $W_1$ denotes the $1$-Wasserstein metric. In \Cref{fig:orc_examples}, we show three prototypical examples of edges with negative, zero, and positive ORC. Edges that are present in highly interconnected ``communities'' tend to have positive ORC, while ``local bottlenecks'' tend to have negative ORC.

\begin{figure}
\centering
\subfloat[Negative]{\includegraphics[width =.25\textwidth]{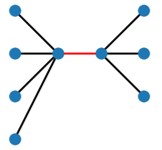}}
\hspace{8mm}
\subfloat[Zero]{\includegraphics[width =.25\textwidth]{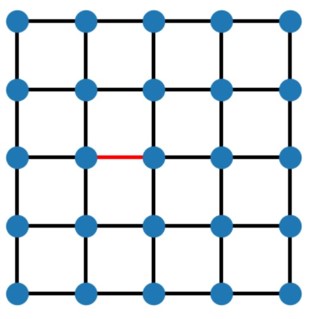}}
\hspace{8mm}
\subfloat[Positive]{\includegraphics[width =.25\textwidth]{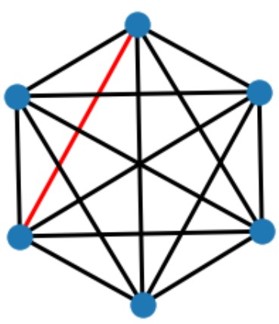}}
\caption{Edges (in red) with (A) negative ORC, (B) zero ORC, and (C) positive ORC. We set each edge weight to $1$ and calculate $\kappa_G(x, y)$ using the formulation in \Cref{def:orc}.}\label{fig:orc_examples}
\end{figure}

ORC has been used in numerous applications. In \cite{sia_ollivier-ricci_2019, ni_community_2019}, ORC was applied to the problem of community detection \cite{porter_communities_2009}.  Recently, Saidi et al. used ORC to 
improve low-dimensional embeddings of high-dimensional point cloud data \cite{saidi_recovering_2024, embedor}. Within machine learning, it has been used to mitigate over-smoothing and over-squashing \cite{nguyen_2023} in graph neural networks, to encode structural properties \cite{ye_curvature_2020, fesser_2024}, and to evaluate graph generative models \cite{rieck_2023}.

From a theoretical perspective, Ollivier-Ricci curvature is particularly appealing since it is the only form of discrete Ricci curvature with formal guarantees about its convergence to traditional Ricci curvature. When ORC was first introduced by Ollivier, it was defined not only on graphs, but also on any metric space $M$ equipped with a Markov kernel $\{\mu_x\}_{x \in M}$. For example, in the manifold version of ORC, $M$ is a Riemannian manifold and $\mu_x$ is the uniform probability measure on a geodesic ball centered at $x$. Ollivier proved that his manifold version of ORC (see \Cref{def:orc_manifold}) converges to the Ricci curvature in an appropriate limit. Van der Hoorn et al. \cite{ORC_convergence_hoorn} extended this result to graph ORC on random geometric graphs, where nodes are sampled uniformly at random from a manifold $M$ and edges connect points within a small connectivity threshold $\epsilon$ of each other. They proved that for pairs of points at prescribed distance $\delta \to 0$, a scaled version of ORC converges in probability to a constant multiple of the Ricci curvature of $M$. Trillos and Weber \cite{trillos_weber} strengthened these results by proving non-asymptotic bounds on the rate of convergence. These theoretical results establish that one can regard ORC as a sensible way to approximate the Ricci curvature of an underlying manifold from samples.

In many applications, it is desirable to have a definition of discrete scalar curvature $S\colon V \to \R$ on the nodes $V$ of a graph, so that one can study the node-wise properties of the graph. Given the relationship between Ricci and scalar curvature, it is natural to ask whether one can use discrete Ricci curvature to define discrete scalar curvature. Sandhu et al. \cite{curvature_cancer} and Sreejith et al. \cite{other_scalar} did just this by defining the discrete scalar curvature of a node $x$ to be the sum (or mean, depending on which version one uses) of the discrete Ricci curvatures of the incident edges; the former used Forman-Ricci curvature, and the latter used Ollivier-Ricci curvature. However, it is not known whether either of these definitions converge to the scalar curvature of the underlying manifold.

\subsection*{Contributions}

The purpose of this paper is to introduce a new definition of discrete scalar curvature and study its relationship to the scalar curvature of manifolds. Given a weighted graph $G$ with edge weights $w(x, y)$, we define scalar Ollivier-Ricci curvature
\begin{equation}\label{eq:orsc}
    \sorc{x} := \fracc{\deg(x)}\sum_{y \text{ adjacent to } x} w(x, y)^2\orc{x}{y}
\end{equation}
For an unweighted graph, one can calculate $\orsc{x}$ by setting $w(x, y) = 1$ for each edge. We use the version of Ollivier-Ricci curvature $\kappa_G(x, y)$ that is defined in \Cref{def:orc}. 

This definition uses a weighted sum of the discrete Ricci curvatures of the incident edges, in contrast to the definitions of \cite{curvature_cancer} and \cite{other_scalar}. In particular, when the nodes are points on a manifold and the graph weights $w(x, y)$ are their pairwise geodesic distances $d_M(x, y)$, our weighted sum puts higher weight on pairs of points that are farther away from each other. 

There are two motivations for this definition. The first is that this weighting scheme is the correct scaling factor in order to obtain convergence to a constant multiple of scalar curvature, as we explain. The second is that we show that the relationship between $\kappa_G(x, y)$ and Ricci curvature tends to deteriorate as $d_M(x, y) \to 0$; see~\Cref{prop:orc_convergence1}.

Our main theorem says that a scaled version of $\orsc{x}$ converges to scalar curvature $S(x)$ in probability. We assume that the nodes of our graph are sampled from a compact Riemannian manifold $M$ of dimension $n$. Details about our random geometric graphs are given in \Cref{sec:setting}.
\begin{theorem}[\Cref{thm:main}]
As the number of nodes $N \to \infty$ and the connection threshold $\epsilon_N \to 0$ (at an appropriate rate) in a random geometric graph $G_N$, we have
    \begin{equation*}
        \Big|\frac{2(n+2)^2}{\epsilon_N^4} \sorc{x_N} - S(x_N)\Big| \xrightarrow[]{\p} 0
    \end{equation*}
for a random node $x_N \in G_N$.
\end{theorem}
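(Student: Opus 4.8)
The plan is to reduce to a statement conditioned on the base point, rewrite $\sorc{x}$ as a sample average, replace each $\orc{x}{y}$ by its Ollivier--Ricci limit plus an error, and then control the main term by a law of large numbers and the error term by the ORC convergence bounds proved earlier. First I would condition on $x_N=x$: since $M$ is compact, all the geometric quantities that appear ($\|\mathrm{Ric}\|_\infty$, $\|\nabla\rho\|_\infty$ for the sampling density $\rho$, the injectivity radius, the constants in the volume-density expansion) are bounded uniformly over $M$, so it suffices to prove the estimate conditionally on $x$ and average over $x$; the probability that $x_N$ lands where the setup of \Cref{sec:setting} degenerates (if $\partial M\neq\varnothing$, within a neighborhood of measure $O(\epsilon_N)$) tends to $0$. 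Conditioned on $x$ and on $\deg(x)=m$, the neighbors $y_1,\dots,y_m$ of $x$ are i.i.d.\ draws from $\rho$ restricted and renormalized to $B_M(x,\epsilon_N)$, independently of the remaining points; and $m$ concentrates around $N\rho(x)\omega_n\epsilon_N^{n}(1+o(1))$ (with $\omega_n$ the volume of the unit ball in $\R^n$), which diverges under the rate assumptions of \Cref{sec:setting}.

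Next I would write
\[
\frac{2(n+2)^2}{\epsilon_N^4}\,\sorc{x}-S(x)=\frac{1}{\deg(x)}\sum_{y\sim x}\Big(\frac{2(n+2)^2\,d_M(x,y)^2}{\epsilon_N^4}\,\orc{x}{y}-S(x)\Big)
\]
and split $\orc{x}{y}=\tfrac{\epsilon_N^2}{2(n+2)}\,\mathrm{Ric}_x(v_{xy},v_{xy})+\mathrm{err}(x,y)$, where $v_{xy}\in T_xM$ is the unit initial velocity of the geodesic from $x$ to $y$ (well defined once $\epsilon_N$ is below the injectivity radius). The reason for splitting first is that, unlike $\orc{x}{y}$ itself, the leading term depends only on the pair $(x,y)$. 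By the Ollivier--Ricci convergence estimates established earlier (\Cref{prop:orc_convergence1} and its companions), on an event of probability $\to1$ the error $|\mathrm{err}(x,y)|$ is bounded, uniformly over the edges of $G_N$, by a quantity $\eta_N(d_M(x,y))$ with $\frac{2(n+2)^2}{\epsilon_N^4}\,d_M(x,y)^2\,\eta_N(d_M(x,y))\to0$; the weight $w(x,y)^2=d_M(x,y)^2$ in \eqref{eq:orsc} is exactly what absorbs the deterioration of the ORC--Ricci relationship as $d_M(x,y)\to0$, so the contribution of the $\mathrm{err}(x,y)$ terms is $o(1)$ with probability $\to1$ (no independence across $y$ is needed, only a supremum bound).

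It then remains to show $\frac{n+2}{\epsilon_N^2}\cdot\frac{1}{\deg(x)}\sum_{y\sim x}d_M(x,y)^2\,\mathrm{Ric}_x(v_{xy},v_{xy})\xrightarrow{\p}S(x)$. Conditionally on $x$ and $\deg(x)=m$, the summands $Z_i:=\frac{n+2}{\epsilon_N^2}\,d_M(x,y_i)^2\,\mathrm{Ric}_x(v_{xy_i},v_{xy_i})$ are functions of $(x,y_i)$ only, hence i.i.d., and are bounded by $(n+2)\|\mathrm{Ric}\|_\infty$ since $d_M(x,y_i)\le\epsilon_N$; Hoeffding's inequality then gives $\p\big(|\tfrac1m\sum_iZ_i-\E Z_i|>t\big)\le2e^{-cmt^2}$ for a constant $c>0$ depending only on $n$ and $\|\mathrm{Ric}\|_\infty$, which dominates the fluctuations of $m$ because $m\to\infty$. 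The common mean is computed in geodesic normal coordinates at $x$: the neighbor law pushes forward to $B(0,\epsilon_N)\subset T_xM$ with density $\rho(x)\big(1+O(|u|)\big)\big(1-\tfrac16\mathrm{Ric}_x(u,u)+O(|u|^3)\big)$, and since $\tfrac{n+2}{\epsilon_N^2}r^2\mathrm{Ric}_x(\theta,\theta)$ (writing $u=r\theta$) is even in $u$, the odd-order corrections drop out after normalization, leaving
\[
\E[Z_i]=\frac{n+2}{\epsilon_N^2}\cdot\frac{\int_0^{\epsilon_N}r^{n+1}\,dr}{\omega_n\epsilon_N^{n}}\int_{S^{n-1}}\mathrm{Ric}_x(\theta,\theta)\,d\theta+O(\epsilon_N^2)=S(x)+O(\epsilon_N^2),
\]
using $\int_{S^{n-1}}\mathrm{Ric}_x(\theta,\theta)\,d\theta=\tfrac{|S^{n-1}|}{n}\,\tr(\mathrm{Ric}_x)=\omega_n S(x)$. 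Assembling the three pieces and letting $N\to\infty$ gives the theorem; this computation is also what pins down the normalizing constant $2(n+2)^2$.

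I expect the error-term step to be the main obstacle: it requires the ORC convergence bound in a sufficiently quantitative and \emph{edge-uniform} form, with explicit dependence on $d_M(x,y)$ and on the sample size, so that the $w(x,y)^2$ weighting provably absorbs it — and it is precisely here that the admissible rate of $\epsilon_N\to0$ relative to $N$ is determined. The concentration and normal-coordinate steps are comparatively routine once the base point is fixed, the only real care being to keep track of which $O(\cdot)$ corrections survive the symmetric integration over $B(0,\epsilon_N)$ and to verify that the relevant rate conditions from \Cref{sec:setting} are compatible with the one just described.
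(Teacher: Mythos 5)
Your proposal is correct and shares the paper's overall skeleton: split $\frac{2(n+2)^2}{\epsilon_N^4}\sorc{x}$ into a ``mean Ricci'' term plus an ORC-versus-Ricci error, prove a law of large numbers for the former (your normal-coordinate computation of $\E[Z_i]$, with the odd terms dropping by symmetry and $\int_{S^{n-1}}\Ric{x}{\theta}{\theta}\,d\theta$ producing the trace, is exactly the content of \Cref{lem:hutchinson}--\Cref{lem:rsc_convergence}), and control the latter through the edge-uniform Wasserstein discrepancy bound of \Cref{lem:WM_convergence_modified}. Where you genuinely diverge from the paper is in how the error term is summed, and your route is the more economical one. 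You keep the weight $d_M(x,y)^2$ attached to each error summand; since $|\orc{x}{y}-\kappa_M(x,y)|=|\W{G}(\mu_x^G,\mu_y^G)-\W{M}(\mu_x^M,\mu_y^M)|/d_M(x,y)$, the weighted per-edge error is $d_M(x,y)\,|\W{G}(\mu_x^G,\mu_y^G)-\W{M}(\mu_x^M,\mu_y^M)|/\epsilon_N^4$, which is at most $\max_{(x,y)\in E(G_N)}|\W{G}(\mu_x^G,\mu_y^G)-\W{M}(\mu_x^M,\mu_y^M)|/\epsilon_N^3$, so a supremum bound over edges suffices, exactly as you claim (and the needed uniformity over arbitrarily short edges is what \Cref{lem:WM_convergence_modified} supplies; the resulting rate requirement $\alpha<1/(3n)$ is implied by the standing assumption $\alpha<1/(6n)$). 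The paper instead discards the factor $d_M(x,y)^2/\epsilon_N^2\le 1$ in the proof of \Cref{lem:orsc_rsc} and is then left with the unbounded ratio $\epsilon_N/d_M(x,y)$, whose empirical mean over the neighbors of $x_N$ must be shown to converge (\Cref{lem:ratio_sample_mean_all_n}); because that ratio has infinite variance when $n=2$ (\Cref{lem:infinite_variance}), this costs all of \Cref{sec:ratio} plus the truncation argument of \Cref{sec:n2_details}. Your version makes that machinery unnecessary while relying on the same key input. The remaining differences are cosmetic: Hoeffding versus Chebyshev for the bounded main-term summands, and your allowance for a nonuniform sampling density $\rho$, which is a harmless generalization of the paper's uniform-sampling assumption.
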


As part of establishing the main theorem, we prove two new results that clarify the convergence of Ollivier-Ricci curvature to Ricci curvature. In \Cref{prop:orc_convergence1}, we prove a non-asymptotic upper bound on the error between Ricci curvature and scaled Ollivier-Ricci curvature for all edges in a random geometric graph.
\begin{proposition}[\Cref{prop:orc_convergence1}]
For a constant $\theta > 0$ that is sufficiently small, there are constants $C_{1, \theta}, C_{2, \theta}$ such that with probability at least $1 - C_{1, \theta}N^{-\theta}$,
    \begin{equation*}
        |\epsilon_N^{-2} \orc{x}{y} - \Ricc{x}{v_{xy}}| \leq C_{2, \theta} \Big( \epsilon_N +  \frac{\log(N)^{p_n}}{d_M(x, y) N^{1/n - 2\alpha}}\Big)
    \end{equation*}
    for all edges $(x, y) \in E(G_N)$, where $v_{xy} = \log_x(y)/\norm{\log_x(y)}$.
\end{proposition}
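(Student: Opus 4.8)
The plan is to interpolate through the \emph{manifold} Ollivier--Ricci curvature. Write $m_x$ for the uniform (Riemannian volume) probability measure on $\BM{x}{\epsilon_N}$ and set $\kappa_M(x,y) := 1 - \W{M}(m_x, m_y)/d_M(x,y)$, so that $\kappa_M$ is built from the same construction as $\orc{x}{y}$ but with the \emph{continuous} ball measures replacing their empirical counterparts $\mu_x, \mu_y$. Then
\begin{equation*}
\bigl|\epsilon_N^{-2}\orc{x}{y} - \Ricc{x}{v_{xy}}\bigr| \le \epsilon_N^{-2}\bigl|\orc{x}{y} - \kappa_M(x,y)\bigr| + \bigl|\epsilon_N^{-2}\kappa_M(x,y) - \Ricc{x}{v_{xy}}\bigr|,
\end{equation*}
and I would bound the two terms on the right separately. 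The second term is purely deterministic: it is controlled by the asymptotic expansion of manifold ORC (Ollivier's computation), which gives $\kappa_M(x,y) = c_n\,\epsilon_N^2\,\Ricc{x}{v_{xy}} + O(\epsilon_N^3)$ with a constant depending on $M$ only through uniform curvature bounds and a lower bound on the injectivity radius --- both available because $M$ is compact --- so that, with the normalization of \Cref{def:orc}, the second term is $O(\epsilon_N)$ uniformly over all $x$ and over $0 < d_M(x,y) \le \epsilon_N$. The one point requiring care here is that this expansion must hold even when $d_M(x,y) \ll \epsilon_N$; this follows from a Taylor expansion of the transport cost in normal coordinates that is uniform in the ratio $d_M(x,y)/\epsilon_N \in (0,1]$.

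For the first term, the $1$-Lipschitz property of $\W{M}$ in each argument (i.e.\ the triangle inequality for $\W{M}$) gives, since the edge weight is $d_M(x,y)$,
\begin{equation*}
\bigl|\orc{x}{y} - \kappa_M(x,y)\bigr| \;\le\; \frac{\W{M}(\mu_x, m_x) + \W{M}(\mu_y, m_y)}{d_M(x,y)},
\end{equation*}
so it suffices to show that $\W{M}(\mu_x, m_x) \lesssim N^{-1/n}\log(N)^{p_n}$ \emph{simultaneously for all nodes} $x$, on an event of probability at least $1 - C_{1,\theta}N^{-\theta}$. The crucial structural fact is that, conditioned on the node set and on $D_x := \deg(x)$, the neighbors of a node $x$ are exactly $D_x$ i.i.d.\ draws from $m_x$ (here we use that the nodes are sampled uniformly with respect to volume, so conditioning the sampling law on $\BM{x}{\epsilon_N}$ returns $m_x$ with no error); hence $\mu_x$ is literally the empirical measure of $D_x$ samples from $m_x$. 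I would then combine three ingredients: (i) the standard rate for empirical measures of $D_x$ points on a set of diameter $O(\epsilon_N)$, $\E[\W{M}(\mu_x, m_x) \mid D_x] \lesssim \epsilon_N D_x^{-1/n}$ for $n \ge 3$, with the usual logarithmic corrections when $n \le 2$; (ii) a bounded-differences (McDiarmid) inequality --- replacing one neighbor moves $\W{M}(\mu_x, m_x)$ by at most $2\epsilon_N/D_x$ --- yielding a sub-Gaussian tail at scale $\epsilon_N\sqrt{\theta\log N / D_x}$; and (iii) a Chernoff bound together with a union bound over the $N$ possible centers showing $D_x \ge \tfrac12\,\E[D_x] \asymp N\epsilon_N^n$ for all $x$ at once, which holds with probability $1 - O(N^{-\theta})$ under the rate assumptions on $\epsilon_N$ (these assumptions force $N\epsilon_N^n \gtrsim \log N$, which also makes the tails in (ii) summable after a union bound over centers, at the cost of one more power of $\log N$). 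Putting (i)--(iii) together gives $\W{M}(\mu_x, m_x) \lesssim \epsilon_N (N\epsilon_N^n)^{-1/n}\log(N)^{p_n} = N^{-1/n}\log(N)^{p_n}$ uniformly in $x$, where $p_n$ is chosen large enough to absorb the low-dimensional log corrections and the union-bound factors; the condition $\alpha \le 1/n$ (part of the rate assumptions) is exactly what makes the empirical-rate term in (i) dominate the concentration term in (ii).

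Combining the pieces on the good event: for every edge $(x,y)$,
\begin{equation*}
\bigl|\epsilon_N^{-2}\orc{x}{y} - \Ricc{x}{v_{xy}}\bigr| \;\lesssim\; \frac{\epsilon_N^{-2}\,N^{-1/n}\log(N)^{p_n}}{d_M(x,y)} + \epsilon_N,
\end{equation*}
and substituting $\epsilon_N \asymp N^{-\alpha}$ turns $\epsilon_N^{-2}N^{-1/n}$ into $N^{2\alpha - 1/n} = N^{-(1/n-2\alpha)}$, which is exactly the claimed bound; the constant $C_{2,\theta}$ picks up its $\theta$-dependence from the $\sqrt{\theta\log N}$ factor in (ii)--(iii), and $\theta$ must be small enough for the union bounds over the $N$ centers to close. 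I expect the main obstacle to be steps (i)--(iii): making the empirical-Wasserstein rate uniform over all $N$ balls while simultaneously handling the random, location-dependent degrees and the logarithmic corrections in dimensions $n \le 2$, and tracking the powers of $\epsilon_N$ and $\log N$ precisely enough to land on the stated exponent. A lighter, secondary obstacle is verifying that the deterministic manifold-ORC expansion remains valid as $d_M(x,y)\to 0$.
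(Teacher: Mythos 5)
Your overall decomposition through the manifold ORC $\kappa_M(x,y)$, and the treatment of the deterministic term as $O(\epsilon_N)$ via Ollivier's expansion (\Cref{thm:KM_convergence}), match the paper's proof. The genuine gap is in your bound
\begin{equation*}
\bigl|\orc{x}{y} - \kappa_M(x,y)\bigr| \;\le\; \frac{\W{M}(\mu_x, m_x) + \W{M}(\mu_y, m_y)}{d_M(x,y)}\,,
\end{equation*}
which tacitly assumes that the graph ORC is computed with the geodesic metric as transport cost. It is not: by \Cref{def:orc}, $\orc{x}{y} = 1 - \W{G}(\mu_x^G,\mu_y^G)/d_M(x,y)$, where $\W{G}$ is taken with respect to the \emph{shortest-path} metric $d_G$ of $G_N$. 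For a neighbor $u$ of $x$ and a neighbor $v$ of $y$ that are not themselves adjacent, $d_G(u,v)$ is a priori only bounded below by $d_M(u,v)$ and can exceed it substantially, so your triangle inequality leaves the term $|\W{G}(\mu_x^G,\mu_y^G) - \W{M}(\mu_x^G,\mu_y^G)|/d_M(x,y)$ entirely uncontrolled. The paper spends \Cref{lem:modified_sampling}, \Cref{lem:modified_geodesic_approx}, and \Cref{lem:G_to_M} on exactly this: a covering/density argument (after Bernstein et al.) shows that with probability $1-N^{-\theta}$ the sample is $\lambda_N$-dense with $\lambda_N = N^{-\omega}$, $\omega > 3\alpha$, whence $d_M \le d_G \le (1+4\lambda_N/\epsilon_N)d_M$ on the relevant pairs and the metric discrepancy contributes only $O(\lambda_N) = o(\epsilon_N^3)$ to the Wasserstein error. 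Without this ingredient (or a substitute) your bound does not follow.

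For the remaining piece, $\max_{x}\W{M}(\mu_x^G,\mu_x^M)$, your route is genuinely different from the paper's and looks viable: you condition on the degree, observe that the neighbors are i.i.d.\ uniform on $\BM{x}{\epsilon_N}$, and combine the empirical $W_1$ rate on a ball with McDiarmid concentration and a union bound over the $N$ centers; your check that $\alpha \le 1/n$ makes the $\epsilon_N D_x^{-1/n} \asymp N^{-1/n}$ term dominate the concentration term is correct. The paper instead invokes a single global $\infty$-OT matching $T_N$ (\Cref{thm:trillos}) and localizes it to each ball in \Cref{lem:continuous_to_discrete}, paying for the mismatch between $T_N^{-1}(X_N \cap \BM{x}{\epsilon_N})$ and $\BM{x}{\epsilon_N}$ with an annulus-mass estimate. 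Your local argument is more elementary and avoids citing the $\infty$-transport result, at the cost of more bookkeeping over the random, location-dependent degrees; either works for this piece. The proposal would be complete once the graph-metric approximation step is added.
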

\noindent This improves on the analogous upper bounds in \cite{trillos_weber}, which hold only for edges that are sufficiently short and sufficiently long. Similarly, the results in \cite{ORC_convergence_hoorn} hold only for a given edge with prescribed distance $\delta_N$; moreover, they only proved convergence in mean.

Additionally, we prove in \Cref{prop:orc_convergence2} that the scaled Ollivier-Ricci curvature of a random edge converges in probability to Ricci curvature. Again, we do not restrict to edges of certain lengths, unlike in \cite{ORC_convergence_hoorn, trillos_weber}.  
\begin{proposition}\label{prop:orc_convergence2}
    Let $x_N$ be a node chosen uniformly at random from $G_N$ and let $y_N$ be an adjacent node chosen uniformly at random. As the number of nodes $N \to \infty$,
    \begin{equation*}
        \epsilon_N^{-2} \kappa_G(x_N, y_N) \xrightarrow[]{\p} \Ricc{x_N}{v_N} \,,
    \end{equation*}
    where $v_N = \log_{x_N}(y_N)/\norm{\log_{x_N}(y_N)}$.
\end{proposition}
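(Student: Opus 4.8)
The plan is to derive \Cref{prop:orc_convergence2} from the uniform, high-probability error bound of \Cref{prop:orc_convergence1}. That bound controls $|\epsilon_N^{-2}\orc{x}{y} - \Ricc{x}{v_{xy}}|$ for \emph{every} edge simultaneously, up to a term of size $O(\epsilon_N)$ (harmless) plus a term $\log(N)^{p_n}/(d_M(x,y)\,N^{1/n-2\alpha})$ that blows up only for very short edges. So the only thing to check beyond \Cref{prop:orc_convergence1} is that a uniformly random edge $(x_N,y_N)$ is, with probability tending to $1$, not too short: namely that $d_M(x_N,y_N) \ge r_N$ for a threshold $r_N$ that still makes the second error term vanish. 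Note first that since $0 < d_M(x_N,y_N) \le \epsilon_N \to 0$, the vector $v_N = \log_{x_N}(y_N)/\norm{\log_{x_N}(y_N)}$ is well-defined once $N$ is large.

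Fix tolerances $\eta,\delta > 0$. Choosing $\theta$ small enough that $C_{1,\theta}N^{-\theta} < \delta/2$ for $N$ large, \Cref{prop:orc_convergence1} supplies an event $A_N$ with $\p(A_N) \ge 1 - \delta/2$ on which the displayed bound holds for all edges, in particular for $(x_N,y_N)$. Next choose $r_N$ with $r_N/\epsilon_N \to 0$ but $r_N N^{1/n-2\alpha}/\log(N)^{p_n} \to \infty$; such a choice exists (e.g.\ the geometric mean $r_N = (\epsilon_N \log(N)^{p_n}/N^{1/n-2\alpha})^{1/2}$) precisely because the rate hypotheses of \Cref{sec:setting} force $\epsilon_N N^{1/n-2\alpha} \gg \log(N)^{p_n}$ — which is in any case necessary for the conclusion, since $d_M(x_N,y_N)\le\epsilon_N$ always. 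Set $B_N = \{d_M(x_N,y_N) \ge r_N\}$.

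The main step is to show $\p(B_N^c) \to 0$. On the event that $G_N$ is connected (which holds w.h.p.\ under the rate hypotheses, so that every node has a neighbor), conditioning on the point set $Z_N$ gives
\[
\p(d_M(x_N,y_N) < r_N \mid Z_N) = \frac{1}{N}\sum_{x \in Z_N} \frac{\#\{y \in Z_N : 0 < d_M(x,y) < r_N\}}{\deg(x)}\,.
\]
By the binomial concentration estimates already used in \Cref{sec:setting}, with high probability $\#\{y \in Z_N : d_M(x,y) < r_N\} \lesssim N\mu(\BM{x}{r_N}) + \log N$ and $\deg(x) \gtrsim N\mu(\BM{x}{\epsilon_N})$ uniformly over $x \in Z_N$ (the additive $\log N$ correction handles the regime where $N\mu(\BM{x}{r_N})$ is small); combined with the volume comparison $\mu(\BM{x}{r})\asymp r^n$ for small $r$, each summand is $O\big((r_N/\epsilon_N)^n + \log N/(N\epsilon_N^n)\big)$. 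Hence $\p(B_N^c) \to 0$, so $\p(B_N^c) < \delta/2$ for $N$ large.

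Finally, on $A_N \cap B_N$,
\[
|\epsilon_N^{-2}\orc{x_N}{y_N} - \Ricc{x_N}{v_N}| \le C_{2,\theta}\Big(\epsilon_N + \frac{\log(N)^{p_n}}{r_N N^{1/n-2\alpha}}\Big) \longrightarrow 0
\]
by the choice of $r_N$, so the left side is $< \eta$ once $N$ is large. Therefore $\p(|\epsilon_N^{-2}\orc{x_N}{y_N} - \Ricc{x_N}{v_N}| > \eta) \le \p(A_N^c) + \p(B_N^c) < \delta$ for $N$ large, which gives convergence in probability. The crux of the argument — the only place where work beyond \Cref{prop:orc_convergence1} is required — is this short-edge estimate: one must pick $r_N$ threading between $\epsilon_N$ and the deterministic error scale $\log(N)^{p_n}/N^{1/n-2\alpha}$, which is possible only thanks to the rate assumptions, and one must make the volume-ratio heuristic uniform over all nodes, including the delicate regime where geodesic balls of radius $r_N$ contain only a few sample points.
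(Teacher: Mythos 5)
Your proof is correct, but it takes a genuinely different route from the paper's. The paper does not go through \Cref{prop:orc_convergence1} at all: it splits the error as $|\epsilon_N^{-2}\orc{x_N}{y_N} - \epsilon_N^{-2}\kappa_M(x_N,y_N)| + |\epsilon_N^{-2}\kappa_M(x_N,y_N) - \Ricc{x_N}{v_N}|$, writes the first term as the product of $\epsilon_N^{-3}|\W{M}(\mu_{x_N}^M,\mu_{y_N}^M) - \W{G}(\mu_{x_N}^G,\mu_{y_N}^G)|$, which tends to $0$ in probability by \Cref{lem:WM_convergence_modified}, with the ratio $\epsilon_N/d_M(x_N,y_N)$, which by \Cref{lem:ratio_convergence_dist} converges in distribution to the Pareto-type variable $Z$ and is therefore tight, and concludes by Slutsky's theorem; the second term is handled by \Cref{thm:KM_convergence}. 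In other words, the paper never needs an explicit short-edge cutoff: tightness of the ratio $\epsilon_N/d_M(x_N,y_N)$ does exactly the work that your intermediate scale $r_N$ does. Your approach buys a more quantitative statement (an explicit rate on the event $A_N\cap B_N$) at the cost of the estimate $\p(B_N^c)\to 0$, which you make heavier than necessary: since, conditionally on $\deg(x_N)=k\ge 1$, the neighbors of $x_N$ are i.i.d.\ uniform on $\BM{x_N}{\epsilon_N}$ (exactly the exchangeability observation the paper makes at the start of \Cref{sec:ratio}), one gets $\p[d_M(x_N,y_N)<r_N\mid \deg(x_N)\ge 1]=\E[\vol(\BM{x_N}{r_N})/\vol(\BM{x_N}{\epsilon_N})]=\mathcal{O}((r_N/\epsilon_N)^n)$ directly, with no conditioning on the point set, no union bound over nodes, and no connectivity event (only $\p[\deg(x_N)\ge 1]\to 1$, i.e.\ \Cref{lem:degree_large}, is needed). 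Your threading of $r_N$ between $\epsilon_N$ and $\log(N)^{p_n}/N^{1/n-2\alpha}$ is valid because $\alpha<\fracc{6n}$ gives $\epsilon_N N^{1/n-2\alpha}=N^{1/n-3\alpha}\to\infty$ polynomially, so both routes ultimately rest on the same rate assumptions.
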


Numerical experiments on samples from manifolds of known scalar curvature validate our theoretical results and give a sense of the rate of convergence.

\subsection*{Acknowledgements}
We thank Dylan Altschuler and Milind Hegde for helpful discussions.

\section{Background and Assumptions}

The purpose of this section is to give a terse review of the key definitions used in the paper.

\subsection{Riemannian geometry and curvature} 

Let $M$ be a compact $\dimn$-dimensional Riemannian manifold, where $\dimn \geq 2$. The Riemannian metric allows us to define the norm $\norm{v}$ of any vector $v$ in the tangent space $T_xM$, for any point $x$ in $M$. This in turn allows us to define the \emph{geodesic distance} $d_M(x, y)$ between every pair $x, y$ of points on the manifold.

A closed geodesic ball of radius $\epsilon$, centered at $x \in M$, is defined as
\begin{equation*}
    \BM{x}{\epsilon} := \{y \in M \mid d_M(x, y) \leq \epsilon \}\,.
\end{equation*}
\noindent We denote its volume by $\vol(\BM{x}{\epsilon})$. We use $v_n$ to denote the volume of an $n$-dimensional unit Euclidean ball.

The Ricci curvature at $x$ is a bilinear map
\begin{equation*}
    \text{Ric}_x : T_xM \times T_xM \to \R\,.
\end{equation*}
Equivalently, we can associate $\text{Ric}_x$ with a linear transformation 
\[
\Riccc{x}{v}: T_xM \to T_xM \qquad
    \Riccc{x}{v} = \sum_i \Ric{x}{v}{e_i}e_i \,,
\]
where $\{e_i\}_{i=1}^\dimn$ is an orthonormal basis for the vector space $T_xM$.  Observe that the two definitions are related by the formula $\Riccc{x}{v}\cdot w = \Ric{x}{v}{w}$ for all $v, w \in T_xM$. 

The scalar curvature at $x$ is the trace of the Ricci curvature:
\begin{equation}\label{eq:trace}
    S(x) = \tr(\widehat{\text{Ric}}_x) = \sum_i \Riccc{x}{e_i} \cdot e_i \,.
\end{equation}

\subsection{Random geometric graphs}\label{sec:setting}

\begin{definition}A \emph{random geometric graph} (RGG) on a manifold $M$ is a weighted graph $G$ where the nodes are labeled by points in $M$ and an edge $(x, y)$ exists if and only $d_M(x, y)$ is less than a \emph{connectivity threshold} $\epsilon$. The weight of an edge $(x, y)$ is the geodesic distance $d_M(x, y)$.
\end{definition}

We denote the edge set of a graph $G$ by $E(G)$. When two nodes $x, y$ are adjacent, we write $x \sim y$.

Throughout this paper, let $G_N$ be an RGG with $N$ nodes and connection threshold $\epsilon_N$, whose vertices are sampled uniformly at random from $M$, where $\epsilon_N = N^{-\alpha}$ for some $\alpha \in (0, \fracc{6n})$. Let $\{x_N\}$ be any sequence of nodes $x_N \in G_N$.

The shortest-path distance $d_G(x, y)$ between two nodes $x, y \in G$ is the (weighted) length of a shortest path between $x$ and $y$. 

\subsection{Ollivier-Ricci curvature}
It is helpful to first define Ollivier-Ricci curvature for pairs of points on a manifold, and then to extend the definition to a discrete notion of curvature for edges in a graph. See \cite{ollivier_ricci} for more details.
\begin{definition}\label{def:orc_manifold}[Ollivier-Ricci curvature on a manifold] Let $M$ be a manifold, and let $\epsilon > 0$. For every point $x \in M$, let $\mu_x^M$ be the uniform probability measure on the ball $B^M(x, \epsilon)$. The \emph{manifold Ollivier-Ricci curvature} of a pair $x, y$ is \[\kappa_M(x, y) = 1 - \frac{W_1^M(\mu_x^M, \mu_y^M)}{d_M(x, y)}\,,\] where the $1$-Wasserstein distance $W_1^M$ is calculated with respect to geodesic distance $d_M$.
\end{definition}

Ollivier \cite{ollivier_ricci} proved the following theorem, which says that manifold Ollivier-Ricci curvature converges to Ricci curvature as $\epsilon \to 0$ and $d_M(x, y) \to 0$.

\begin{theorem}\label{thm:KM_convergence}
    \begin{equation*}
        \epsilon^{-2}\Big\vert \kappa_M(x, y) - \frac{\Ricc{x}{v_{xy}}}{2(n+2)} \Big| = \mathcal{O}\Big(\epsilon + d_M(x, y)\Big)\,,
    \end{equation*}
    where $v_{xy} = \log_x(y) \in T_xM$.
\end{theorem}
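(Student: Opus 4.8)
The plan is to estimate the Wasserstein distance $\W{M}(\mu_x^M,\mu_y^M)$ from both sides and show each bound equals $\delta(1 - \tfrac{\epsilon^2}{2(n+2)}\Ricc{x}{v_{xy}})$ up to a relative error $\mathcal O(\epsilon^3 + \delta\epsilon^2)$, where $\delta := d_M(x,y)$; since $\kappa_M(x,y) = 1 - \W{M}(\mu_x^M,\mu_y^M)/\delta$, dividing through then gives the theorem. Write $\gamma$ for the unit-speed minimizing geodesic from $x$ to $y$ and $v = v_{xy}$ for its unit initial velocity $\log_x(y)/\lVert\log_x(y)\rVert$. Since $M$ is compact, $\lVert R\rVert$, $\lVert\nabla R\rVert$, and the injectivity and focal radii are bounded uniformly, so every $\mathcal O(\cdot)$ below is uniform in $x,y$ once $\epsilon,\delta$ are small. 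Note also that $\mu_x^M = \mu_y^M$ when $x = y$, which is why every term in the estimates below will carry a factor of $\delta$. For the upper bound on $\W{M}$ I would use an explicit transport plan (the primal problem); for the lower bound, a single $1$-Lipschitz test function (Kantorovich--Rubinstein duality).

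For the \textbf{upper bound}, identify $\BM{x}{\epsilon}$ with $\{u\in T_xM:\lVert u\rVert\le\epsilon\}$ via $\exp_x$ and transport $\exp_x(u)$ to $\exp_y(Pu)$, where $P\colon T_xM\to T_yM$ is parallel transport along $\gamma$. A second-variation-of-arc-length computation — in which the endpoint curves $s\mapsto\exp_x(su)$ and $s\mapsto\exp_y(sPu)$ are geodesics, so the boundary terms vanish, and the tangential component of the Jacobi variation field is constant along $\gamma$ — gives $d_M(\exp_x u,\exp_y Pu) = \delta(1 - \tfrac12\langle R(u,v)v,u\rangle + \mathcal O(\epsilon^3 + \delta\epsilon^2))$, with only the component of $u$ orthogonal to $v$ entering the curvature term. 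Averaging over the uniform measure on the ball, using that the average of $u_iu_j$ over $\{\lVert u\rVert\le\epsilon\}$ is $\tfrac{\epsilon^2}{n+2}\delta_{ij}$, produces mean cost $\delta(1 - \tfrac{\epsilon^2}{2(n+2)}\Ricc{x}{v_{xy}} + \mathcal O(\epsilon^3 + \delta\epsilon^2))$; this is where the constant $n+2$ comes from, and it is why $\mu_x^M$ is uniform on the ball rather than the sphere (which would give $n$). One genuine subtlety: the pushforward of $\mu_x^M$ under $u\mapsto\exp_y(Pu)$ is not exactly $\mu_y^M$, because the volume density in normal coordinates is $\sqrt{\det g} = 1 - \tfrac16\,\mathrm{Ric}(u,u) + \mathcal O(\lVert u\rVert^3)$ and because $\vol\BM{x}{\epsilon}\neq\vol\BM{y}{\epsilon}$. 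I would argue that both mismatches are relative perturbations of size $\mathcal O(\delta\epsilon^2)$ carried by total mass $1$, hence correctable at additional transport cost $\mathcal O(\delta\epsilon^3)$, which is absorbed into the error. Altogether $\W{M}(\mu_x^M,\mu_y^M)\le \delta - \tfrac{\delta\epsilon^2}{2(n+2)}\Ricc{x}{v_{xy}} + \mathcal O(\delta\epsilon^3 + \delta^2\epsilon^2)$.

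For the \textbf{lower bound}, take $H$ to be the totally geodesic hypersurface through the midpoint $m = \gamma(\delta/2)$ orthogonal to $v$, and let $f$ be the signed distance to $H$, positive toward $y$; for $\epsilon,\delta$ small this is smooth and $1$-Lipschitz on a neighborhood of $\BM{x}{\epsilon}\cup\BM{y}{\epsilon}$ and extends to a globally $1$-Lipschitz function on $M$ (by a standard Lipschitz extension, e.g.\ inf-convolution with $d_M$), so it is admissible in the dual problem. The Hessian of $f$ at a point at normal distance $r$ from $H$ obeys the Riccati equation for the shape operator with initial value $0$, so $\mathrm{Hess}\,f = -r\,R(\cdot,v)v + \mathcal O(r^2)$ on $v^{\perp}$; hence $f(\exp_y w) = \tfrac\delta2 + \langle w,v\rangle - \tfrac\delta4\langle R(w,v)v,w\rangle + \mathcal O(\delta\epsilon^3 + \delta^2\epsilon^2)$ and symmetrically $f(\exp_x u) = -\tfrac\delta2 + \langle u,v\rangle + \tfrac\delta4\langle R(u,v)v,u\rangle + \mathcal O(\delta\epsilon^3 + \delta^2\epsilon^2)$ (the remainders carry a factor $\delta$ because the curvature corrections to the distance function scale with the normal distance $\delta/2$ to $H$). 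Averaging over $\mu_y^M$ and $\mu_x^M$, the linear terms vanish and $\int f\,d\mu_y^M - \int f\,d\mu_x^M = \delta - \tfrac{\delta\epsilon^2}{2(n+2)}\Ricc{x}{v_{xy}} + \mathcal O(\delta\epsilon^3 + \delta^2\epsilon^2)$, which is a lower bound for $\W{M}(\mu_x^M,\mu_y^M)$.

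Combining the two estimates pins $\W{M}(\mu_x^M,\mu_y^M)$ down to $\delta - \tfrac{\delta\epsilon^2}{2(n+2)}\Ricc{x}{v_{xy}} + \mathcal O(\delta\epsilon^3 + \delta^2\epsilon^2)$; dividing by $\delta$ and subtracting from $1$ gives $\kappa_M(x,y) = \tfrac{\epsilon^2}{2(n+2)}\Ricc{x}{v_{xy}} + \mathcal O(\epsilon^3 + \delta\epsilon^2)$, which is the claim. The hard part will not be any single step but making the upper and lower constants agree exactly: this requires carrying the curvature Taylor expansions — of $\exp_x$, of $\sqrt{\det g}$, and of the hypersurface distance function — to precisely the order at which all residuals are $\mathcal O(\delta\epsilon^3 + \delta^2\epsilon^2)$, so that after division by $\delta\epsilon^2$ only $\mathcal O(\epsilon + \delta)$ survives, and it requires checking that the volume-normalization correction in the coupling is genuinely lower order and does not perturb the leading coefficient $\tfrac{1}{2(n+2)}$. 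The remaining points — smoothness of $f$ and $H$ inside the injectivity and focal radii, and uniformity of the $\mathcal O$-constants — follow from compactness of $M$. This recovers Ollivier's original argument.
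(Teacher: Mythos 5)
First, a point of order: the paper does not prove this statement. \Cref{thm:KM_convergence} is quoted from Ollivier \cite{ollivier_ricci} (his Example 7), so there is no internal proof to compare against; your sketch is a blind reconstruction of Ollivier's original argument. Its architecture is the right one: a parallel-transport coupling for the upper bound on $\W{M}(\mu_x^M,\mu_y^M)$, a single $1$-Lipschitz dual function for the lower bound, and the moment identity $\E[u_iu_j]=\tfrac{\epsilon^2}{n+2}\delta_{ij}$ on the ball to convert $\langle R(u,v)v,u\rangle$ into $\tfrac{\epsilon^2}{n+2}\Ricc{x}{v_{xy}}$. The upper half is essentially complete: the second-variation estimate for $d_M(\exp_x u,\exp_y Pu)$ is Ollivier's Proposition 6, and your accounting of the two density mismatches (the $\sqrt{\det g}$ factor and the volume normalization) as mass $\mathcal{O}(\delta\epsilon^2)$ moved a distance $\mathcal{O}(\epsilon)$ is correct. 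One small but substantive point: for the constant $\tfrac{1}{2(n+2)}$ to be independent of $\delta$, the vector $v_{xy}$ must be the \emph{unit} vector $\log_x(y)/\norm{\log_x(y)}$, as you use; the statement's ``$v_{xy}=\log_x(y)$'' is an inconsistency in the paper itself (compare \Cref{prop:orc_convergence1}).

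The lower bound as written has two genuine gaps. First, a totally geodesic hypersurface through $m$ orthogonal to $v$ generically does not exist; you must instead take $H=\exp_m(v^{\perp}\cap B(0,\rho_0))$, which is geodesic only at $m$, so its second fundamental form at distance $\rho$ from $m$ is $\mathcal{O}(\rho)=\mathcal{O}(\epsilon)$ rather than $0$, and this enters the Riccati equation for $\mathrm{Hess}\,f$ off the axis. Second, and more seriously, the claim that the remainders in $f(\exp_y w)$ ``carry a factor $\delta$ because the curvature corrections scale with the normal distance $\delta/2$'' fails in the regime $\delta\ll\epsilon$: points of $\BM{y}{\epsilon}$ lie at normal distance up to $\delta/2+\epsilon$ from $H$, so the individual Taylor remainders are only $\mathcal{O}(\epsilon^3)$ (and still $\mathcal{O}(\epsilon^4)$ after averaging kills the odd cubic part), with no factor of $\delta$. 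An additive error of $\mathcal{O}(\epsilon^4)$ in $\int f\,d\mu_y^M-\int f\,d\mu_x^M$ becomes $\mathcal{O}(\epsilon^2/\delta)$ after dividing by $\delta\epsilon^2$, which is not $\mathcal{O}(\epsilon+\delta)$. The missing idea is a cancellation between the two balls: write the difference of integrals as $\int\bigl(f(\exp_y Pu)-f(\exp_x u)\bigr)\,d\mu_x^M$ plus a density correction, note that $\langle Pu,\nabla f(y)\rangle=\langle u,\nabla f(x)\rangle$ exactly, that $\mathrm{Hess}f|_{y}$ and $\mathrm{Hess}f|_{x}$ equal $\mp\tfrac{\delta}{2}R(\cdot,v)v+\mathcal{O}(\delta^2)$ with \emph{opposite} signs so their difference is $-\delta R(\cdot,v)v+\mathcal{O}(\delta^2)$, and that the higher Taylor coefficients of $f$ at $x$ and $y$ agree up to $\mathcal{O}(\delta)$ under parallel transport, so every surviving term in the difference is $\mathcal{O}(\delta)\cdot\mathcal{O}(\epsilon^3)$. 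Without this pairing step the lower bound does not close, and making it precise is where the real work in Ollivier's proof lies.
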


\begin{definition}[Ollivier-Ricci curvature on a graph]\label{def:orc}
    Let $G$ be a weighted graph with edge weights $w(x, y)$. For each node $x \in G$, let $\mu_x^G$ be the uniform probability measure on its $1$-hop neighborhood, excluding $x$ itself. The \emph{Ollivier-Ricci curvature} of an edge $(x, y)$ is \[\kappa_G(x, y) = 1 - \frac{W_1^G(\mu_x^G, \mu_y^G)}{w(x, y)},\] where the $1$-Wasserstein distance $W_1^G$ is calculated with respect to the shortest-path metric on the graph.
\end{definition}
\noindent If $G$ is unweighted, one can define $\kappa_G(x, y)$ by taking the weight of each edge to be $1$. In our paper, the graphs that we consider are random geometric graphs, so the edge weights are the geodesic distances $d_M(x, y)$.
\begin{remark}
    Ollivier-Ricci curvature is defined much more generally in \cite{ollivier_ricci}. Given any Markov kernel $\{\mu_x^G\}$, the Ollivier-Ricci curvature of edge $(x, y)$ is $\kappa_G(x, y) = 1 - \frac{W_1(\mu_x^G, \mu_y^G)}{w(x, y)}$. However, for the rest of this paper, all references to Ollivier-Ricci curvature $\kappa_G(x, y)$ are specifically the formulation in \Cref{def:orc}.
\end{remark}

Van der Hoorn et al. proved that Ollivier-Ricci curvature converges to Ricci curvature on random geometric graphs in an appropriate limit, for edges with prescribed distance $\delta \to 0$.
\begin{theorem}[\cite{ORC_convergence_hoorn}]\label{thm:hoorn_convergence}
    Let $G$ be a random geometric graph (with connection threshold $\epsilon_N$) whose vertices are drawn from a Poisson point process on $M$ with rate $N$. The edge weights are given by geodesic distance. Let $x$ be a fixed point in $M$ and let $v \in T_xM$ be a fixed unit vector. Let $y_N$ be a point that is geodesic-distance $\delta_N$ from $x$ such that $v = \log_x(y_N)/\norm{\log(y_N)}$. Add $x$ and $y_N$ to $G$. If $\epsilon_N \sim N^{-\alpha}$ and $\delta_N \sim N^{-\beta}$, where
    \begin{equation*}
        0 < \beta \leq \alpha \,, \qquad \alpha + 2 \beta < \fracc{n}\,,
    \end{equation*}
    then 
    \begin{equation*}
        \lim_{N \to \infty} \E\Big[ \Big\vert \frac{\orc{x}{y_N}}{\delta_N^2} - \frac{\Ricc{x}{v}}{2(n+2)} \Big\vert \Big] = 0\,,
    \end{equation*}
    where $\orc{x}{y_N}$ is calculated with $\delta_N$-radius graph balls.
\end{theorem}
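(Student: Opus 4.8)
The plan is to reduce the graph Ollivier-Ricci curvature $\orc{x}{y_N}$ to the manifold Ollivier-Ricci curvature $\kappa_M(x, y_N)$ computed with \emph{geodesic} balls of radius $\delta_N$, and then to invoke Ollivier's theorem (\Cref{thm:KM_convergence}). The key observation is that both the ball radius and the distance $d_M(x, y_N)$ equal $\delta_N$, so \Cref{thm:KM_convergence} (applied with $\epsilon = \delta_N$) gives
\begin{equation*}
    \delta_N^{-2}\Big\vert \kappa_M(x, y_N) - \frac{\Ricc{x}{v_{xy_N}}}{2(n+2)} \Big\vert = \mathcal{O}(\delta_N)\,, \qquad v_{xy_N} = \log_x(y_N) = \delta_N v\,.
\end{equation*}
Since $\text{Ric}_x$ is bilinear, $\Ricc{x}{v_{xy_N}} = \delta_N^2\,\Ricc{x}{v}$, and the two powers of $\delta_N$ cancel, leaving $\delta_N^{-2}\kappa_M(x, y_N) \to \Ricc{x}{v}/(2(n+2))$ \emph{deterministically}. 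It therefore suffices to prove that the random quantity $\delta_N^{-2}\orc{x}{y_N}$ has the same $L^1$ limit, which I would do by bounding $\E\big[\delta_N^{-2}|\orc{x}{y_N} - \kappa_M(x, y_N)|\big]$.

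Because the two curvatures share the denominator $d_M(x, y_N) = \delta_N$, the difference $|\orc{x}{y_N} - \kappa_M(x, y_N)|$ is $\delta_N^{-1}|\W{G}(\mu_x^G, \mu_y^G) - \W{M}(\mu_x^M, \mu_y^M)|$, and I would control this by separating two sources of error: the discrepancy between the empirical measures $\mu_x^G, \mu_y^G$ (uniform on the $\delta_N$-radius graph balls) and the continuum measures $\mu_x^M, \mu_y^M$ (uniform on $\BM{x}{\delta_N}, \BM{y_N}{\delta_N}$), and the discrepancy between the metrics $d_G$ and $d_M$ entering the two transport problems. For the measure error I would use Poisson concentration: the number of sample points in a geodesic $\delta_N$-ball is Poisson with mean $\asymp N\delta_N^n v_n = N^{1 - n\beta}$, and the hypothesis $\alpha + 2\beta < \fracc{n}$ forces $n\beta < \tfrac12$, so this mean diverges and the empirical measure approximates the continuum measure in Wasserstein distance with overwhelming probability. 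For the metric error I would show $d_G(z, z') = (1 + o(1))\,d_M(z, z')$ for all relevant $z, z'$: the bound $d_G \geq d_M$ is immediate from the triangle inequality and the geodesic edge weights, while the reverse follows by covering a minimizing geodesic with a chain of sample points spaced within $\epsilon_N$, whose existence is guaranteed (again with overwhelming probability) by $N\epsilon_N^n = N^{1 - n\alpha} \to \infty$; the rate conditions ensure the accumulated overshoot is $o(\delta_N)$.

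I would then compare $\W{G}$ and $\W{M}$ through Kantorovich--Rubinstein duality: a function that is $1$-Lipschitz for $d_M$ is $(1 + o(1))$-Lipschitz for $d_G$ and conversely, so the metric error enters multiplicatively as a factor $1 + o(1)$, while the measure error contributes an additive term bounded by the Wasserstein distance between empirical and continuum measures. Together these yield $\delta_N^{-2}|\orc{x}{y_N} - \kappa_M(x, y_N)| = o(1)$ on a ``good'' event $\mathcal{E}_N$ on which both approximations (and graph connectivity) hold.

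The main obstacle is upgrading this high-probability estimate to convergence in mean, because the normalization $\delta_N^{-2} \to \infty$ amplifies the contribution of the complementary event $\mathcal{E}_N^c$. On $\mathcal{E}_N^c$ I would use the crude deterministic bound $\orc{x}{y_N} \leq 1$ together with $\orc{x}{y_N} \geq 1 - (2\delta_N + d_G(x, y_N))/\delta_N$ (valid since $\mu_x^G, \mu_y^G$ are supported on the $\delta_N$-radius graph balls), so that $\delta_N^{-2}|\orc{x}{y_N}|\,\mathbb{1}_{\mathcal{E}_N^c}$ is bounded by a polynomial in $\delta_N^{-1}$ times a quantity controlled by $d_G(x, y_N)/\delta_N$, whose upper tail decays exponentially on the compact manifold $M$. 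The crux is therefore to make the defining estimates of $\mathcal{E}_N$ --- the Chernoff bounds for the Poisson ball counts and for the existence of connecting chains along geodesics --- exponentially small in a positive power of $N$, giving $\p(\mathcal{E}_N^c) = o(\delta_N^{c})$ for every fixed $c$ and hence dominating the polynomial blow-up. Assembling the good-event bound of $o(1)$ with the vanishing bad-event contribution, and combining with the deterministic convergence of $\delta_N^{-2}\kappa_M(x, y_N)$, then completes the proof.
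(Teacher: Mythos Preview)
The paper does not prove this theorem: it is quoted as a background result from \cite{ORC_convergence_hoorn}, so there is no in-paper argument to compare your sketch against. Your outline is nonetheless the natural strategy and closely parallels what the paper does for its \emph{own} convergence statements (Lemma~\ref{lem:WM_convergence_modified} and Propositions~\ref{prop:orc_convergence1}--\ref{prop:orc_convergence2}): split $|\W{G}(\mu_x^G,\mu_{y_N}^G)-\W{M}(\mu_x^M,\mu_{y_N}^M)|$ into a metric error, handled by geodesic-chain (Bernstein-type) bounds on $|d_G-d_M|$, and a measure error, handled by empirical-versus-continuum Wasserstein estimates, and then invoke \Cref{thm:KM_convergence} for the residual $|\kappa_M-\epsilon^2\text{Ric}/2(n+2)|$.

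If you carry the argument through, one point needs more care than your sketch suggests. On the bad event $\mathcal{E}_N^c$ the graph may be disconnected, in which case $d_G(x,y_N)=\infty$ and hence $\orc{x}{y_N}=-\infty$; your crude pointwise bound on $\delta_N^{-2}|\orc{x}{y_N}|\,\mathbf{1}_{\mathcal{E}_N^c}$ is then not finite, and the estimate ``$\p(\mathcal{E}_N^c)=o(\delta_N^{c})$ for every $c$'' does not by itself control its $L^1$ contribution. You need an additional device---either a convention that keeps $\kappa_G$ finite on disconnected realizations (for instance, truncating $d_G$ at a multiple of $\mathrm{diam}(M)$), or a separate decomposition of $\mathcal{E}_N^c$ that isolates the disconnection event and bounds $\kappa_G$ deterministically on the remainder.
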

\noindent They also proved convergence if one does not have access to the pairwise geodesic distances, but only in the case $\dim(M) = 2$.
\begin{theorem}[\cite{ORC_convergence_hoorn}]
    Suppose that $\dim(M) = 2$ and the weight of every edge is $\epsilon_N$. Then 
    \begin{equation*}
        \lim_{N \to \infty} \E\Big[ \Big\vert \frac{\orc{x}{y_N}}{\delta_N^2} - \frac{\Ricc{x}{v}}{2(n+2)} \Big\vert \Big] = 0\,,
    \end{equation*}
    assuming that
    \begin{equation*}
        0 < \beta < \fracc{9}\,, \qquad 2 \beta < \alpha < \frac{1 - 3\beta}{2}\,.
    \end{equation*}
\end{theorem}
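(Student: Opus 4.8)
The plan is to reduce to the manifold Ollivier--Ricci curvature $\kappa_M(x, y_N)$ computed with $\delta_N$-radius balls, invoke \Cref{thm:KM_convergence}, and then isolate the single new difficulty created by the constant edge weights: the graph metric $d_G$ is now $\epsilon_N$ times a hop count rather than an accumulation of geodesic distances, so it approximates $d_M$ only up to the hop resolution $\epsilon_N$. Writing the target as
\begin{equation*}
    \Big| \frac{\orc{x}{y_N}}{\delta_N^2} - \frac{\Ricc{x}{v}}{2(n+2)} \Big| \leq \frac{|\orc{x}{y_N} - \kappa_M(x, y_N)|}{\delta_N^2} + \Big| \frac{\kappa_M(x, y_N)}{\delta_N^2} - \frac{\Ricc{x}{v}}{2(n+2)} \Big|\,,
\end{equation*}
the second term is deterministically $\mathcal{O}(\delta_N)$ by \Cref{thm:KM_convergence} (taking ball radius and pair distance both equal to $\delta_N$), and hence vanishes. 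Since both curvatures are normalized by the geodesic distance $\delta_N$ of the pair, the first term equals $|\W{G}(\mu_x^G, \mu_y^G) - \W{M}(\mu_x^M, \mu_y^M)|/\delta_N^3$, so everything reduces to estimating this Wasserstein discrepancy.

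First I would split the discrepancy into a metric-distortion part and an empirical-measure part by interpolating through $\W{M}(\mu_x^G, \mu_y^G)$, which uses the empirical neighborhood measures but the geodesic metric. For the metric-distortion part I would prove the key geometric estimate that, uniformly over pairs $z, w$ in the $\delta_N$-neighborhoods, $d_G(z, w) = d_M(z, w)(1 + \mathcal{O}(\epsilon_N))$ with super-polynomially small failure probability; since all transport occurs over geodesic distances of order $\delta_N$, this multiplicative bound contributes $\mathcal{O}(\epsilon_N \delta_N)$ to the discrepancy, hence $\mathcal{O}(\epsilon_N/\delta_N^2) = \mathcal{O}(N^{-\alpha + 2\beta})$ after dividing by $\delta_N^3$, which vanishes precisely because $\alpha > 2\beta$. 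For the empirical-measure part I would use the rate of convergence of empirical measures in the $1$-Wasserstein distance: in dimension $n = 2$ the near-parametric rate $\sqrt{\log m / m}$ applies, where $m \sim N\delta_N^2$ is the number of sampled points in a $\delta_N$-ball, giving a contribution of order $\sqrt{\log N / N}/\delta_N^3 = \mathcal{O}(\sqrt{\log N}\, N^{-1/2 + 3\beta})$, which vanishes because $\beta < \fracc{9} < \fracc{6}$. This is where $\dim(M) = 2$ is essential: for $n \geq 3$ the empirical Wasserstein rate degrades to $m^{-1/n}$ and is too slow to survive the $\delta_N^{-3}$ normalization. Together these estimates yield convergence in probability with an explicit (faster-than-polynomial) concentration rate, and the upper constraint $\alpha < (1 - 3\beta)/2$ is what guarantees $N\epsilon_N^2 \to \infty$, so the $\epsilon_N$-balls are dense and locally connected enough for both estimates to hold after a union bound.

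To upgrade convergence in probability to the asserted convergence in expectation, I would supply a domination argument. The universal bound $\W{G} \geq 0$ gives $\orc{x}{y_N} \leq 1$ deterministically, so the upper tail of $\orc{x}{y_N}/\delta_N^2 \leq \delta_N^{-2} = N^{2\beta}$ is controlled by multiplying this polynomial bound against the concentration rate of the previous step. For the lower tail, the local-connectivity estimate shows that, off a super-polynomially small event, the two $\delta_N$-neighborhoods lie in a common subgraph of diameter $\mathcal{O}(\delta_N)$, whence $\W{G} = \mathcal{O}(\delta_N)$ and therefore $|\orc{x}{y_N}/\delta_N^2| \leq C N^{2\beta}$ on this good event; the complementary event contributes negligibly because its super-polynomially small probability dominates the $N^{2\beta}$ worst-case bound. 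Writing $\E[|\cdots|] = \E[|\cdots|\,\mathbf{1}_{\text{good}}] + \E[|\cdots|\,\mathbf{1}_{\text{bad}}]$ and combining the uniform $\mathcal{O}(N^{2\beta})$ bound on the good event with the explicit concentration rate, both pieces tend to $0$, which is exactly the claimed $L^1$ convergence.

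I expect the main obstacle to be the multiplicative metric-approximation estimate $d_G(z, w) = d_M(z, w)(1 + \mathcal{O}(\epsilon_N))$ with super-polynomially small failure probability: this is a first-passage-percolation-type statement on the random geometric graph requiring one to build, uniformly over all relevant pairs, a graph path shadowing each geodesic with only an $\mathcal{O}(\epsilon_N)$ relative length overhead, together with a matching lower bound. A secondary difficulty is the rate bookkeeping in the $L^1$ step, where the polynomial factor $N^{2\beta}$ from the worst-case curvature bound must be genuinely beaten by the concentration rate; this is what forces the narrow parameter window $2\beta < \alpha < (1 - 3\beta)/2$ and $\beta < \fracc{9}$.
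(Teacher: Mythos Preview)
This theorem is not proved in the present paper; it is quoted from \cite{ORC_convergence_hoorn} as background, so there is no in-paper argument to compare your proposal against. Your high-level strategy---reduce to $\kappa_M$ via \Cref{thm:KM_convergence}, split the residual Wasserstein discrepancy into a metric-distortion piece and an empirical-measure piece, then upgrade to $L^1$ by domination---is the right shape and parallels what this paper does in \Cref{sec:modified_orc_convergence} for the geodesic-weighted case.

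Two steps in your sketch would not go through as written. First, the multiplicative estimate $d_G(z,w)=d_M(z,w)(1+\mathcal{O}(\epsilon_N))$ is not what the constant-weight metric delivers: with every edge counted as $\epsilon_N$, a covering argument in the style of \Cref{lem:bernstein_sampling} and \Cref{thm:bernstein_geodesic_approx} gives relative error of order $\lambda_N/\epsilon_N$ (where $\lambda_N$ is the sample covering radius) together with an additive $\mathcal{O}(\epsilon_N)$ last-hop overshoot, and neither of these is $\mathcal{O}(\epsilon_N)$ multiplicatively. Your derivation of the lower constraint $\alpha>2\beta$ from the term $\mathcal{O}(N^{-\alpha+2\beta})$ therefore rests on an estimate that is not available in this form; recovering the precise window $2\beta<\alpha<(1-3\beta)/2$ and $\beta<\tfrac{1}{9}$ requires tracking both error sources carefully. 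Second, your explanation that $n=2$ is forced by the empirical-Wasserstein rate is misplaced: for $n\geq 3$ the rate $m^{-1/n}$ with $m\sim N\delta_N^n$ yields a contribution $N^{-1/n+3\beta}$ after the $\delta_N^{-3}$ normalization, which still vanishes once $\beta<1/(3n)$. The genuine dimensional obstruction in \cite{ORC_convergence_hoorn} lies in the constant-weight metric-comparison step, not in the empirical-measure step.
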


More recently, Trillos and Weber \cite{trillos_weber} strengthened the results above by proving non-asymptotic upper bounds on the error, for all edges that are sufficiently short and sufficiently long. They assume that the nodes in the graph are points on a manifold embedded in Euclidean space, and that one has access to the Euclidean coordinates of the points (i.e., the nodes are a point cloud).
\begin{theorem}[\cite{trillos_weber}]
    Assume that one has access to the pairwise geodesic distances $d_M(x, y)$. Define $p_n = 3/4$ if $n = 2$, and $p_n = 1/m$ if $n \geq 3$. Then one can construct a graph metric $d_G(x, y)$ such that for every $s > 1$, there is a constant $C$ such that with probability at least $1 - CN^{-s}$,
    \begin{equation*}
        \Big\vert \frac{\kappa_G(x, y)}{\epsilon^2} - \frac{\Ricc{x}{v}}{2(n+2)}\Big\vert \leq C \Big( \epsilon + \frac{\log(N)^{p_n}}{N^{1/n}\epsilon^3}\Big)\,,
    \end{equation*}
    for all $x, y$ satisfying $2c_0 \epsilon \leq d_M(x, y) \leq \fracc{2} c_1 \epsilon$, where $c_0$ and $c_1$ are fixed but sufficiently small and large, respectively.\footnote{The constants in the bound depend on $c_0$ and $c_1$, so one cannot simply take $c_0 \to 0$ or $c_1 \to \infty$ in order to consider shorter or longer edges.} The Ollivier-Ricci curvature $\kappa_G(x, y)$ is calculated with respect to the graph distance $d_G$.\footnote{They calculate $\kappa_G(x, y) = 1 - \frac{W_1^G(\mu_x^G, \mu_y^G)}{d_G(x, y)}$, where $W_1^G$ is calculated with respect to their constructed graph metric $d_G$, which is not necessarily the shortest-path metric.}
\end{theorem}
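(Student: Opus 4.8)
The plan is to compare the graph curvature $\orc{x}{y}$ with its continuous counterpart $\kappa_M(x,y)$, the manifold Ollivier--Ricci curvature built from balls of radius $\epsilon_N$, and then to feed $\kappa_M$ into Ollivier's estimate. Writing
\begin{equation*}
\epsilon_N^{-2}\orc{x}{y} = \epsilon_N^{-2}\kappa_M(x,y) + \epsilon_N^{-2}\big(\orc{x}{y} - \kappa_M(x,y)\big),
\end{equation*}
the first term already matches the Ricci expression on the right-hand side of the proposition up to the error $\mathcal{O}(\epsilon_N + d_M(x,y))$ supplied by \Cref{thm:KM_convergence} (the normalizing constant is inherited from that theorem); since $d_M(x,y)\le\epsilon_N$ on every edge, this is $\mathcal{O}(\epsilon_N)$ and feeds the first summand of the claimed bound. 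The whole problem thus reduces to a uniform, high-probability estimate of the discretization error $\epsilon_N^{-2}\,|\orc{x}{y}-\kappa_M(x,y)|$.

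Because $\orc{x}{y}$ and $\kappa_M(x,y)$ share the same denominator $w(x,y)=d_M(x,y)$, I have $|\orc{x}{y}-\kappa_M(x,y)| = d_M(x,y)^{-1}\,\big|\W{G}(\mu_x^G,\mu_y^G) - \W{M}(\mu_x^M,\mu_y^M)\big|$, where $\mu_x^G,\mu_y^G$ are the empirical uniform measures on the graph neighborhoods and $\mu_x^M,\mu_y^M$ are uniform on $\BM{x}{\epsilon_N},\BM{y}{\epsilon_N}$. I would bound the numerator by inserting intermediate $1$-Wasserstein distances and applying the triangle inequality in two steps: (a) replacing the shortest-path metric $d_G$ by the geodesic metric $d_M$ in the transport cost, which since $W_1$ is $1$-Lipschitz in the ground metric costs at most $\sup|d_G-d_M|$ over the relevant supports; and (b) replacing the empirical measures by the continuous ones, which costs $\W{M}(\mu_x^G,\mu_x^M)+\W{M}(\mu_y^G,\mu_y^M)$. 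For (b), rescaling $\BM{x}{\epsilon_N}$ to the unit ball turns each term into the $W_1$ error between the empirical measure of $\Theta(N\epsilon_N^n)$ uniform samples and the uniform measure, whose rate $(\log N)^{p_n}(N\epsilon_N^n)^{-1/n}$, scaled back by $\epsilon_N$, is $\lesssim (\log N)^{p_n}N^{-1/n}$; dividing by $d_M(x,y)$ and $\epsilon_N^2$ reproduces exactly the second summand $(\log N)^{p_n}/(d_M(x,y)N^{1/n-2\alpha})$.

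The hard part is step (a): controlling the shortest-path distortion uniformly over short pairs. Unlike Trillos and Weber, who sidestep this by building a bespoke graph metric, here $d_G$ is the honest shortest-path distance, so I must show that with high probability $d_M(a,b)\le d_G(a,b)\le d_M(a,b)+\mathcal{O}\!\big(d_M(a,b)\,N^{-2/n}\epsilon_N^{-2}\big)$ simultaneously for all node pairs $a,b$ with $d_M(a,b)\lesssim\epsilon_N$. The lower bound is the triangle inequality for the geodesic edge weights. For the upper bound I would construct a near-geodesic chain: place reference points along the minimizing geodesic spaced by $\sim\epsilon_N$ and snap each to its nearest node, which (the typical nearest-neighbor spacing being $N^{-1/n}\ll\epsilon_N$) lies within $\mathcal{O}(N^{-1/n})$ of the geodesic; a second-order expansion of geodesic length under perpendicular perturbation bounds each hop's excess by $\mathcal{O}(N^{-2/n}/\epsilon_N)$ and the total by $\mathcal{O}(d_M(a,b)N^{-2/n}/\epsilon_N^2)$. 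Tracing this through the normalization contributes $N^{4\alpha-2/n}$ to the discretization error, which is $\lesssim\epsilon_N$ precisely because $\alpha<\tfrac{1}{6n}$, and so is absorbed into the first summand.

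Finally I would make everything uniform and quantify the failure probability. Chernoff/Bernstein concentration with a union bound shows that with probability $1-\mathcal{O}(N^{-\theta})$ every ball $\BM{x}{\epsilon_N}$ (over all nodes $x$) contains $\Theta(N\epsilon_N^n)$ points, so the empirical measures are well defined and the rate in (b) holds. Concentration of the $1$-Wasserstein empirical error and of the geodesic-path excess, each failing with probability at most $N^{-s}$, is then combined over the $\mathcal{O}(N^2)$ pairs, choosing $s$ large relative to $\theta$ so that $N^2\cdot N^{-s}\le C_{1,\theta}N^{-\theta}$. Collecting the two terms yields $C_{2,\theta}\big(\epsilon_N+(\log N)^{p_n}/(d_M(x,y)N^{1/n-2\alpha})\big)$ for all edges simultaneously, with the main technical obstacle being the uniform control of the honest shortest-path distortion in step (a).
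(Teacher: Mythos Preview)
This statement is not proved in the paper: it is quoted in the background section as a result of Trillos and Weber, and no proof is supplied. There is consequently no ``paper's own proof'' to compare against.

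Moreover, your proposal does not target this statement. The Trillos--Weber theorem concerns a \emph{constructed} graph metric (not the shortest-path metric), restricts to pairs with $2c_0\epsilon \le d_M(x,y)\le \tfrac12 c_1\epsilon$, and yields the bound $C\big(\epsilon + \log(N)^{p_n}/(N^{1/n}\epsilon^3)\big)$. Your sketch instead uses the honest shortest-path distance, aims at all edges, and concludes with $C_{2,\theta}\big(\epsilon_N+\log(N)^{p_n}/(d_M(x,y)N^{1/n-2\alpha})\big)$; you even contrast your setup explicitly with that of Trillos and Weber. What you have outlined is a proof of the paper's own \Cref{prop:orc_convergence1}, not of the cited theorem. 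If that was the intended target, your overall decomposition (compare $\kappa_G$ to $\kappa_M$ via \Cref{thm:KM_convergence}, then split the Wasserstein discrepancy into a metric-distortion piece and an empirical-measure piece) matches the paper's, though the paper resolves the two pieces with off-the-shelf results---the Bernstein--de Silva--Langford--Tenenbaum graph-distance bound (\Cref{lem:bernstein_sampling}, \Cref{thm:bernstein_geodesic_approx}) for step~(a) and the $\infty$-transport map of \Cref{thm:trillos} for step~(b)---rather than your hand-built near-geodesic chain and rescaled-ball empirical $W_1$ rate. Your second-order distortion estimate in~(a) is more ambitious than what the paper uses (which is first-order in the covering radius) and would require a careful argument that the tangential displacements telescope; the paper's cruder bound already suffices under the standing hypothesis $\alpha<\tfrac{1}{6n}$.
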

\noindent 

\section{The mean Ricci curvature at a node converges to scalar curvature}\label{sec:mean_ricci_curvature}
We know from Riemannian geometry that scalar curvature $S(x)$ is the trace of Ricci curvature at node $x$. (See \Cref{eq:trace}.) This leads us to define ``discrete scalar Ricci curvature''
\begin{equation*}
    \rsc(x):= \fracc{\deg(x)} \sum_{y \text{ adj. to } x} \Riccc{x}{\log_x y} \cdot \log_x y\,,
\end{equation*}
for every vertex $x$. Unfortunately, the edges incident to a node $x$ do not necessarily form an orthonormal basis for the tangent space $T_x(M)$, so $\rsc(x)$ is not equal to the trace of the Ricci curvature at $x$. 

In this section, we use a modification of ``Hutchinson's trick'' \cite{hutchinson_stochastic_1990}---a method for estimating the trace of a linear operator---to prove that $\rsc(x)$ converges to $\tr(\text{Ric}_x) = S(x)$ as $N \to \infty$ (\Cref{lem:rsc_convergence}). Hutchinson's trick says that if $A$ is a symmetric matrix and $u_1, \ldots, u_n$ are i.i.d. random variables with mean $0$ and variance $\sigma^2$, then $\E[\vec{u}^TA\vec{u}] = \sigma^2 \tr(A)$. In this section, the linear transformation $\widehat{\text{Ric}}_x$ plays the role of $A$, and $u_i = \log_x(z) \cdot e_i$, where $\{e_i\}$ is a fixed choice of orthonormal basis and $z$ is a point sampled uniformly at random from the ball $B^M(x, \epsilon)$. However, our random variables $u_i$ do not satisfy Hutchinson's assumptions. Below, we modify Hutchinson's trick so that it works in our setting.
\begin{lemma}\label{lem:hutchinson}
    Suppose that $A$ is a symmetric matrix and $u \in \R^n$ is a random vector (not necessarily sampled uniformly at random) in $B^E(0, \epsilon)$  such that
    \begin{align*}
        \E[u_i^2] &= \sigma^2 + \mathcal{O}(\epsilon^4) \qquad \text{for all } i\,,\\
        \E[u_iu_j] &= \mathcal{O}(\epsilon^4) \qquad \text{for all } i \neq j
    \end{align*}
    for some $\sigma \in \R$ as $\epsilon \to 0$. Then, as $\epsilon \to 0$,
    \begin{align*}
        \E[u^T Au] &= \sigma^2 \tr(A) + \mathcal{O}(\epsilon^4) \cdot \norm{A}_{1, 1}\,,\\
        \var(u^TAu) &= \mathcal{O}(\epsilon^4) \cdot \norm{A}_{1, 1}^2\,.
    \end{align*}
\end{lemma}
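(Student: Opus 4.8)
The plan is to expand the quadratic form entrywise and estimate it term by term, treating the mean and the variance separately. Write $u^{T}Au = \sum_{i,j=1}^{n} A_{ij}u_{i}u_{j}$, where $A_{ij}$ are the entries of the symmetric matrix $A$ and $\norm{A}_{1,1} = \sum_{i,j}\abs{A_{ij}}$ is the entrywise $\ell^{1}$ norm. For the mean, linearity of expectation together with the moment hypotheses gives
\begin{align*}
\E[u^{T}Au]
&= \sum_{i=1}^{n} A_{ii}\,\E[u_{i}^{2}] + \sum_{i \neq j} A_{ij}\,\E[u_{i}u_{j}] \\
&= \sigma^{2}\sum_{i=1}^{n} A_{ii} + \sum_{i,j} A_{ij}\cdot\mathcal{O}(\epsilon^{4}) \\
&= \sigma^{2}\tr(A) + \mathcal{O}(\epsilon^{4})\cdot\norm{A}_{1,1}\,,
\end{align*}
where in the last step the collected error is bounded by $\sum_{i,j}\abs{A_{ij}}$ times the implicit constant of the $\mathcal{O}(\epsilon^{4})$ terms; this is legitimate because $n$ is fixed, so only finitely many such terms appear and their constants can be taken uniform.

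For the variance, the key observation I would rely on is that the moment hypotheses only control moments of $u$ up to order two and therefore say nothing directly about $\E[(u^{T}Au)^{2}]$; instead one should use the support assumption $u \in B^E(0,\epsilon)$, which yields the pointwise bound $\abs{u_{i}} \leq \norm{u} \leq \epsilon$ for each coordinate. Then, deterministically,
\begin{equation*}
\abs{u^{T}Au} \leq \sum_{i,j}\abs{A_{ij}}\,\abs{u_{i}}\,\abs{u_{j}} \leq \epsilon^{2}\norm{A}_{1,1}\,,
\end{equation*}
so $\var(u^{T}Au) \leq \E[(u^{T}Au)^{2}] \leq \epsilon^{4}\norm{A}_{1,1}^{2} = \mathcal{O}(\epsilon^{4})\cdot\norm{A}_{1,1}^{2}$, which is exactly the claimed estimate.

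This is a short computation rather than a deep argument, so there is no serious obstacle; the only points that require a moment of care are the bookkeeping of the $\mathcal{O}(\epsilon^{4})$ error constants in the mean estimate (uniformity over the $n^{2}$ index pairs, immediate since $n$ is fixed) and the recognition that the variance bound has to be extracted from boundedness of the support of $u$ rather than from its prescribed second moments.
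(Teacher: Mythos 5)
Your proposal is correct and follows essentially the same route as the paper: the mean is computed by expanding the quadratic form and applying the moment hypotheses termwise, and the variance is bounded via $\var(u^TAu)\le \E[(u^TAu)^2]$ together with the support bound $\norm{u}\le\epsilon$, which is exactly the paper's argument (the paper bounds each fourth moment $\E[u_iu_ju_ru_s]$ by $\epsilon^4$ inside the fourfold sum rather than bounding $|u^TAu|$ pointwise first, but these are the same estimate). Your explicit remark that the variance must come from the boundedness of the support rather than the prescribed second moments is a correct and worthwhile observation.
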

\begin{proof}
    We begin by calculating $\E[u^T Au]$.
    \begin{align*}
        \E[u^TAu] &= \sum_{ij} A_{ij} \E[u_iu_j] \\
        &= \sum_{ij} A_{ij} \Big(\delta_{ij} \sigma^2 + \mathcal{O}(\epsilon^4)\Big)\\
        &= \sigma^2 \tr(A) + \mathcal{O}(\epsilon^4) \cdot \norm{A}_{1, 1}\,.
    \end{align*}
    Now we consider $\var(u^TAu)$:
    \begin{align*}
        \var(u^TAu) &= \E[(u^TAu)^2] - \E[u^TAu]^2 \\
        &\leq \E[(u^TAu)^2] \\
        &= \sum_{i, j, r, s} A_{ij}A_{rs} \E[u_iu_ju_ru_s]\,.
    \end{align*}
    We have $\E[u_iu_j u_ru_s] \leq \epsilon^4$ because $\norm{u} \leq \epsilon$. Therefore,
    \begin{equation*}
        \var(u^TAu) \leq \mathcal{O}(\epsilon^4) \norm{A}_{1, 1}^2\,.
    \end{equation*}
\end{proof}

Let $x \in M$ and let $z$ be a point sampled uniformly at random from $B^M(x, \epsilon)$. Assume that $\epsilon$ is smaller than the injectivity radius of $M$. Let $u = \log_x z \in T_xM$, and let $c_M(x, \epsilon) = v_n\epsilon^n/\vol(\BM{x}{\epsilon})$. We fix an orthonormal basis $\{e_i\}$ for $T_xM$ and define $u_i := (\log_x z) \cdot e_i$.

\begin{lemma}\label{lem:u_var}
     For sufficiently small $\epsilon$,
    \begin{equation*}
        \E[u_i^2] = c_M(x, \epsilon) \frac{\epsilon^2}{n+2} + \mathcal{O}(\epsilon^4)\,,
    \end{equation*}
    where $c_M(x, \epsilon) \to 1$ as $\epsilon \to 0$ (uniformly in $x$ as long as $M$ is compact).
\end{lemma}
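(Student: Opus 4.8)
The plan is to transport the computation to the tangent space via the exponential map and then invoke the standard asymptotic expansion of the Riemannian volume element in geodesic normal coordinates. Since $\epsilon$ is smaller than the injectivity radius of $M$ (which is bounded below by a positive constant because $M$ is compact), $\log_x$ is a diffeomorphism from $\BM{x}{\epsilon}$ onto the Euclidean ball $B^E(0,\epsilon) \subset T_xM$, and $d_M(x, \exp_x v) = \norm{v}$ there. Writing $z = \exp_x(v)$ and using the coordinates $v^i := v \cdot e_i$ associated to the orthonormal basis $\{e_i\}$, the change of variables gives
\[
\E[u_i^2] = \frac{1}{\vol(\BM{x}{\epsilon})} \int_{B^E(0,\epsilon)} (v^i)^2 \, J(v) \, dv\,,
\]
where $J(v) = \sqrt{\det g(v)}$ is the volume density expressed in normal coordinates.

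The main input is the classical expansion $J(v) = 1 - \tfrac{1}{6}\,\text{Ric}_{jk}(x)\, v^j v^k + \mathcal{O}(\norm{v}^3)$, valid on $B^E(0,\epsilon)$, with the error term uniform in $x$ because compactness of $M$ bounds the curvature tensor and its covariant derivatives. Substituting this and integrating term by term: the leading term is $\int_{B^E(0,\epsilon)} (v^i)^2 \, dv = \tfrac{1}{n}\int_{B^E(0,\epsilon)} \norm{v}^2 \, dv = v_n \epsilon^{n+2}/(n+2)$ by spherical symmetry; the quadratic correction is a degree-four polynomial integral of size $\mathcal{O}(\epsilon^{n+4})$; and the $\mathcal{O}(\norm{v}^3)$ remainder contributes $\mathcal{O}(\epsilon^{n+5})$. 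Hence $\int_{B^E(0,\epsilon)} (v^i)^2 \, J(v)\, dv = v_n \epsilon^{n+2}/(n+2) + \mathcal{O}(\epsilon^{n+4})$.

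It remains to divide by $\vol(\BM{x}{\epsilon})$. The same density expansion gives $\vol(\BM{x}{\epsilon}) = \int_{B^E(0,\epsilon)} J(v)\, dv = v_n \epsilon^n(1 + \mathcal{O}(\epsilon^2))$, so for $\epsilon$ small enough (uniformly in $x$) we have $\vol(\BM{x}{\epsilon}) \geq \tfrac{1}{2} v_n \epsilon^n$; consequently $\mathcal{O}(\epsilon^{n+4})/\vol(\BM{x}{\epsilon}) = \mathcal{O}(\epsilon^4)$, and the main term equals $\big(v_n\epsilon^n/\vol(\BM{x}{\epsilon})\big)\cdot \epsilon^2/(n+2) = c_M(x,\epsilon)\,\epsilon^2/(n+2)$ by definition of $c_M(x,\epsilon)$. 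The volume expansion also shows $c_M(x,\epsilon) = 1 + \mathcal{O}(\epsilon^2) \to 1$ as $\epsilon \to 0$, uniformly in $x$.

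The only genuine subtlety is the bookkeeping of uniformity in $x$ for all the $\mathcal{O}$-terms; this is handled once and for all by compactness of $M$, which yields $x$-independent bounds on the injectivity radius from below and on the curvature and its derivatives, so that the normal-coordinate expansion of $J$ holds with uniform constants on a fixed-size ball. Everything else reduces to elementary polynomial integrals over a Euclidean ball.
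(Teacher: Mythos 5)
Your proof is correct and follows essentially the same route as the paper: change of variables via the exponential map, the normal-coordinate expansion of $\sqrt{\det g}$, the spherical-symmetry computation of $\int_{B^E(0,\epsilon)}(v^i)^2\,dv = v_n\epsilon^{n+2}/(n+2)$, and division by $\vol(\BM{x}{\epsilon})$ to identify $c_M(x,\epsilon)$. Your explicit attention to uniformity in $x$ via compactness is a welcome addition that the paper leaves implicit.
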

\begin{proof}
The expectation is
\begin{equation*}
    \E[u_i^2] = \fracc{\vol(\BM{x}{\epsilon}} \int_{\BM{x}{\epsilon}} ((\log_x z)\cdot e_i)^2 \,dV(z)\,.
\end{equation*}
    For sufficiently small $\epsilon$, the exponential map $\exp_x: \BM{x}{\epsilon} \to B^E(0, \epsilon) \subseteq T_xM$ and its inverse $\log_x :B^E(0, \epsilon) \to \BM{x}{\epsilon}$ are diffeomorphisms. In the geodesic normal coordinates $u_1, \ldots, u_n$, we have
    \begin{equation*}
        \int_{\BM{x}{\epsilon}} ((\log_x z)\cdot e_i)^2 dV(z) = \int_{B^E(0, \epsilon)} u_i^2 \sqrt{\det(g)} \,du_1\cdots du_n\,.
    \end{equation*}
    It is standard that
    \begin{equation}\label{eq:det_g}
        \sqrt{\det(g)} = 1 - \fracc{6}\Ric{x}{e_k}{e_\ell} u_k u_\ell + \mathcal{O}(\norm{u}^3)\,.
    \end{equation}
    Therefore
    \begin{align*}
        \E[u_i^2] &= \fracc{\vol(\BM{x}{\epsilon}} \Bigg[ \int_{B^E(0, \epsilon)} u_i^2 du_1 \cdots du_n - \fracc{6}\Ric{x}{e_k}{e_{\ell}} \int_{B^E(0, \epsilon)} u_i^2 u_k u_\ell \,du_1 \cdots u_n \\
        &\qquad\qquad\qquad\qquad\qquad + \int_{B^E(0, \epsilon)} u_i^2 \mathcal{O}(\norm{u}^3) \,du_1 \cdots u_n \Bigg]\,.
    \end{align*}
    Because $\norm{u} \leq \epsilon$, we have $|u_i^2 u_k u_\ell| = \mathcal{O}(\epsilon^4)$ and $|u_i^2 \mathcal{O}(\norm{u}^3)| = \mathcal{O}(\epsilon^5)$, so
    \begin{align*}
        \int_{B^E(0, \epsilon)} u_i^2 u_k u_\ell \,du_1 \cdots u_n &= \mathcal{O}(\epsilon^4) \cdot \vol(B^E(0, \epsilon)) = \mathcal{O}(\epsilon^{n+4})\\
        \int_{B^E(0, \epsilon)} u_i^2 \mathcal{O}(\norm{u}^3) \,du_1 \cdots u_n &= \mathcal{O}(\epsilon^5) \cdot \vol(B^E(0, \epsilon)) = \mathcal{O}(\epsilon^{n+5})\,.
    \end{align*}
    Therefore
    \begin{equation}\label{eq:E_usquared}
        \E[u_i^2] = \fracc{\vol(\BM{x}{\epsilon}} \Bigg(\int_{B^E(0, \epsilon)} u_i^2 \,du_1 \cdots du_n\Bigg) + \mathcal{O}(\epsilon^4)
    \end{equation}
    because $\vol(\BM{x}{\epsilon}) = v_n \epsilon^n + \mathcal{O}(\epsilon^{n+2})$. 
    Let $\mathcal{I}_i := \int_{B^E(0, \epsilon)} u_i^2 \,du_1 \cdots du_n$. The integral $\mathcal{I}_i$ does not depend on $i$, so
    \begin{equation*}
        n \mathcal{I}_i = \sum_{j=1}^n \mathcal{I}_j = \int_{B^E(0, \epsilon)} \norm{u}^2 du_1 \cdots du_n = \int_0^\epsilon r^2 \cdot S_r \,dr\,,
    \end{equation*}
    where $S_r$ is the surface area of an $(n-1)$-dimensional sphere of radius $r$. Therefore,
    \begin{equation*}
        \mathcal{I}_i = \fracc{n} \int_0^\epsilon n v_n r^{n+1}\,dr = \frac{v_n}{n+2}\epsilon^{n+2}\,.
    \end{equation*}
    By \Cref{eq:E_usquared},
    \begin{equation*}
        \E[u_i^2] = c_M(x, \epsilon) \cdot \frac{\epsilon^2}{n+2} + \mathcal{O}(\epsilon^4)\,.
    \end{equation*}
    The last statement of the lemma follows because 
\begin{equation*}
    c_M(x, \epsilon) = \frac{v_n\epsilon^n}{\vol(\BM{x}{\epsilon})} = \frac{v_n\epsilon^n}{v_n \epsilon^n + \mathcal{O}(\epsilon^{n+2})}\,.
\end{equation*}
\end{proof}

\begin{lemma}\label{lem:u_cov}
    For $i \neq j$,
    \begin{equation*}
        \E[u_iu_j] = \mathcal{O}(\epsilon^4)\,.
    \end{equation*}
\end{lemma}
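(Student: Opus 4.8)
The plan is to mirror the proof of \Cref{lem:u_var} almost verbatim, the one new ingredient being a parity observation that kills the leading term. As before, let $z$ be uniform on $\BM{x}{\epsilon}$ with $\epsilon$ below the injectivity radius, so that $\log_x\colon \BM{x}{\epsilon} \to B^E(0,\epsilon) \subseteq T_xM$ is a diffeomorphism and the geodesic normal coordinates of $z$ are exactly $u_1,\dots,u_n$ with $u_k = (\log_x z)\cdot e_k$. First I would write
\begin{equation*}
    \E[u_iu_j] = \fracc{\vol(\BM{x}{\epsilon})}\int_{\BM{x}{\epsilon}} ((\log_x z)\cdot e_i)((\log_x z)\cdot e_j)\,dV(z) = \fracc{\vol(\BM{x}{\epsilon})}\int_{B^E(0,\epsilon)} u_iu_j\sqrt{\det(g)}\,du_1\cdots du_n\,.
\end{equation*}

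Next I would substitute the standard expansion \Cref{eq:det_g}, namely $\sqrt{\det(g)} = 1 - \fracc{6}\Ric{x}{e_k}{e_\ell}u_ku_\ell + \mathcal{O}(\norm{u}^3)$, and split into three integrals. The crucial point is that the leading integral vanishes for $i \neq j$: the domain $B^E(0,\epsilon)$ is invariant under the reflection $u_i \mapsto -u_i$ (which fixes every other coordinate), and this reflection sends $u_iu_j$ to $-u_iu_j$, so $\int_{B^E(0,\epsilon)} u_iu_j\,du_1\cdots du_n = 0$. The second integral, $-\fracc{6}\Ric{x}{e_k}{e_\ell}\int_{B^E(0,\epsilon)} u_iu_ju_ku_\ell\,du_1\cdots du_n$, is controlled using $\norm{u}\leq\epsilon$: each term is at most $\epsilon^4\,\vol(B^E(0,\epsilon)) = \mathcal{O}(\epsilon^{n+4})$, and (since $M$ is compact) the coefficients $\Ric{x}{e_k}{e_\ell}$ are uniformly bounded. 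The remaining error integral is $\mathcal{O}(\epsilon^5)\,\vol(B^E(0,\epsilon)) = \mathcal{O}(\epsilon^{n+5})$. Dividing by $\vol(\BM{x}{\epsilon}) = v_n\epsilon^n + \mathcal{O}(\epsilon^{n+2})$ then yields $\E[u_iu_j] = 0 + \mathcal{O}(\epsilon^4)$, as claimed.

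I do not expect any real obstacle here; it is the same normal-coordinate computation as \Cref{lem:u_var} with one parity cancellation, and the only thing to be careful about is that every $\mathcal{O}(\cdot)$ is uniform in $x$, which again is exactly the compactness of $M$. If a sharper statement were ever needed, one could note that for $i\neq j$ the only index choices contributing to the $\mathcal{O}(\epsilon^{n+4})$ term are $\{k,\ell\}=\{i,j\}$, so in fact $\E[u_iu_j]$ equals a fixed constant times $\Ric{x}{e_i}{e_j}\,\epsilon^4$ plus $\mathcal{O}(\epsilon^5)$; but the weaker bound $\mathcal{O}(\epsilon^4)$ is all that is used afterwards, in particular to verify the hypotheses of \Cref{lem:hutchinson}.
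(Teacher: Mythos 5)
Your proposal is correct and follows essentially the same route as the paper's proof: pass to geodesic normal coordinates, kill the leading integral $\int_{B^E(0,\epsilon)} u_iu_j\,du_1\cdots du_n$ by the odd-symmetry argument, bound the $\sqrt{\det(g)}-1$ contribution by $\mathcal{O}(\epsilon^{n+4})$ using $\norm{u}\leq\epsilon$, and divide by $\vol(\BM{x}{\epsilon}) = v_n\epsilon^n + \mathcal{O}(\epsilon^{n+2})$. The only cosmetic difference is that the paper absorbs the entire $\sqrt{\det(g)}-1$ correction into a single $\mathcal{O}(\epsilon^{n+4})$ term rather than splitting off the Ricci term separately.
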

\begin{proof}
    The expectation is
    \begin{equation*}
        \E[u_iu_j] = \fracc{\vol(\BM{x}{\epsilon}} \int_{\BM{x}{\epsilon}} ((\log_x z) \cdot e_i) ((\log_x z) \cdot e_j) \, dV(z) \,.
    \end{equation*}
    In the geodesic normal coordinates $u_1, \ldots, u_n$, we have
    \begin{align*}
        \int_{\BM{x}{\epsilon}} ((\log_x z) \cdot e_i) ((\log_x z) \cdot e_j)\, dV(z) &= \int_{B^E(0, \epsilon)} u_i u_j \sqrt{\det(g)} \,du_1 \cdots du_n \\
        &= \int_{B^E(0, \epsilon)} u_i u_j \,du_1 \cdots du_n + \mathcal{O}(\epsilon^{n+4})
    \end{align*}
    because $\norm{u} \leq \epsilon$. Because $u_i$ and $u_j$ are odd functions,
    \begin{equation*}
        \int_{B^E(0, \epsilon)} u_i u_j \, du_1 \cdots du_n = 0\,,
    \end{equation*}
    so
    \begin{equation*}
        \E[u_i u_j] = \frac{\mathcal{O}(\epsilon^{n+4})}{\vol(\BM{x}{\epsilon})} = \mathcal{O}(\epsilon^4)\,.
    \end{equation*}
\end{proof}

\begin{lemma}\label{lem:degree_large}
    Let $\{x_N\}$ be a sequence of random nodes, with $x_N \in G_N$. Let $k_N = \deg(x_N)$. Then \[\lim_{N \to 0} \p[k_N \leq k] = 0\] for any $k \in \mathbb{Z}$.
\end{lemma}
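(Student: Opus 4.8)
The plan is to show that in fact the \emph{minimum} degree over all nodes of $G_N$ tends to infinity; this immediately implies the claim regardless of how the random nodes $x_N$ are selected. Condition on the (random) positions of the $N$ sampled points and single out one of them, say $x$. Since the remaining $N-1$ points are i.i.d.\ uniform on $M$, each lies within geodesic distance $\epsilon_N$ of $x$ independently with probability $p_N(x) := \vol(\BM{x}{\epsilon_N})/\vol(M)$; hence $\deg(x)$ is distributed as $\mathrm{Binomial}(N-1,\,p_N(x))$.

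Next I would obtain a lower bound on $p_N(x)$ that is uniform in $x$. By compactness of $M$, the expansion $\vol(\BM{x}{\epsilon}) = v_n \epsilon^n + \mathcal{O}(\epsilon^{n+2})$ holds uniformly in $x$ (this is already invoked in the proof of \Cref{lem:u_var}), so there is a constant $c>0$ with $\vol(\BM{x}{\epsilon_N}) \ge c\,\epsilon_N^n$ for every $x$ and all sufficiently large $N$. Writing $q_N := c\,\epsilon_N^n/\vol(M) = c'\, N^{-n\alpha}$, a $\mathrm{Binomial}(N-1,\,p_N(x))$ variable stochastically dominates a $\mathrm{Binomial}(N-1,\,q_N)$ variable, and the latter has mean $m_N := (N-1)q_N$. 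Because $\alpha \in (0,\tfrac{1}{6n})$ we have $1 - n\alpha > \tfrac{5}{6}$, so $m_N \sim c'\, N^{1-n\alpha} \to \infty$.

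The conclusion then follows from a Chernoff bound together with a union bound. Let $Z \sim \mathrm{Binomial}(N-1,\,q_N)$. For $N$ large enough that $m_N > 2k$, the standard multiplicative lower-tail estimate gives $\p[Z \le k] \le \p[Z \le m_N/2] \le e^{-m_N/8}$. By the stochastic domination above, $\p[\deg(x) \le k] \le e^{-m_N/8}$ for every node $x$, so a union bound over the $N$ nodes yields
\begin{equation*}
  \p[k_N \le k] \;\le\; \p\big[\,\min_{x \in G_N}\deg(x) \le k\,\big] \;\le\; N\, e^{-m_N/8}\,.
\end{equation*}
Since $m_N$ grows like a fixed positive power of $N$, the right-hand side tends to $0$, which proves the lemma.

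I do not expect a serious obstacle here. The only points that need a little care are the uniformity in $x$ of the volume estimate (handled by compactness of $M$, exactly as in \Cref{lem:u_var}) and making the bound insensitive to how $x_N$ is chosen, which the union bound over all nodes accomplishes. Note that the hypothesis $\alpha < \tfrac{1}{6n}$ is far more than needed: all the argument uses is $\alpha < \tfrac1n$, which is precisely what forces the expected degree to diverge.
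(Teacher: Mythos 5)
Your proof is correct and takes essentially the same route as the paper's: the degree of a node is $\mathrm{Binomial}(N-1,p_N)$ with mean of order $N^{1-n\alpha}\to\infty$ (using the uniform volume expansion), and a Chernoff lower-tail bound finishes the argument. The additional union bound over all $N$ nodes, which controls the \emph{minimum} degree, is a harmless strengthening that makes the conclusion insensitive to how the random node $x_N$ is selected, whereas the paper's proof implicitly treats the degree of the selected node as itself binomially distributed.
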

\begin{proof}
    The degree $k_N$ has a binomial distribution $B(N-1, p_N)$, where \[p_N = \vol(\BM{x}{\epsilon_N})/\vol(M).\] Using the fact that $\vol(\BM{x}{\epsilon_N}) = v_n \epsilon_N^n + \mathcal{O}(\epsilon_N^{n+2})$ as $N \to \infty$, we see that
    \begin{equation*}
        \E[k_N] = (N-1)p_N \geq \frac{v_N}{2\,\vol(M)} \cdot N \epsilon_N^n = N^{1 - n\alpha}
    \end{equation*}
    for sufficiently large $N$. Therefore, $\E[k_N] \to \infty$ because we have stipulated that $1 - n\alpha > 0$.
    
    By applying a Chernoff bound, one obtains $\p[k_N \leq (1 - \delta)\E[k_N]] \leq \exp(-\delta^2\E[k_N]/2)$ for any $0 \leq \delta < 1$ (see e.g., \cite{tsun_chernoff_2020}). Substituting $\delta = 1 - k/\E[k_N]$ yields
    \begin{equation*}
        \p[k_N \leq k] \leq \exp\Big(-\E[k_N]\Big(1 - k/\E[k_N]\Big)^2/2\Big)
    \end{equation*}
    when $N$ is sufficiently large such that $k < \E[k_N]$. Therefore, $\lim_{N \to \infty}\p[k_N \leq k] = 0$ because $\E[k_N] \to \infty$.
\end{proof}

\begin{lemma}\label{lem:rsc_convergence}
As $N \to \infty$, we have
    \begin{equation*}
        \E\Big[ \frac{n+2}{\epsilon_N^2} \rsc(x_N) \Big]  - S(x_N) \to 0
    \end{equation*}
and $\frac{n+2}{\epsilon_N^2}\rsc(x_N) - S(x_N) \to 0$ in probability.
\end{lemma}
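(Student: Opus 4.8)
The plan is to recognize $\rsc(x_N)$ as an empirical average, over the neighbors of $x_N$, of the quadratic form $u \mapsto u^{T} A u$ where $A$ is the matrix of $\widehat{\text{Ric}}_{x_N}$ in a fixed orthonormal frame, and then to apply \Cref{lem:hutchinson} together with the moment estimates of \Cref{lem:u_var,lem:u_cov}. Write $k_N := \deg(x_N)$, take $N$ large enough that $\epsilon_N < \inj$ (possible since $\epsilon_N = N^{-\alpha} \to 0$ and $M$ is compact), and adopt the convention $\rsc(x_N) := 0$ on the event $\{k_N = 0\}$, whose probability tends to $0$ by \Cref{lem:degree_large}. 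I would then condition on $x_N = x$ and on $k_N = k \ge 1$. Conditionally on these events the neighbors $y_1, \dots, y_k$ of $x$ are i.i.d.\ uniform on $\BM{x}{\epsilon_N}$, so the vectors $u^{(i)} := \log_x(y_i) \in T_xM$ are i.i.d.\ copies of the random vector $u$ studied in \Cref{lem:u_var,lem:u_cov}. Fixing an orthonormal basis $\{e_i\}$ of $T_xM$ and setting $A_{ij} := \Ric{x}{e_i}{e_j}$, the matrix $A$ is symmetric, $\tr(A) = S(x)$ by \Cref{eq:trace}, and $\Riccc{x}{\log_x y_i} \cdot \log_x y_i = (u^{(i)})^{T} A\, u^{(i)}$, so that $\rsc(x_N) = \frac{1}{k}\sum_{i=1}^{k} (u^{(i)})^{T} A\, u^{(i)}$.

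Next I would substitute the moment bounds into \Cref{lem:hutchinson}. By \Cref{lem:u_var,lem:u_cov} we have $\E[u_i^2] = \sigma^2 + \mathcal{O}(\epsilon_N^4)$ with $\sigma^2 = c_M(x,\epsilon_N)\,\epsilon_N^2/(n+2)$ and $\E[u_iu_j] = \mathcal{O}(\epsilon_N^4)$ for $i \ne j$, all error terms being uniform in $x$ by compactness; moreover $\norm{A}_{1,1}$ and $|S(x)|$ are uniformly bounded since the Ricci tensor is continuous on the compact manifold $M$. \Cref{lem:hutchinson} then yields, uniformly in $x$,
\[
\E\big[u^{T} A u\big] = c_M(x,\epsilon_N)\,\frac{\epsilon_N^2}{n+2}\,S(x) + \mathcal{O}(\epsilon_N^4), \qquad \var\big(u^{T} A u\big) = \mathcal{O}(\epsilon_N^4).
\]
Averaging over the $k$ i.i.d.\ neighbors, the conditional mean of $\frac{n+2}{\epsilon_N^2}\rsc(x_N)$ given $\{x_N = x,\ k_N = k\}$ is $c_M(x,\epsilon_N)\,S(x) + \mathcal{O}(\epsilon_N^2)$, independent of $k$, while its conditional variance is $\frac{(n+2)^2}{\epsilon_N^4}\cdot\frac{1}{k}\var(u^{T} A u) = \mathcal{O}(1/k)$. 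Since $c_M(x,\epsilon_N) \to 1$ uniformly in $x$ and $S$ is bounded, the conditional mean tends to $S(x)$ uniformly; combined with $\p[k_N = 0] \to 0$, integrating over $k_N$ gives the first assertion $\E[\tfrac{n+2}{\epsilon_N^2}\rsc(x_N) \mid x_N] - S(x_N) \to 0$.

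For the convergence in probability, set $Z_N := \frac{n+2}{\epsilon_N^2}\rsc(x_N) - S(x_N)$, fix $\eta > 0$ and an integer $K$, and write
\[
\p[|Z_N| > \eta] \le \p[k_N \le K] + \E\Big[ \p\big[|Z_N| > \eta \;\big|\; x_N, k_N\big]\,\mathbf{1}\{k_N > K\} \Big].
\]
On $\{k_N > K\}$ the conditional mean of $Z_N$ is within $\eta/2$ of $0$ for $N$ large (uniformly in $x$), so Chebyshev's inequality bounds the inner conditional probability by $\frac{4}{\eta^2}\var(Z_N \mid x_N, k_N) = \mathcal{O}\big(1/(\eta^2 K)\big)$. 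Since $\p[k_N \le K] \to 0$ by \Cref{lem:degree_large}, we obtain $\limsup_N \p[|Z_N| > \eta] = \mathcal{O}(1/(\eta^2 K))$, and letting $K \to \infty$ finishes the proof.

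I expect the main obstacle to be the bookkeeping of the nested conditioning on $x_N$ and $k_N$, together with the observation that the $\epsilon_N^{-2}$ rescaling exactly cancels the $\mathcal{O}(\epsilon_N^4)$ error coming out of \Cref{lem:hutchinson}: the conditional variance of the rescaled quantity is of order $1/k_N$ rather than $o(1)$, so the argument genuinely needs $k_N \to \infty$. This is precisely what \Cref{lem:degree_large} supplies (using the standing assumption $1 - n\alpha > 0$), and it is why both $N \to \infty$ and $\epsilon_N = N^{-\alpha} \to 0$ must be invoked rather than treated as a formality.
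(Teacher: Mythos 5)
Your proposal is correct and follows essentially the same route as the paper's proof: identify $\rsc(x_N)$ conditioned on $\deg(x_N)=k$ as a sample mean of i.i.d.\ quadratic forms $u^T A u$, feed the moment estimates of \Cref{lem:u_var,lem:u_cov} into \Cref{lem:hutchinson} to get the conditional mean $c_M(x,\epsilon_N)S(x)+\mathcal{O}(\epsilon_N^2)$ and conditional variance $\mathcal{O}(1/k)$ after rescaling, and then combine Chebyshev with \Cref{lem:degree_large} by splitting on a degree threshold. The only differences are cosmetic (your split $\p[k_N\le K]+\mathcal{O}(1/(\eta^2K))$ versus the paper's sum over $k$ truncated at $k_*$), so there is nothing further to add.
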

\begin{proof}
If $\deg(x_N) = 0$, then $\rsc(x_N) = 0$. If $\deg(x_N) = k > 0$, the random variable $\rsc(x_N)$ is a sample mean of $\{\Riccc{x_N}{\log_{x_N} z} \cdot \log_{x_N} z_i\}_{i=1}^k$, where ${z_i}$ are points sampled uniformly at random from $\BM{x}{\epsilon_N}$, so
\begin{align}
    \E\Big[ \frac{n+2}{\epsilon_N^2} \rsc(x_N) \mid \deg(x_N) > 0 \Big] &= \frac{n+2}{\epsilon_N^2} \cdot \E\Big[ \Riccc{x_N}{\log_{x_N} z_1} \cdot \log_{x_N} z_1\Big]\,,\label{eq:rsc_exp_1} \\
    \var\Big[\frac{n+2}{\epsilon_N^2} \rsc(x_N) \mid \deg(x_N) = k \Big] &= \frac{(n+2)^2}{\epsilon_N^4} \cdot \frac{\var(\Riccc{x_N}{\log_{x_N} z_1} \cdot \log_{x_N} z_1)}{k} \label{eq:rsc_var_1}\,,
\end{align}
By \Cref{lem:hutchinson}, \Cref{lem:u_var}, and \Cref{lem:u_cov},
\begin{align}
     \E\Big[\Riccc{x_N}{\log_{x_N} z_1} \cdot \log_{x_N} z_1\Big] &= c_M(x_N, \epsilon_N) \cdot \frac{\epsilon_N^2}{n+2} \cdot S(x_N) + \mathcal{O}(\epsilon_N^2)\,, \label{eq:rsc_exp_2}\\
     \var\Big(\Riccc{x}{\log_{x_N} z_1} \cdot \log_{x_N} z_1\Big) &= \mathcal{O}(\epsilon_N^4) \label{eq:rsc_var_2}\,,
\end{align}
where $c_M(x_N, \epsilon_N) \to 1$ as $\epsilon_N \to 0$. Together, \Cref{eq:rsc_exp_1} and \Cref{eq:rsc_exp_2} imply that
\begin{align*}
    \Big\vert \E\Big[\frac{n+2}{\epsilon_N^2} \rsc(x_N)\Big] - S(x_N)\Big\vert &= \Big\vert\E\Big[\Riccc{x}{\log_{x_N} z_1} \cdot \log_{x_N} z_1\Big] \cdot \p[\deg(x_N) > 0] - S(x_N)\Big\vert \\
    &\leq S(x_N)\big\vert c_M(x_N, \epsilon_N)\,\p[\deg(x_N) > 0] -1 \big\vert
\end{align*}
as $\epsilon_N \to 0$, where $c_M(x_n, \epsilon_N) \to 1$ as $N \to \infty$. By \Cref{lem:degree_large}, $\p[\deg(x_N) > 0] \to 1$ as $N \to \infty$, so the right-hand side converges to $0$, which proves convergence of the expectation as $N \to \infty$. 

Let $\delta > 0$ and $\xi > 0$. We wish to show that
\begin{equation*}
    \p\Big[ \Big\vert \frac{n+2}{\epsilon_N^2} \rsc(x_N) - S(x_N) \Big\vert > \delta \Big] < \xi
\end{equation*}
for sufficiently large $N$. In the next steps, we condition on the value of $\deg(x_N)$. \Cref{eq:rsc_exp_1} and \Cref{eq:rsc_exp_2} imply that
\begin{equation*}
    \Big\vert \E\Big[ \frac{n+2}{\epsilon_N^2} \cdot \rsc(x_N) \mid \deg(x_N) = k \Big] - S(x_N) \Big\vert < \delta/2 
\end{equation*}
for sufficiently large $N$ and all $k > 0$. Therefore,
\begin{align*}
    &\p\Bigg[ \Big| \frac{n+2}{\epsilon_N^2} \cdot \rsc(x_N) - S(x_N) \Big| > \delta \mid \deg(x_N) = k \Bigg] \\
    &\qquad\qquad \leq \p\Bigg[ \Big| \frac{n+2}{\epsilon_N^2} \cdot \rsc(x_N) - \E\Big[ \frac{n+2}{\epsilon_N^2} \rsc(x_N) \mid \deg(x_N) = k \Big] \Big| > \delta/2 \mid \deg(x_N) = k \Bigg]\,.
\end{align*}
By apply Chebyschev's inequality to the right-hand side, we obtain
\begin{align*}
     \p\Bigg[ \Big| \frac{n+2}{\epsilon_N^2} \cdot \rsc(x_N) - S(x_N) \Big| > \delta \mid \deg(x_N) = k \Bigg] \leq \frac{4}{\delta^2} \cdot \var\Big(\frac{n+2}{\epsilon_N^2} \rsc(x_N) \mid \deg(x_N) = k \Big)\,.
\end{align*}
By \Cref{eq:rsc_var_1} and \Cref{eq:rsc_var_2}, there is a constant $C$ such that if $k > 0$, then
\begin{equation}\label{eq:var}
    \var\Big( \frac{n+2}{\epsilon_N^2} \rsc(x_N) \mid \deg(x_N) = k \Big) \leq \frac{C}{k}\,.
\end{equation}
Therefore,
\begin{align*}
      &\p\Bigg[ \Big| \frac{n+2}{\epsilon_N^2} \cdot \rsc(x_N) - S(x_N) \Big| > \delta \Bigg]\\
      &\qquad\qquad\qquad \leq \sum_{k = 0}^{\infty} \p\Bigg[ \Big| \frac{n+2}{\epsilon_N^2} \cdot \rsc(x_N) - S(x_N) \Big| > \delta \mid \deg(x_N) = k \Bigg] \cdot \p[\deg(x_N) = k]\\
      &\qquad\qquad\qquad \leq \p[\deg(x_N) = 0] + \sum_{k=1}^{\infty} \frac{4C}{\delta^2k} \cdot \p[\deg(x_N) = k]\,.
\end{align*}
Choose $k_*$ such that $C\delta^{-2} k_*^{-1} < \xi/2$. Then
\begin{align*}
     \p\Bigg[ \Big| \frac{n+2}{\epsilon_N^2} \cdot \rsc(x_N) - S(x_N) \Big| > \delta \Bigg] &\leq \p[\deg(x_N) = 0] + \frac{C}{\delta^2} \cdot \p[k_N \leq k_*] + \sum_{k > k_*} \frac{C}{\delta^2k} \cdot \p[k_N = k] \\
     &\leq (1 + C\delta^{-2}) \cdot \p[k_N \leq k_*] + C\delta^{-2} k_*^{-1} \cdot \p[k_N > k] \\
     &\leq (1 + C\delta^{-2}) \cdot \p[k_N \leq k_*] + \xi/2\,.
\end{align*}
By \Cref{lem:degree_large}, $(1 + C\delta^{-2}) \cdot \p[k_N \leq k_*] < \xi/2$ for sufficiently large $N$. Consequently, $\frac{n+2}{\epsilon^2} \cdot \rsc(x)$ converges in probability to $S(x)$ as $N \to \infty$.
\end{proof}

\section{Convergence rate of Ollivier-Ricci graph curvature}\label{sec:modified_orc_convergence}

The primary aim of the paper is to prove that a node's scalar Ollivier-Ricci curvature (\cref{eq:orsc}) converges to scalar curvature in a suitable sense (\Cref{thm:main}). In \Cref{sec:mean_ricci_curvature}, we completed the first step by showing that a node's mean incident Ricci curvature $\rsc(x)$ converges to scalar curvature $S(x)$. To prove \Cref{thm:main}, we will show that $\orsc{x}$ converges to $\rsc(x)$, up to a multiplicative constant. To do this, we must show that Ollivier-Ricci graph curvature is a sufficiently good approximation to Ricci curvature. For a pair $x, y$ of adjacent nodes, \Cref{thm:KM_convergence} shows that the error in the approximation is
\begin{equation*}
    |\epsilon^{-2}\orc{x}{y} - \Ricc{x}{v_{xy}}| = \frac{|\W{G}(\mu_x^G, \mu_y^G) - \W{M}(\mu_x^M, \mu_y^M)|}{\epsilon^3} \cdot \frac{\epsilon}{d_M(x, y)} + \mathcal{O}(\epsilon + d_M(x, y))\,,
\end{equation*}
where $v_{xy} = \log_x(y) \in T_xM$. Our proof of \Cref{thm:main} requires us to 
\begin{enumerate}
    \item  prove that \[\max_{(x, y) \in E(G_N)}\frac{|\W{G}(\mu_x^G, \mu_y^G) - \W{M}(\mu_x^M, \mu_y^M)|}{\epsilon_N^3} \to 0\] with high probability and
    \item control the ratio $\epsilon/d_M(x, y)$, which may be arbitrarily large.
\end{enumerate}

In this section, we focus on (1), which we prove in \Cref{lem:WM_convergence_modified}. In addition to showing that this quantity  converges to $0$ with high probability, we additionally give a bound on the rate of convergence. As immediate consequences of this intermediate work, we prove two new results (\Cref{prop:orc_convergence1} and \Cref{prop:orc_convergence2}) related to the convergence of $\kappa_G(x, y)$ to the underlying manifold's Ricci curvature, although we do not directly use these results in our proof of \Cref{thm:main}.

Our proof of \Cref{lem:WM_convergence_modified} is broken up into two main parts. First, we control the error that is due to the discrepancy between the geodesic metric on $M$ and the shortest-path metric on $G_N$. We bound $|W_1^G(\mu_x^G, \mu_y^G) - W_1^M(\mu_x^G, \mu_y^G)|$ by applying results from \cite{bernstein_2000} to bound the graph-distance discrepancy $|d_G(u, v) - d_M(u, v)|$. Second, we bound $|W_1^M(\mu_x^G, \mu_y^G) - W_1^M(\mu_x^M, \mu_y^M)|$ by bounding $\max_{z \in G} W_1^M(\mu_z^G, \mu_z^M)$, the $1$-Wasserstein distance between the measure $\mu_z^G$ on neighboring nodes and the measure $\mu_z^M$ on the manifold ball $B^M(z, \epsilon_N)$.

\subsection{Bounding $|W_1^G(\mu_x^G, \mu_x^G) - W_1^M(\mu_x^G, \mu_y^G)|$}

We first recall two results from Bernstein et al \cite{bernstein_2000}, which together quantify the distortion in the graph-distance $d_G$ approximation to geodesic distance $d_M$. Let $V$ denote the volume of $M$.
\begin{lemma}\label{lem:bernstein_sampling}[Sampling Lemma in \cite{bernstein_2000}]
    Let $G$ be an RGG with connection threshold $\epsilon$, and let $\beta > 0$, $\lambda > 0$. Then $M \subseteq \bigcup_{z \in G} B^M(z, \lambda)$ with probability at least $1 - \beta$, provided that
    \begin{equation*}
        \frac{N}{\text{vol}(M)} > \log(\text{vol}(M)/\beta V_{\min}(\lambda/4))/V_{\min}(\lambda/2)\,,
    \end{equation*}
    where $V_{\min}(r) = \min_{x \in M} \text{vol}(B^M(x, r))$.
\end{lemma}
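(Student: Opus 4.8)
The plan is to prove this as a standard $\lambda$-net statement: the edge structure of the RGG is irrelevant, so the claim is purely that $N$ i.i.d.\ uniform samples from $M$ cover $M$ up to scale $\lambda$ with probability at least $1-\beta$. First I would fix a \emph{maximal} $\lambda/2$-packing $\{p_1,\dots,p_\ell\}\subseteq M$, i.e.\ a maximal subset with $d_M(p_i,p_j)>\lambda/2$ for $i\neq j$ (finite since $M$ is compact). Maximality forces $M=\bigcup_{i=1}^\ell \BM{p_i}{\lambda/2}$, because any point at distance $>\lambda/2$ from every $p_i$ could be adjoined to the packing. Separately, the balls $\BM{p_i}{\lambda/4}$ are pairwise disjoint (two points of distinct such balls are within $\lambda/2$ of each other), so $\ell\, V_{\min}(\lambda/4)\le \sum_i \vol(\BM{p_i}{\lambda/4})\le \vol(M)$, giving $\ell\le \vol(M)/V_{\min}(\lambda/4)$.

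Next I would reduce the event $M\subseteq\bigcup_{z\in G}\BM{z}{\lambda}$ to a finite union bound. If every covering ball $\BM{p_i}{\lambda/2}$ contains at least one sampled node, the conclusion holds: given $x\in M$, pick $i$ with $x\in\BM{p_i}{\lambda/2}$ and a node $z\in\BM{p_i}{\lambda/2}$; then $d_M(x,z)\le d_M(x,p_i)+d_M(p_i,z)\le\lambda$, so $x\in\BM{z}{\lambda}$. Since the nodes are i.i.d.\ uniform on $M$, a single node misses $\BM{p_i}{\lambda/2}$ with probability $1-\vol(\BM{p_i}{\lambda/2})/\vol(M)\le 1-V_{\min}(\lambda/2)/\vol(M)$, so all $N$ nodes miss it with probability at most $\bigl(1-V_{\min}(\lambda/2)/\vol(M)\bigr)^N\le \exp\!\bigl(-N\,V_{\min}(\lambda/2)/\vol(M)\bigr)$, using $1-s\le e^{-s}$.

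Finally I would union-bound over the $\ell$ covering balls: the failure probability is at most $\ell\,\exp\!\bigl(-N\,V_{\min}(\lambda/2)/\vol(M)\bigr)\le \frac{\vol(M)}{V_{\min}(\lambda/4)}\exp\!\bigl(-N\,V_{\min}(\lambda/2)/\vol(M)\bigr)$. Demanding that this be $\le\beta$ and taking logarithms rearranges precisely to the hypothesis $N/\vol(M)>\log\!\bigl(\vol(M)/(\beta V_{\min}(\lambda/4))\bigr)/V_{\min}(\lambda/2)$, which completes the argument. Since this is quoted from \cite{bernstein_2000}, I would present it in compressed form. There is no genuine obstacle here; the only point requiring care is the bookkeeping of the three radii — $\lambda/2$ for the covering balls so that a two-hop chain stays within $\lambda$, $\lambda/4$ for the disjointness estimate on $\ell$ — together with the observation that compactness of $M$ guarantees $V_{\min}(r)>0$ for every $r>0$, so the resulting bound is not vacuous.
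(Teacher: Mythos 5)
Your proof is correct. The paper itself gives no proof of this lemma --- it is quoted verbatim from Bernstein et al.\ \cite{bernstein_2000} --- and your reconstruction is exactly the standard argument behind it: a maximal $\lambda/2$-separated set whose $\lambda/2$-balls cover $M$ and whose $\lambda/4$-balls are disjoint (bounding the cardinality by $\vol(M)/V_{\min}(\lambda/4)$), followed by a union bound with $(1-s)^N\le e^{-Ns}$, which rearranges precisely to the stated threshold on $N/\vol(M)$.
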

\noindent As noted in \cite{bernstein_2000}, the quantity $V_{\min}(r)$ is positive because $M$ is compact.
\begin{theorem}\label{thm:bernstein_geodesic_approx}[Theorem 2 in \cite{bernstein_2000}]
Let $G$ be an RGG with connection threshold $\epsilon$. Suppose that $4\lambda < \epsilon$. If $M \subseteq \bigcup_{z \in G} B^M(z, \lambda)$, then
\begin{equation*}
    d_M(u, v) \leq d_G(u, v) \leq (1 + 4\lambda/ \epsilon) d_M(u, v)
\end{equation*}
 for all pairs $u, v \in G$.
\end{theorem}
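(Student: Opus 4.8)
\noindent The lower bound $d_M(u,v) \le d_G(u,v)$ is immediate and needs no hypotheses: any walk $u = x_0 \sim x_1 \sim \cdots \sim x_m = v$ in $G$ has weighted length $\sum_{i=1}^m d_M(x_{i-1},x_i)$, and repeated use of the triangle inequality for $d_M$ gives $d_M(u,v) \le \sum_{i=1}^m d_M(x_{i-1},x_i)$; minimizing over walks yields the bound. So the content is the upper bound, and the plan is the standard ``$\lambda$-net'' argument: approximate a minimizing geodesic from $u$ to $v$ by a graph path obtained by snapping equally spaced points of the geodesic to nearby sample points.

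Concretely, I would let $\gamma$ be a minimizing geodesic from $u$ to $v$ (which exists since $M$ is compact), of length $L := d_M(u,v)$; the case $L = 0$ is trivial, so assume $L > 0$. For a positive integer $m$ to be chosen, let $p_0 = u, p_1, \dots, p_m = v$ divide $\gamma$ into $m$ arcs of length $L/m$, so that $d_M(p_{i-1},p_i) \le L/m$. Using the covering hypothesis $M \subseteq \bigcup_{z \in G}\BM{z}{\lambda}$, for each interior index $i \in \{1,\dots,m-1\}$ pick a node $x_i \in G$ with $d_M(p_i,x_i) \le \lambda$, and set $x_0 := u$ and $x_m := v$. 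Two properties then need checking. First, $x_0, x_1, \dots, x_m$ must actually be a walk in $G$: by the triangle inequality $d_M(x_{i-1},x_i) \le d_M(p_{i-1},p_i) + 2\lambda \le L/m + 2\lambda$, so it suffices to arrange $L/m + 2\lambda < \epsilon$, i.e.\ $m > L/(\epsilon - 2\lambda)$ (here $\epsilon - 2\lambda > 0$ because $4\lambda < \epsilon$). Second, the length of this walk bounds $d_G(u,v)$: each interior snapping point contributes at most $2\lambda$ of extra length while the endpoints contribute none, so $d_G(u,v) \le \sum_{i=1}^m d_M(x_{i-1},x_i) \le L + 2\lambda(m-1)$, which is at most $(1 + 4\lambda/\epsilon)L$ as soon as $m \le 1 + 2L/\epsilon$.

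The one step with any substance is therefore to exhibit an integer $m$ with $L/(\epsilon - 2\lambda) < m \le 1 + 2L/\epsilon$, and this is precisely where the hypothesis $4\lambda < \epsilon$ is used: the interval $\bigl(L/(\epsilon-2\lambda),\, 1 + 2L/\epsilon\bigr]$ has length $1 + L(\epsilon - 4\lambda)/\bigl(\epsilon(\epsilon-2\lambda)\bigr) \ge 1$, hence contains an integer. Choosing such an $m$ simultaneously makes $x_0,\dots,x_m$ a path in $G$ and forces its length below $(1+4\lambda/\epsilon)\,d_M(u,v)$, which completes the argument (and, as a byproduct, shows $u$ and $v$ lie in the same component of $G$). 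I do not expect a genuine obstacle here; the only point requiring a little care is the endpoint bookkeeping that yields $2\lambda(m-1)$ rather than $2\lambda m$, since this is what keeps the admissible interval for $m$ nonempty even when $L$ is small.
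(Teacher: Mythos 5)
Your proof is correct: the lower bound via the triangle inequality, the snapping of equally spaced geodesic points to $\lambda$-close sample points, the edge condition $L/m + 2\lambda < \epsilon$, the length bound $L + 2\lambda(m-1)$, and the verification that the admissible interval for $m$ has length at least $1$ (using exactly $4\lambda < \epsilon$) all check out. The paper itself does not prove this statement --- it is recalled verbatim from Bernstein et al.\ \cite{bernstein_2000} --- and your argument is essentially the standard covering/snapping proof given in that reference, so there is nothing to reconcile.
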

\noindent Bernstein et al \cite{bernstein_2000} apply \Cref{lem:bernstein_sampling} and \Cref{thm:bernstein_geodesic_approx} to obtain a probabilistic bound on the distortion of the graph-distance approximation to geodesic distance.

In \Cref{lem:modified_sampling} and \Cref{lem:modified_geodesic_approx} below, we apply the results of Bernstein et al \cite{bernstein_2000} to obtain a probabilistic bound on the graph-distance discrepancy as $N \to \infty$ and $\epsilon_N = N^{-\alpha} \to 0$.
\begin{lemma}\label{lem:modified_sampling}
     Let $\beta_N = N^{-\theta}$ and $\lambda_N = N^{-\omega}$, where $\theta, \omega > 0$ and $\theta + 2n \omega < 1$. Then for sufficiently large $N$, we have that $M \subseteq \bigcup_{z \in G_N} B^M(z, \lambda_N)$ with probability at least $1 - \beta_N$.
\end{lemma}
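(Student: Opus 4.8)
The plan is to apply the Sampling Lemma (\Cref{lem:bernstein_sampling}) directly, with $\beta = \beta_N = N^{-\theta}$ and $\lambda = \lambda_N = N^{-\omega}$, and to check that its hypothesis
\[
    \frac{N}{\vol(M)} > \frac{\log\big(\vol(M)/(\beta_N V_{\min}(\lambda_N/4))\big)}{V_{\min}(\lambda_N/2)}
\]
holds for all sufficiently large $N$. Note that the covering event $M \subseteq \bigcup_{z \in G_N} \BM{z}{\lambda_N}$ depends only on the node locations, not on the connection threshold, so $\epsilon_N$ plays no role; everything reduces to controlling $V_{\min}(r)$ as $r \to 0$.

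The key input is that $V_{\min}(r) = \Theta(r^n)$ as $r \to 0$. This follows from compactness of $M$: the ball-volume expansion $\vol(\BM{x}{r}) = v_n r^n + \mathcal{O}(r^{n+2})$ holds uniformly in $x$ (bounded geometry), so there is an $r_0 > 0$ with $\tfrac{1}{2} v_n r^n \leq V_{\min}(r) \leq 2 v_n r^n$ for all $r \leq r_0$. Substituting this in, for large $N$ the numerator on the right-hand side is at most $\log\big(C N^{\theta + n\omega}\big) = (\theta + n\omega)\log N + \mathcal{O}(1)$ for some constant $C$, while the denominator is at least $\tfrac{1}{2} v_n (\lambda_N/2)^n = c N^{-n\omega}$ for some constant $c > 0$. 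Hence the right-hand side is $\mathcal{O}(N^{n\omega}\log N)$, whereas the left-hand side is $N/\vol(M) = \Theta(N)$. The required inequality therefore holds as soon as $N^{1 - n\omega}/\log N$ exceeds a fixed constant, which happens for all large $N$ because $1 - n\omega > 0$; indeed the hypothesis $\theta + 2n\omega < 1$ together with $\theta > 0$ forces $n\omega < \tfrac{1}{2}$, so a polynomial of positive degree dominates $\log N$ with room to spare. One also checks the harmless side conditions $\beta_N < 1$ (true for $N \geq 2$ since $\theta > 0$) and $\lambda_N \leq r_0$ (true for large $N$), and then the Sampling Lemma yields the conclusion with probability at least $1 - \beta_N$.

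There is no serious obstacle here; the only point requiring care is the uniform lower bound $V_{\min}(r) \gtrsim r^n$, i.e. that the smallest geodesic ball of radius $r$ in a compact manifold does not shrink faster than the Euclidean rate. This is exactly where compactness of $M$ (equivalently, a curvature bound plus a positive injectivity radius) enters, and it is already implicit in the remark following \Cref{lem:bernstein_sampling} that $V_{\min}(r) > 0$. Assembling the estimates above completes the proof.
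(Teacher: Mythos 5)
Your proposal is correct and follows essentially the same route as the paper: verify the hypothesis of the Bernstein et al.\ Sampling Lemma using the uniform lower bound $V_{\min}(r) \gtrsim r^n$ from compactness. The only (harmless) difference is in the final asymptotic step, where you evaluate the logarithm directly as $(\theta + n\omega)\log N + \mathcal{O}(1)$ and need only $n\omega < 1$, whereas the paper bounds $\log(x) = o(x)$ more crudely and thereby uses the full strength of the hypothesis $\theta + 2n\omega < 1$; your estimate is in fact slightly sharper.
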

\begin{proof}
    In order to apply \Cref{lem:bernstein_sampling} (Sampling Lemma in \cite{bernstein_2000}), we must show that
    \begin{equation}\label{eq:sampling_bound1}
        \frac{\log(\text{vol}(M)/\beta_N V_{\min}(\lambda_N/4))}{N \cdot V_{\min}(\lambda_N/2)} < \fracc{\text{vol}(M)}
    \end{equation}
    for sufficiently large $N$. It suffices to show that the left-hand side is $o(1)$ as $N \to \infty$.

    For any $x \in M$, the volume of a geodesic ball $B^M(x, r)$ is characterized by the scalar curvature $S(x)$:
    \begin{equation*}
        \frac{\vol(B^M(x, r))}{v_n r^n} = 1 - \frac{6}{n+2}S(x)r^2 + \mathcal{O}(r^4)
    \end{equation*}
    as $r \to 0$, where we recall that $n = \dim(M)$. Because $M$ is compact, the scalar curvature is bounded, so $\vol(B^M(x, r)) \geq \fracc{2}v_n r^n$ for all $x \in M$ and sufficiently small $r$. Therefore there are positive constants $C_1$, $C_2$ such that
    \begin{equation}\label{eq:sampling_bound2}
         \frac{\log(\text{vol}(M)/\beta_N V_{\min}(\lambda_N/4))}{N \cdot V_{\min}(\lambda_N/2)} \leq C_1 \cdot\frac{\log(C_2/\beta_N \lambda_N^n)}{N \lambda_N^n}
    \end{equation}
    for sufficiently large $N$. We have
    \begin{equation*}
        \log(C_2/\beta_N \lambda_N^n) = o(1/\beta_N \lambda_N^n)
    \end{equation*}
    as $N \to \infty$ because $\log(x)/x \to 0$ as $x \to \infty$ and $\beta_N^{-1}\lambda_N^{-n} = N^{\theta + n\omega} \to \infty$ as $N \to \infty$. Therefore,
    \begin{equation*}
        C_1 \cdot\frac{\log(C_2/\beta_N \lambda_N^n)}{N \lambda_N^n} = o\Big( \fracc{N \beta_N\lambda_N^{2n}} \Big)\,.
    \end{equation*}
    By choice of $\theta$ and $\omega$, we have $N \beta_N \lambda_N^{2n} = N^{1 - \theta - 2n \omega} \to \infty$ as $N \to \infty$. Together with \Cref{eq:sampling_bound2}, this shows that \Cref{eq:sampling_bound1} holds for sufficiently large $N$. Applying \Cref{lem:bernstein_sampling} completes the proof.
\end{proof}

In \Cref{lem:modified_geodesic_approx}, we apply the previous results to obtain a bound on the graph-distance discrepancy for pairs $u, v$ of nodes that are neighbors of adjacent nodes $x, y$, respectively. These are the only relevant graph distances when we are calculating Ollivier-Ricci curvature. In the statement below, the parameter $\alpha < \fracc{6n}$ controls the connection threshold $\epsilon_N = N^{-\alpha}$, and the parameter $\beta_N = N^{-\theta}$, where $\theta \in (0, 1 - 6n\alpha)$, is the probability that the bound does not hold. The parameter $\lambda_N = N^{-\omega}$, where $\omega \in (3\alpha, \frac{1-\theta}{2n})$, controls the bound itself. The interval for $\theta$ is nonempty because $\alpha < \fracc{6n}$, and the interval for $\omega$ is nonempty because $\theta < 1 - 6n\alpha$. Increasing the probability $1 - \beta_N$ (equivalently, increasing $\theta$) comes at the cost of increasing the bound $\lambda_N$ (equivalently, decreasing $\omega$).
\begin{lemma}\label{lem:modified_geodesic_approx}
   Let $\beta_N = N^{-\theta}$ and $\lambda_N = N^{-\omega}$, where $\theta \in (0, 1 - 6n\alpha)$ and $\omega \in (3\alpha, \frac{1 - \theta}{2n})$. Then for sufficiently large $N$,
    \begin{equation*}
        \max_{(x, y) \in E(G_N)} \max_{\substack{u \in B^G(x, \epsilon_N) \\ v \in B^G(y, \epsilon_N)}} |d_M(u, v) - d_G(u, v)| \leq 12 \lambda_N \,,
    \end{equation*}
    with probability at least $1 - \beta_N$, where the right-hand side is $o(\epsilon_N^3)$ because $\omega > 3\alpha$.
\end{lemma}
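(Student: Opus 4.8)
The plan is to combine Lemma~\ref{lem:modified_sampling} with the deterministic distortion bound of Bernstein et al.\ (Theorem~\ref{thm:bernstein_geodesic_approx}) and a triangle inequality. First I would check that the hypotheses on $\theta$ and $\omega$ supply the hypotheses of Lemma~\ref{lem:modified_sampling}: since $\omega < (1-\theta)/(2n)$ we have $\theta + 2n\omega < 1$, so for $N$ large enough the event $\mathcal{E}_N := \{\, M \subseteq \bigcup_{z \in G_N} B^M(z, \lambda_N)\,\}$ has probability at least $1 - \beta_N$. I would also record that $\omega > 3\alpha > \alpha$ forces $\lambda_N/\epsilon_N = N^{\alpha - \omega} \to 0$, so $4\lambda_N < \epsilon_N$ for all sufficiently large $N$; this is precisely what is needed to invoke Theorem~\ref{thm:bernstein_geodesic_approx}.

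Next, working on the event $\mathcal{E}_N$, Theorem~\ref{thm:bernstein_geodesic_approx} gives $d_M(u,v) \leq d_G(u,v) \leq (1 + 4\lambda_N/\epsilon_N)\, d_M(u,v)$ simultaneously for every pair of nodes $u,v \in G_N$, and hence $|d_G(u,v) - d_M(u,v)| \leq (4\lambda_N/\epsilon_N)\, d_M(u,v)$. It then remains to bound $d_M(u,v)$ uniformly over the relevant pairs. If $(x,y) \in E(G_N)$ then $d_M(x,y) \leq \epsilon_N$ by definition of the RGG; and if $u \in B^G(x,\epsilon_N)$ then $d_M(u,x) \leq d_G(u,x) \leq \epsilon_N$ (using $d_M \leq d_G$ from Theorem~\ref{thm:bernstein_geodesic_approx}), and likewise $d_M(v,y) \leq \epsilon_N$. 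The triangle inequality along the chain $u \to x \to y \to v$ then gives $d_M(u,v) \leq 3\epsilon_N$, so that $|d_G(u,v) - d_M(u,v)| \leq (4\lambda_N/\epsilon_N)(3\epsilon_N) = 12\lambda_N$ for all such $u,v$. Intersecting with $\mathcal{E}_N$ yields the claimed bound with probability at least $1 - \beta_N$, and the closing remark $12\lambda_N = o(\epsilon_N^3)$ is immediate since $N^{-\omega}/N^{-3\alpha} = N^{3\alpha - \omega} \to 0$.

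I do not expect a serious obstacle: the argument is essentially bookkeeping of the polynomial rates together with one application each of Lemma~\ref{lem:modified_sampling} and Theorem~\ref{thm:bernstein_geodesic_approx}. The only points requiring a little care are (i) quoting the Bernstein distortion bound as holding for \emph{all} pairs $u,v$ at once, which is legitimate because $\mathcal{E}_N$ is a single high-probability event rather than a per-pair event, and (ii) getting the constant right in the triangle-inequality step, where the three legs $u \to x$, $x \to y$, $y \to v$ each have length at most $\epsilon_N$, producing the factor $3$ and hence the constant $12$.
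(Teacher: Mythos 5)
Your proposal is correct and follows essentially the same route as the paper's proof: invoke Lemma~\ref{lem:modified_sampling} to get the covering event with probability at least $1-\beta_N$, check $4\lambda_N<\epsilon_N$, apply Theorem~\ref{thm:bernstein_geodesic_approx} to obtain $|d_G(u,v)-d_M(u,v)|\leq (4\lambda_N/\epsilon_N)d_M(u,v)$ for all pairs simultaneously, and bound $d_M(u,v)\leq 3\epsilon_N$ by the triangle inequality along $u\to x\to y\to v$. The only (harmless) difference is that you explicitly verify $\theta+2n\omega<1$ and justify $d_M(u,x)\leq\epsilon_N$ via $d_M\leq d_G$, steps the paper leaves implicit.
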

\begin{proof}
For sufficiently large $N$ and probability at least $1 - \beta_N$, we have $M \subseteq \bigcup_{z \in G_N} B^M(z, \lambda_N)$ by \Cref{lem:modified_sampling}. Additionally, $4\lambda_N < \epsilon_N$ for sufficiently large $N$ because $\omega > \alpha$. By \Cref{thm:bernstein_geodesic_approx} (Theorem 2 in \cite{bernstein_2000}), we have
\begin{equation}\label{eq:geo_approx1}
    d_G(u, v) \leq (1 + 4\lambda_N/ \epsilon_N) d_M(u, v)
\end{equation}
for all $u, v \in G_N$. For any adjacent vertices $x, y \in G_N$ and any $u \in B^G(x, \epsilon_N)$, $v \in B^G(y, \epsilon_N)$, we have
\begin{equation*}
    d_M(u, v) \leq d_M(x, u) + d_M(x, y) + d_M(y, v) \leq 3\epsilon_N\,.
\end{equation*}
Therefore,
\begin{equation*}
    |d_M(u, v) - d_G(u, v)| \leq 12 \lambda_N
\end{equation*}
by \Cref{eq:geo_approx1}.
\end{proof}

\Cref{lem:G_to_M}, below, shows that $W_1^G(\mu_x^G, \mu_y^G)$ is a good approximation to $W_1^M(\mu_x^G, \mu_y^G)$, which concludes the first part of \Cref{sec:modified_orc_convergence}.  The work here is analogous to Appendix A.1 of \cite{ORC_convergence_hoorn}, with some minor corrections.

\begin{lemma}\label{lem:G_to_M}
Let $\beta_N = N^{-\theta}$, where $\theta \in (0, 1 - 6n\alpha)$, and let $\omega \in (3\alpha, \frac{1 - \theta}{2n})$. Then for sufficiently large $N$,
    \begin{equation*}
    \max_{(x, y) \in E(G_N)}\frac{|\W{M}(\mu_x^G, \mu_y^G) - \W{G}(\mu_x^G, \mu_{y}^G)|}{\epsilon_N^3} \leq \frac{12}{N^{\omega - 3\alpha}} = o(1)
\end{equation*}
with probability at least $1 - \beta_N$.
\end{lemma}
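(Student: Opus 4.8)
The plan is to observe that $\W{M}(\mu_x^G,\mu_y^G)$ and $\W{G}(\mu_x^G,\mu_y^G)$ are $1$-Wasserstein distances between the \emph{same} pair of finitely-supported probability measures, differing only in the underlying ground metric: geodesic distance $d_M$ versus the shortest-path metric $d_G$. The lemma then reduces to a general stability fact for optimal transport under a perturbation of the ground metric, which I would combine with the uniform bound on $|d_M - d_G|$ already established in \Cref{lem:modified_geodesic_approx}.

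First I would record the stability fact: if $d$ and $d'$ are two metrics on a common finite set $X$ and $\mu,\nu$ are probability measures on $X$, then
\[
|W_1^d(\mu,\nu) - W_1^{d'}(\mu,\nu)| \le \max\{|d(a,b)-d'(a,b)| : a \in \mathrm{supp}(\mu),\ b \in \mathrm{supp}(\nu)\}\,.
\]
The proof is one line in each direction. Let $\pi$ be an optimal coupling of $\mu,\nu$ for the cost $d$, and write $\rho := \max\{|d(a,b)-d'(a,b)| : a \in \mathrm{supp}(\mu),\ b \in \mathrm{supp}(\nu)\}$. Using $\pi$ as a (generally suboptimal) coupling for $d'$ gives
\[
W_1^{d'}(\mu,\nu) \le \sum_{a,b} \pi(a,b)\, d'(a,b) \le \sum_{a,b} \pi(a,b)\bigl(d(a,b) + \rho\bigr) = W_1^d(\mu,\nu) + \rho\,,
\]
since $\sum_{a,b}\pi(a,b) = 1$ and $\pi$ is supported on $\mathrm{supp}(\mu)\times\mathrm{supp}(\nu)$; swapping the roles of $d$ and $d'$ gives the reverse inequality. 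Note in particular that the additive error $\rho$ is not multiplied by the number of support points.

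Next I would identify the supports. Since $\mu_x^G$ is uniform on the $1$-hop neighborhood of $x$ and every neighbor $u$ of $x$ satisfies $d_G(x,u) \le w(x,u) = d_M(x,u) \le \epsilon_N$, we have $\mathrm{supp}(\mu_x^G) \subseteq B^G(x,\epsilon_N)$, and likewise $\mathrm{supp}(\mu_y^G) \subseteq B^G(y,\epsilon_N)$. Applying the stability fact with $d = d_M$ and $d' = d_G$ therefore gives, for every edge $(x,y) \in E(G_N)$,
\[
|\W{M}(\mu_x^G,\mu_y^G) - \W{G}(\mu_x^G,\mu_y^G)| \le \max_{\substack{u \in B^G(x,\epsilon_N) \\ v \in B^G(y,\epsilon_N)}} |d_M(u,v) - d_G(u,v)|\,.
\]
By \Cref{lem:modified_geodesic_approx} (with the same choices of $\theta$ and $\omega$), the right-hand side, maximized over all edges simultaneously, is at most $12\lambda_N$ with probability at least $1-\beta_N$ for $N$ large. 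Dividing by $\epsilon_N^3 = N^{-3\alpha}$ and using $\lambda_N = N^{-\omega}$ yields the claimed bound $12/N^{\omega-3\alpha}$, which is $o(1)$ because $\omega > 3\alpha$.

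There is no real obstacle here: the analytic content has been front-loaded into \Cref{lem:modified_geodesic_approx}, and this lemma is essentially bookkeeping. The only points requiring care are getting the direction of the coupling comparison right (and remembering $\pi$ is a probability measure, so the perturbation error stays additive), checking the support containment so that \Cref{lem:modified_geodesic_approx} covers every pair an optimal coupling can charge, and noting that no further union bound over edges is needed because the maximum over $E(G_N)$ is already inside \Cref{lem:modified_geodesic_approx}.
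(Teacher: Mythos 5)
Your proposal is correct and follows the same route as the paper: the paper's proof simply invokes \Cref{lem:modified_geodesic_approx} and divides by $\epsilon_N^3$, leaving implicit exactly the Wasserstein stability-under-ground-metric-perturbation step that you spell out via the optimal-coupling comparison. Your version just makes that one-line implicit step explicit, so there is nothing substantively different to compare.
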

\begin{proof}
    Let $\lambda_N = N^{-\omega}$. By \Cref{lem:modified_geodesic_approx},
    \begin{equation}\label{eq:distortion_bound}
        \max_{(x, y) \in E(G_N)}\max_{\substack{u \in B^G(x, \epsilon_N) \\ v \in B^G(y, \epsilon_N)}} |d_M(u, v) - d_G(u, v)| \leq 12\lambda_N\,,
    \end{equation}
    with probability at least $1 - \beta_N$. Therefore,
    \begin{equation*}
       \max_{(x, y) \in E(G_N)} \frac{|\W{M}(\mu_{x}^G, \mu_y^G) - \W{G}(\mu_{x}^G, \mu_y^G)|}{\epsilon_N^3} \leq \frac{12 \lambda_N }{\epsilon_N^3} = \frac{12}{N^{\omega - 3\alpha}}\,.
    \end{equation*}
\end{proof}

\subsection{Bounding $|W_1^M(\mu_x^G, \mu_x^G) - W_1^M(\mu_x^M, \mu_y^M)|$}
As outlined at the beginning, the second part of this section is to bound the error that is introduced by the Wasserstein distance between $\mu_z^G$ and $\mu_z^M$, where $z$ is any node in $G$. We start by recalling the following result from \cite{trillos_2020}, where we define $p_n := \begin{cases} 3/4\,, & n = 2\\ 1/n\,, & n \geq 3\end{cases}$.
\begin{theorem}[Theorem 2 in \cite{trillos_2020}]\label{thm:trillos}
Let $\mu$ be the uniform probability measure on $M$ and $\mu_N$ the empirical probability measure $\mu_N$ on the nodes of $G_N$. Then for any $s > 0$, there are constants $A_{M, s}'$, $C_{M, s}$ and a transport map $T_N : M \to X_N$ such that
\begin{equation*}
    \sup_{x \in M} d(x, T_N(x)) \leq A_{M, s}' \frac{\log(N)^{p_n}}{N^{1/n}}
\end{equation*}
with probability at least $1 - C_{M, s}N^{-s}$
\end{theorem}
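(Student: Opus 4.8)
What follows sketches how the statement above---a result of \cite{trillos_2020} that we invoke as a black box, not part of our own argument---can be proved. Producing a map $T_N\colon M \to X_N$ with $\sup_x d(x, T_N(x)) \le r$ amounts to bounding the $\infty$--Wasserstein distance $W_\infty(\mu, \mu_N) \le r$ (the construction below yields a map directly; in general the passage from a coupling supported on $\{d \le r\}$ to an actual map uses only that $\mu$ is non-atomic). The first reduction is to a flat local model: cover the compact manifold $M$ by finitely many coordinate charts in which $d_M$ is bi-Lipschitz equivalent to the Euclidean metric and the pushforward of $\mu$ has density bounded above and below (both hold by compactness, the density being $\propto \sqrt{\det g}$), so it suffices to prove the analogous bound for a measure of bounded density on a bounded Euclidean domain, at the cost of inflating the constant $A_{M, s}'$.

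The core is a multi--scale (dyadic) construction of $T_N$. Fix the critical scale $\ell \sim (\log N / N)^{1/n}$ and partition the domain into dyadic cells of side $\ell$; there are $\sim N/\log N$ of them, each of $\mu$--mass $\sim \log N / N$ and hence expected sample count $\sim \log N$. A Bernstein/Chernoff estimate with a union bound over the cells---and over the $\mathcal{O}(\log N)$ coarser dyadic scales---shows that, with probability at least $1 - C_{M, s} N^{-s}$ once the constant hidden in $\ell$ is taken large in terms of $s$, the sample count in every cell at every scale is within a constant factor of its mean. One then builds $T_N$ in two stages: (i) working from coarse to fine scales, redistribute $\mu$--mass between neighbouring cells until each cell carries mass exactly $(\text{its number of sample points})/N$; (ii) within each finest cell, split the now--correct mass equally among the points it contains, which moves mass a distance $\mathcal{O}(\ell)$.

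Whether stage (i) keeps the \emph{total} displacement at $\mathcal{O}(\ell)$ is exactly where the dimension enters, and is the crux. By Strassen's theorem the construction succeeds with displacement $\sim r$ iff every Borel set $A$ satisfies $\mu(A) \le \mu_N(A^r)$, and it suffices to check $A$ ranging over unions of cells. The deficit $\mu(A) - \mu_N(A)$ over a union of $k$ cells has typical size $\sim \sqrt{k \log N}/N$ (only slightly larger once made uniform over such $A$ by an empirical--process/chaining bound), while the collar $A^r \setminus A$ supplies $\sim k^{(n-1)/n}$ extra cells of mass $\sim \log N/N$ each by discrete isoperimetry. For $n \ge 3$ one has $(n-1)/n > 1/2$, so the isoperimetric gain beats the fluctuation for all $k$: the stage--(i) corrections are effectively local, their contributions sum geometrically across scales and are dominated by the finest scale, and $T_N$ has displacement $\mathcal{O}(\ell) = \mathcal{O}\big((\log N/N)^{1/n}\big)$, i.e.\ $p_n = 1/n$. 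For $n = 2$ the exponent $(n-1)/n = 1/2$ is exactly critical, all $\sim \log N$ scales contribute comparably, and one must run the grid--matching chaining in the spirit of Ajtai--Koml\'os--Tusn\'ady and Leighton--Shor; balancing the per--scale fluctuations against the number of scales forces $r$ to be inflated by a factor $(\log N)^{1/4}$, giving $p_2 = \tfrac12 + \tfrac14 = \tfrac34$. This critical--dimension chaining is the main obstacle; the $n \ge 3$ case is, by comparison, concentration plus a union bound. We reproduce none of this---\Cref{thm:trillos} is quoted directly and used downstream.
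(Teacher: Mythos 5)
The paper offers no proof of this statement: it is imported verbatim as Theorem~2 of \cite{trillos_2020} and used as a black box in \Cref{lem:continuous_to_discrete}, so there is no internal argument to compare yours against. Taken on its own terms, your sketch is an accurate outline of how the cited result is actually established in the literature (Garc\'ia Trillos--Slep\v{c}ev, building on Ajtai--Koml\'os--Tusn\'ady, Leighton--Shor, and Shor--Yukich): the reduction to bounded-density measures on Euclidean charts by compactness; the reformulation of the existence of a map $T_N$ with small $\sup_x d(x, T_N(x))$ as a bound on $W_\infty(\mu, \mu_N)$ via the Strassen/Hall condition $\mu(A) \le \mu_N(A^r)$ together with non-atomicity of $\mu$; the dyadic partition at scale $\ell \sim (\log N/N)^{1/n}$ with Chernoff concentration and a union bound over cells and scales (which is where the probability $1 - C_{M,s}N^{-s}$ and the dependence of the constants on $s$ come from); and the dichotomy between $n \ge 3$, where the isoperimetric exponent $(n-1)/n > 1/2$ beats the $\sqrt{k \log N}/N$ fluctuations and yields $p_n = 1/n$, and the critical case $n = 2$, where the Leighton--Shor chaining costs an extra $(\log N)^{1/4}$ and yields $p_2 = 3/4$. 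This is consistent with the definition of $p_n$ used throughout the paper. The only caveat is that what you have written is a roadmap rather than a proof: the two genuinely hard steps---the empirical-process bound needed to verify the Strassen condition uniformly over all unions of cells, and the $n=2$ chaining---are named but not carried out. Since the paper itself only quotes the theorem, that level of detail is acceptable here, but the sketch should not be presented as self-contained.
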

The benefit of \Cref{thm:trillos} is that we can leverage the transport map $T_N$ to obtain a bound on $W_1^M(\mu_z^M, \mu_z^G)$. However, note that $T_N$ does not necessarily map $B^M(z, \epsilon_N)$ to $B^G(z, \epsilon_N)$, so it does not immediately provide a transport map for $\mu_z^M, \mu_z^G$.
\begin{lemma}\label{lem:continuous_to_discrete} For any $s > 0$, there is a constant $A_{M, s}$ such that
    \begin{equation*}
        \max_{z \in G_N} \W{M}(\mu_z^G, \mu_z^M) \leq A_{M, s} \frac{\log(N)^{p_n}}{N^{1/n}}
    \end{equation*}
    with probability at least $1 - C_{M, s}N^{-s}$, where $C_{M, s}$ is the constant from \Cref{thm:trillos}.
\end{lemma}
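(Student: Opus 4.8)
The plan is to use the transport map $T_N\colon M \to X_N$ from \Cref{thm:trillos} as a near-identity bridge between $\mu_z^M$ and $\mu_z^G$. Work on the event $\mathcal{E}$, of probability at least $1 - C_{M,s}N^{-s}$, on which $\eta_N := \sup_{x\in M} d_M(x, T_N(x)) \le A_{M,s}'\,\log(N)^{p_n}/N^{1/n}$; everything after fixing $\mathcal{E}$ is deterministic, so no further union bound over the $N$ nodes is needed. Throughout we use that $\eta_N = o(\epsilon_N)$, which holds because $\alpha < 1/n$. For each node $x_i$ write $C_i := T_N^{-1}(x_i)$ for its transport cell. Since $T_N$ pushes the uniform measure $\mu$ on $M$ forward to the empirical measure $\mu_N = \frac1N\sum_i \delta_{x_i}$, every cell has \emph{exactly} the same volume $\vol(C_i) = \vol(M)/N$; this is the key fact that makes the estimate below go through cleanly.

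Fix $z \in G_N$ (if $\deg(z)=0$ then $\mu_z^G$ is undefined and there is nothing to prove; such $z$ are super-polynomially rare and can be absorbed into the bad event, so assume $\deg(z)\ge 1$). Put $A := B^M(z,\epsilon_N)$, so that $\mu_z^M$ is the normalized volume measure on $A$ and $\mu_z^G$ is uniform on the neighbor set $X_N \cap A$. By the triangle inequality for $W_1^M$ it suffices to bound $W_1^M(\mu_z^M,(T_N)_*\mu_z^M)$ and $W_1^M((T_N)_*\mu_z^M,\mu_z^G)$. The first is at most $\eta_N$, via the coupling $w \mapsto (w, T_N(w))$, which moves mass by at most $\eta_N$.

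For the second term, $\sup_x d_M(x,T_N(x)) \le \eta_N$ forces $C_i \subseteq B^M(x_i,\eta_N)$. Hence, writing $D := \bigcup_{x_i \in X_N\cap A} C_i$, one checks that the symmetric difference $A \triangle D$ lies in the annulus $B^M(z,\epsilon_N+\eta_N)\setminus B^M(z,\epsilon_N-\eta_N)$, whose volume is $O(\epsilon_N^{n-1}\eta_N)$ by the uniform local volume expansion $\vol(B^M(z,r)) = v_n r^n(1+O(r^2))$; likewise $\mathrm{supp}\big((T_N)_*\mu_z^M\big)\subseteq B^M(z,\epsilon_N+\eta_N)$, a set of diameter $O(\epsilon_N)$. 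The equal-volume identity gives $\mu_z^G(\{x_i\}) = \vol(C_i)/\vol(D)$ for every neighbor $x_i$, whereas $(T_N)_*\mu_z^M(\{x_i\}) = \vol(C_i\cap A)/\vol(A)$, and a short computation using only $|\vol(A)-\vol(D)| \le \vol(A\triangle D) = O(\epsilon_N^{n-1}\eta_N)$ together with $\vol(A), \vol(D) \asymp \epsilon_N^n$ yields $\big\|(T_N)_*\mu_z^M - \mu_z^G\big\|_{\mathrm{TV}} = O(\eta_N/\epsilon_N)$. Since any two probability measures supported in a common set of diameter $O(\epsilon_N)$ satisfy $W_1^M \le \mathrm{diam}\cdot\mathrm{TV}$, this gives $W_1^M\big((T_N)_*\mu_z^M,\mu_z^G\big) = O(\epsilon_N)\cdot O(\eta_N/\epsilon_N) = O(\eta_N)$. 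Adding the two bounds, $W_1^M(\mu_z^M,\mu_z^G) = O(\eta_N)$ with a constant independent of $z$, and absorbing it into $A_{M,s}$ proves the lemma on $\mathcal{E}$.

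The main obstacle is the total-variation estimate in the last step: one must see that pushing $\mu_z^M$ forward by $T_N$ lands within $O(\eta_N/\epsilon_N)$ of the \emph{uniform} measure $\mu_z^G$, not merely within $O(1)$. The exact identity $\vol(C_i) = \vol(M)/N$ is precisely what buys this; without it one would instead have to control $|\deg(z) - N\vol(A)/\vol(M)|$ uniformly over all $z$ via a Chernoff bound and a union bound, which is feasible but introduces an extra concentration argument. The only other ingredient is the uniform annulus-volume bound, which is immediate from the local volume expansion on a compact manifold already invoked elsewhere in the paper.
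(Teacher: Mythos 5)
Your proof is correct and follows essentially the same route as the paper's: both insert an intermediate measure built from the transport map $T_N$, pay $\sup_{x}d_M(x,T_N(x))$ for the transport step, and control the remaining discrepancy by an annulus-volume estimate of relative mass $O(\eta_N/\epsilon_N)$ over a support of diameter $O(\epsilon_N)$. The only difference is cosmetic: the paper pulls $\mu_z^G$ back to the uniform measure on $T_N^{-1}\big(X_N\cap B^M(z,\epsilon_N)\big)$ and compares two continuous measures via an explicit partial coupling, whereas you push $\mu_z^M$ forward and compare two discrete measures via total variation --- the same two facts (equal cell volumes, since $T_N$ transports $\mu$ to the empirical measure, and the annulus bound) drive both arguments.
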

\begin{proof}
Let $\mu$ be the uniform probability measure on $M$ and let $\mu_N$ be the uniform probability measure on the nodes $X_N$ of $G_N$. Suppose that $s > 0$. By \Cref{thm:trillos}, there are constants $A_{M, s}', C_{M, s}$ and a transport map $T_N: M \to X_N$ such that
\begin{equation*}
    D_N := \sup_{x \in M} d(x, T_N(x)) \leq A_{M, s}' \frac{\log(N)^{p_n}}{N^{1/n}}
\end{equation*}
with probability at least $1 - C_{M, s} N^{-s}$.

    Let $z$ be any node in $G_N$, and define
    \begin{align*}
        B_N&:= \BM{z}{\epsilon_N}\,, \\
        \hat{B}_N &:= T_N^{-1}\Big(X_N \cap \BM{z}{\epsilon_N}\Big)\,.
    \end{align*}
    Let $\hat{\mu}_z^M$ be the uniform measure on $\hat{B}_N$. We have
    \begin{equation*}
        W^M_1(\mu_z^G, \mu_z^M) \leq W^M_1(\mu_z^G, \hat{\mu}_z^M) + W^M_1(\hat{\mu}_z^M, \mu_z^M) \leq W^M_{\infty}(\mu_z^G, \hat{\mu}_z^M) + W^M_1(\hat{\mu}_z^M, \mu_z^M)\,.
    \end{equation*}
    We begin by upper bounding $\Winf(\mu_z^G, \hat{\mu}_z^M)$. Restricting $T_N$ to $\hat{B}_N$ yields a transport map from $\hat{B}_N$ to $X_N \cap \BM{z}{\epsilon_N}$ such that 
    \begin{equation*}
        \sup_{x\in \hat{B}_N} d_M(T_N(x), x) \leq D_N \leq A_{M, s}' \frac{\log(N)^{p_n}}{N^{1/n}}\,,
    \end{equation*}
    which implies that
    \begin{equation*}
        W^M_{\infty}(\mu_z^G, \hat{\mu}_z^M) \leq A'_{M, s} \frac{\log(N)^{p_n}}{N^{1/n}}\,.
    \end{equation*}
    Next, we upper bound $W_1^M(\hat{\mu}_z^M, \mu_z^M)$. Define
    \begin{align*}
        \epsilon_N^{\pm} &:= \epsilon_N \pm D_N\,, \\
        B^{\pm}_N &:= \BM{z}{\epsilon_N^{\pm}}\,.
    \end{align*}
    We must have that 
    \begin{equation}\label{eq:ball_bounds}
        B^-_N \subseteq \hat{B}_N \subseteq B^+_N\,.
    \end{equation}
    However, we note that the masses 
    \begin{align*}
        \hat{\mu}_z^M(B^-_N) &= \frac{\mu(B^-_N)}{\mu(\hat{B}_N)}\,, \\
        \mu_z^M(B^-_N) &= \frac{\mu(B^-_N)}{\mu(B_N)}
    \end{align*}
    generally differ because generally $\mu(\hat{B}_N) \neq \mu(B_N)$.
    
    Let $\gamma \in \Gamma(\hat{\mu}_z^M, \mu_z^M)$ be a coupling that fixes mass $m_N := \min\Big(\hat{\mu}_z^M(B^-_N), \mu_z^M(B^-_N)\Big)$ in set $B^-_N$. Therefore,
    \begin{equation*}
        W_1^M(\hat{\mu}^M_z, \mu_z^M) \leq \int_{\hat{B}_N} \int_{B_N} d_M(x, y) \,d\gamma(x, y) 
        \leq \text{diam}(B_N^+) \cdot (1 - m_N) = 2 \epsilon_N^+ \cdot (1 - m_N)
    \end{equation*}
    because $d_M(x, y) \leq \text{diam}(B_N^+)$ for all $x$ and $y$, and $1 - m_N$ is the total amount of mass that $\gamma$ moves a nonzero distance. We note that 
    \begin{equation*}
        \epsilon_N^+ - \epsilon_N = D_N = \mathcal{O}\Big( \frac{\log(N)^{p_n}}{N^{1/n}}\Big) = o(N^{-\alpha}) = o(\epsilon_N)
    \end{equation*}
    because $\alpha < 1/n$, so we can replace $\epsilon_N^+$ with $\mathcal{O}(\epsilon_N)$ to obtain
    \begin{equation}\label{eq:W1_bound_setup}
        W_1(\hat{\mu}_z, \mu_z^M) = \mathcal{O}(\epsilon_N) \cdot (1 - m_N)\,.
    \end{equation}
    The rest of the proof is devoted to bounding $(1 - m_N)$. By \Cref{eq:ball_bounds},
    \begin{equation}\label{eq:hatB_vol_bounds}
         \mu(B_N) - \mu(B_N\setminus B^-_N) \leq \mu(\hat{B}_N) \leq \mu(B_N) + \mu(B^+_N \setminus B_N)\,.
    \end{equation}
    Using the fact that $|s^{-1} - r^{-1}| \leq \min(s, r)^{-2} |s-r|$ for all $s, r > 0$, we bound the difference in masses $\hat{\mu}_z^M(B_N^-)$, $\mu_z^M(B_N^-)$ by
    \begin{align}
        |\hat{\mu}_z^M(B^-_N) -  \mu_z^M(B^-_N)| &\leq \mu(B^-_N)|\mu(\hat{B}_N)^{-1} - \mu(B_N)^{-1}| \notag \\
        &\leq \mu(B^-_N)\min(\mu(\hat{B}_N), \mu(B_N))^{-2} |\mu(\hat{B}_N) - \mu(B_N)| \notag \\
        &\leq \mu(B^-_N)^{-1} \max(\mu(B^+_N \setminus B_N), \mu(B_N\setminus B^-_N)) \label{eq:mass_difference}\,,
    \end{align}
    where the last inequality follows from \Cref{eq:hatB_vol_bounds}. The sets $B^+_N \setminus B_N$ and $B_N \setminus B^-_N$ are annuli in $M$ such that the difference between the inner and outer radii is $D_N$ for both annuli. Therefore, as $\epsilon_N \to 0$,
    \begin{align}
        \mu(B^+_N\setminus B_N) &= D_N \cdot \mathcal{O}((\epsilon_N^+)^{n-1})\,,\\
        \mu(B_N \setminus B^-_N) &= D_N \cdot \mathcal{O}(\epsilon_N^{n-1}) \label{eq:inner_annulus_volume}
    \end{align}
    We recall again that $\epsilon_N^+ - \epsilon_N = o(\epsilon_N)$, so we can replace $\mathcal{O}((\epsilon_N^+)^{n-1})$ by $\mathcal{O}(\epsilon_N^{n-1})$. Therefore by \Cref{eq:mass_difference},
    \begin{equation*}
        |\hat{\mu}_z^M(B^-_N) -  \mu_z^M(B^-_N)| \leq \frac{D_N}{\mu(B^-_N)} \cdot \mathcal{O}(\epsilon_N^{n-1})\,.
    \end{equation*}
    As $\epsilon_N \to 0$, we have $\mu(B_N^-)^{-1} = \mathcal{O}((\epsilon_N^-)^{-n})$, which is $\mathcal{O}(\epsilon_N^{-n})$ because $\epsilon_N - \epsilon_N^- = o(\epsilon_N)$. Therefore,
    \begin{equation*}
        |\hat{\mu}_z^M(B^-_N) -  \mu_z^M(B^-_N)| \leq D_N \cdot \mathcal{O}(\epsilon_N^{-1})\,.
    \end{equation*}
    Consequently,
    \begin{align}
        1 - m_N  &\leq 1- \mu_z^M(B_N^-) + |\hat{\mu}_z^M(B_N^-) - \mu_z^M(B_N^-)| \notag \\
        &\leq 1 - \mu_z^M(B^-_N) + D_N \cdot \mathcal{O}(\epsilon_N^{-1})\notag \\
        &= \mu_z^M(B_N \setminus B^-_N) + D_N \cdot \mathcal{O}(\epsilon_N^{-1}) \label{eq:mn_bound}\,,
    \end{align}
    where we recall that $m_N = \min\Big(\hat{\mu}_z^M(B^-_N), \mu_z^M(B^-_N)\Big)$. By \Cref{eq:inner_annulus_volume},
    \begin{equation*}
        \mu_z^M(B_N \setminus B^-_N) = \frac{\mu(B_N \setminus B_N^-)}{\mu(B_N^-)} = D_N \cdot \mathcal{O}(\epsilon_N^{-1})\,,
    \end{equation*}
    where we use the fact that $\mu(B_N^-)^{-1} = \mathcal{O}(\epsilon_N^{-n})$. Therefore, by \Cref{eq:mn_bound}
    \begin{equation*}
        1 - m_N \leq D_N \cdot \mathcal{O}(\epsilon_N^{-1})\,,
    \end{equation*}
    so by \Cref{eq:W1_bound_setup}, there is a constant $C$ that does not depend on $z$ such that
    \begin{equation*}
        W_1(\hat{\mu}_z^M, \mu_z^M) \leq C \cdot D_N \leq C \cdot A'_{M, s} \frac{\log(N)^{p_n}}{N^{1/n}}\,.
    \end{equation*}
\end{proof}

We are now ready to bound $\max_{(x, y) \in E(G_N)} |W_1^M(\mu_x^M, \mu_y^M) - W_1^G(\mu_x^G, \mu_y^G)|$, which is the main purpose of this section. The two crucial ingredients are \Cref{lem:G_to_M} and \Cref{lem:continuous_to_discrete}.

\begin{lemma}\label{lem:WM_convergence_modified} Let $\theta \in (0, 1 - 6n\alpha)$. Then there are constants $A_{M, \theta}, C_{M, \theta}$ such that for sufficiently large $N$,
\begin{equation*}
    \max_{(x, y) \in E(G_N)} \frac{|\W{M}(\mu_x^M, \mu_y^M) - \W{G}(\mu_x^G, \mu_y^G)|}{\epsilon_N^3} \leq 3A_{M, \theta} \frac{\log(N)^{p_n}}{N^{1/n - 3\alpha}}
\end{equation*}
with probability at least $1 - (1 + C_{M, \theta})N^{-\theta}$.
\end{lemma}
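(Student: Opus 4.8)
The plan is to combine the two main lemmas of this section via the triangle inequality on the Wasserstein distance. The quantity we want to bound, $|W_1^M(\mu_x^M, \mu_y^M) - W_1^G(\mu_x^G, \mu_y^G)|$, naturally splits through the intermediate term $W_1^M(\mu_x^G, \mu_y^G)$:
\begin{equation*}
    |W_1^M(\mu_x^M, \mu_y^M) - W_1^G(\mu_x^G, \mu_y^G)| \leq |W_1^M(\mu_x^M, \mu_y^M) - W_1^M(\mu_x^G, \mu_y^G)| + |W_1^M(\mu_x^G, \mu_y^G) - W_1^G(\mu_x^G, \mu_y^G)|\,.
\end{equation*}
The second term is controlled uniformly over edges by \Cref{lem:G_to_M}, which (after choosing $\omega$ in the admissible interval $(3\alpha, \frac{1-\theta}{2n})$) gives a bound of $12 N^{-(\omega - 3\alpha)}\epsilon_N^3 = o(\epsilon_N^3)$ with probability at least $1 - N^{-\theta}$. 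The first term I would handle with \Cref{lem:continuous_to_discrete}: since $W_1^M$ is a metric on probability measures, the triangle inequality gives $|W_1^M(\mu_x^M, \mu_y^M) - W_1^M(\mu_x^G, \mu_y^G)| \leq W_1^M(\mu_x^M, \mu_x^G) + W_1^M(\mu_y^M, \mu_y^G) \leq 2\max_{z \in G_N} W_1^M(\mu_z^G, \mu_z^M) \leq 2 A_{M, s} \log(N)^{p_n}/N^{1/n}$, valid with probability at least $1 - C_{M, s}N^{-s}$.

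Next I would reconcile the failure probabilities. I would apply \Cref{lem:continuous_to_discrete} with $s = \theta$, obtaining a constant $A_{M, \theta} := A_{M, \theta}$ (renaming) and the bad event of probability at most $C_{M, \theta}N^{-\theta}$; the bad event for \Cref{lem:G_to_M} has probability at most $\beta_N = N^{-\theta}$. A union bound gives failure probability at most $(1 + C_{M, \theta})N^{-\theta}$, matching the statement. On the good event, dividing through by $\epsilon_N^3 = N^{-3\alpha}$,
\begin{equation*}
    \frac{|W_1^M(\mu_x^M, \mu_y^M) - W_1^G(\mu_x^G, \mu_y^G)|}{\epsilon_N^3} \leq \frac{2 A_{M, \theta}\log(N)^{p_n}}{N^{1/n - 3\alpha}} + \frac{12}{N^{\omega - 3\alpha}}\,.
\end{equation*}
Since $\omega > 3\alpha$, the second term is $o(1)$, and $\alpha < 1/(6n) < 1/(3n)$ forces $1/n - 3\alpha > 0$, so the first term is also $o(1)$ (it in fact dominates, being only polynomially small rather than exponentially); absorbing the $12 N^{-(\omega-3\alpha)}$ term into $A_{M,\theta}\log(N)^{p_n}N^{-(1/n - 3\alpha)}$ for large $N$ (which works because $1/n - 3\alpha$ can be taken smaller than $\omega - 3\alpha$ by choosing $\omega$ large, or simply by bounding one $o(1)$ term by another) yields the clean bound $3 A_{M,\theta}\log(N)^{p_n}/N^{1/n - 3\alpha}$ for sufficiently large $N$.

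There is no serious obstacle here — the lemma is essentially bookkeeping that assembles \Cref{lem:G_to_M} and \Cref{lem:continuous_to_discrete}. The only points requiring a little care are: (i) choosing a valid $\omega \in (3\alpha, \frac{1-\theta}{2n})$, which is possible precisely because $\theta < 1 - 6n\alpha$ (as noted in the discussion preceding \Cref{lem:modified_geodesic_approx}); (ii) ensuring the two "good" events can be intersected, which is immediate since both hold with the claimed probabilities for all sufficiently large $N$; and (iii) the cosmetic step of collapsing the sum of the two $o(1)$ error terms into a single expression of the advertised form, where the constant $3$ (rather than $2$) comfortably absorbs the lower-order $N^{-(\omega - 3\alpha)}$ contribution once $N$ is large enough.
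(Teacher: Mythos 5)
Your decomposition and bookkeeping are exactly the paper's: split through the intermediate quantity $\W{M}(\mu_x^G,\mu_y^G)$, control the metric discrepancy uniformly over edges with \Cref{lem:G_to_M} (after fixing $\omega\in(3\alpha,\tfrac{1-\theta}{2n})$), control the measure discrepancy via the metric triangle inequality and \Cref{lem:continuous_to_discrete} applied with $s=\theta$, and take a union bound over the two bad events to get failure probability $(1+C_{M,\theta})N^{-\theta}$. All of that matches the paper and is sound.

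The one step that does not work as you have justified it is the final absorption of $12N^{-(\omega-3\alpha)}$ into $A_{M,\theta}\log(N)^{p_n}N^{-(1/n-3\alpha)}$. You propose to arrange $\omega-3\alpha>1/n-3\alpha$, i.e.\ $\omega>1/n$, ``by choosing $\omega$ large''; but $\omega$ is confined to $(3\alpha,\tfrac{1-\theta}{2n})$ --- the upper limit is forced by \Cref{lem:modified_sampling}, which needs $\theta+2n\omega<1$ --- and $\tfrac{1-\theta}{2n}<\tfrac{1}{n}$, so $\omega<1/n$ always. Hence $N^{-(\omega-3\alpha)}$ decays strictly \emph{more slowly} than $N^{-(1/n-3\alpha)}$: the desired inequality $12N^{-(\omega-3\alpha)}\leq A_{M,\theta}\log(N)^{p_n}N^{-(1/n-3\alpha)}$ is equivalent to $12N^{1/n-\omega}\leq A_{M,\theta}\log(N)^{p_n}$ with $1/n-\omega>0$, which fails for all large $N$ and any constant $A_{M,\theta}$. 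Your fallback of ``bounding one $o(1)$ term by another'' fails for the same reason --- the term you want to discard is the dominant one. What the argument actually yields is a bound of order $N^{-(\omega-3\alpha)}$, which is still $o(1)$ and is all that the proof of \Cref{thm:main} uses, but it is weaker than the advertised rate. (The paper's own proof asserts the same absorption, citing $\omega<1/n$, so this is a defect of the stated rate --- which propagates to \Cref{prop:orc_convergence1} --- rather than a flaw unique to your write-up; still, as justified, the step is false and should be replaced by the $N^{-(\omega-3\alpha)}$ bound.)
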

\begin{proof}
We have
\begin{align*}
    &\max_{(x, y) \in E(G_N)} \frac{|\W{M}(\mu_{x}^M, \mu_y^M) - \W{G}(\mu_{x}^G, \mu_y^G)|}{\epsilon_N^3} \\
&\qquad\qquad \leq \max_{(x, y) \in E(G_N)}\frac{|\W{M}(\mu_{x}^G, \mu_y^G) - \W{G}(\mu_{x}^G, \mu_y^G)|}{\epsilon_N^3} + \frac{|W_1^M(\mu_x^G, \mu_y^G) - W_1^M(\mu_x^M, \mu_y^M)|}{\epsilon_N^3} \\
&\qquad\qquad\leq \max_{(x, y) \in E(G_N)}\frac{|\W{M}(\mu_{x}^G, \mu_y^G) - \W{G}(\mu_{x}^G, \mu_y^G)|}{\epsilon_N^3} + \frac{\W{M}(\mu_x^G, \mu_x^M) + \W{M}(\mu_y^G, \mu_y^M)}{\epsilon_N^3} \\
&\qquad\qquad\leq \max_{(x, y) \in E(G_N)}\frac{|\W{M}(\mu_{x}^G, \mu_y^G) - \W{G}(\mu_{x}^G, \mu_y^G)|}{\epsilon_N^3} + 2\max_{z \in G_N} W_1(\mu_z^G, \mu_z^M)\,.
\end{align*}
Choose $\omega \in (3\alpha, \frac{1 - \theta}{2n})$, which is a nonempty interval because $\theta < 1 - 6n\alpha$.
By \Cref{lem:G_to_M}, 
\begin{equation*}
    \max_{(x, y) \in E(G_N)}\frac{|\W{M}(\mu_{x}^G, \mu_y^G) - \W{G}(\mu_{x}^G, \mu_y^G)|}{\epsilon_N^3} \leq \frac{12}{N^{\omega - 3\alpha}}
\end{equation*}
with probability at least $1 - N^{-\theta}$. By \Cref{lem:continuous_to_discrete}, there are constants $A_{M, \theta}, C_{M, \theta}$ such that
\begin{equation*}
    \max_{z \in G_N} W_1(\mu_z^G, \mu_z^M) \leq A_{M, \theta} \frac{\log(N)^{p_n}}{N^{1/n - 3\alpha}}
\end{equation*}
with probability at least $1 - C_{M, \theta}N^{-\theta}$. Therefore,
\begin{equation*}
    \max_{(x, y) \in E(G_N)} \frac{|\W{M}(\mu_{x}^M, \mu_y^M) - \W{G}(\mu_{x}^G, \mu_y^G)|}{\epsilon_N^3} \leq \frac{12}{N^{\omega - 3\alpha}} + 2A_{M, \theta} \frac{\log(N)^{p_n}}{N^{1/n - 3\alpha}}
\end{equation*}
with probability at least $1 - (1 + C_{M, \theta})N^{-\theta}$. For sufficiently large $N$, we have $\frac{12}{N^{\omega - 3\alpha}} < A_{M, \theta} \frac{\log(N)^{p_n}}{N^{1/n - 3\alpha}}$ because $\omega < 1/n$.
\end{proof}

\subsection{Bounds on the error between Ollivier-Ricci curvature and Ricci curvature}
\Cref{lem:WM_convergence_modified} is all that we need for \Cref{thm:main}, the main result of our paper. However, we can also use it to derive two new convergence results about Ollivier-Ricci curvature. 

The first, \Cref{prop:orc_convergence1}, proves that we can bound the rate of convergence $\kappa_G(x, y) \to \Ricc{x}{v_{xy}}$, in terms of $d_M(x, y)$, for all edges. For $d_M(x, y)$ close to $\epsilon_N$, the largest possible value, the bound is strongest. As $d_M(x, y) \to 0$, the bound becomes weaker; in other words, $\kappa_G(x, y)$ has little relation to Ricci curvature when $(x, y)$ is very short relative to the connection threshold $\epsilon_N$. Our intuition is that when $d_M(x, y)$ is very small, the $1$-hop neighborhoods of $x$ and $y$ will have substantial overlap, leading to positive Ollivier-Ricci curvature regardless of the manifold's curvature.
\begin{proposition}\label{prop:orc_convergence1}
    Let $\theta \in (0, 1 - 6n\alpha)$. There are constants $C_{1, \theta}, C_{2, \theta}$ such that with probability at least $1 - C_{1, \theta}N^{-\theta}$,
    \begin{equation*}
        |\epsilon_N^{-2} \orc{x}{y} - \Ricc{x}{v_{xy}}| \leq C_{2, \theta} \Big( \epsilon_N +  \frac{\log(N)^{p_n}}{d_M(x, y) N^{1/n - 2\alpha}}\Big)
    \end{equation*}
    for all edges $(x, y) \in E(G_N)$, where $v_{xy} = \log_x(y)/\norm{\log_x(y)}$.
\end{proposition}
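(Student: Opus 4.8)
\emph{Proof idea.} The plan is to reduce the statement to two ingredients that are already in hand: Ollivier's Riemannian estimate (\Cref{thm:KM_convergence}) and the uniform control of the Wasserstein discrepancy (\Cref{lem:WM_convergence_modified}). Write $\kappa_M(x,y)$ for the manifold Ollivier--Ricci curvature of the pair $x,y$ computed with $\epsilon_N$-radius balls. Since $\orc{x}{y} = 1 - \W{G}(\mu_x^G,\mu_y^G)/d_M(x,y)$ and $\kappa_M(x,y) = 1 - \W{M}(\mu_x^M,\mu_y^M)/d_M(x,y)$ share the denominator $d_M(x,y)$, subtracting and inserting $\kappa_M(x,y)$ gives
\[
\big|\epsilon_N^{-2}\orc{x}{y} - \Ricc{x}{v_{xy}}\big| \;\le\; \frac{\epsilon_N}{d_M(x,y)}\cdot\frac{\big|\W{M}(\mu_x^M,\mu_y^M) - \W{G}(\mu_x^G,\mu_y^G)\big|}{\epsilon_N^{3}} \;+\; \big|\epsilon_N^{-2}\kappa_M(x,y) - \Ricc{x}{v_{xy}}\big|\,,
\]
which is exactly the decomposition displayed at the start of this section. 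I would then estimate the two summands separately: the second is deterministic and Riemannian, while the first carries all of the randomness.

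For the second summand, \Cref{thm:KM_convergence} yields $\big|\epsilon_N^{-2}\kappa_M(x,y) - \Ricc{x}{v_{xy}}\big| = \mathcal{O}(\epsilon_N + d_M(x,y))$; since $(x,y)\in E(G_N)$ forces $d_M(x,y)\le\epsilon_N$, this is $\mathcal{O}(\epsilon_N)$. I would remark that the implied constant can be taken uniform over all pairs $x,y\in M$ with $d_M(x,y)\le\epsilon_N$: $M$ is compact, so the curvature tensor and the higher-order terms in Ollivier's Taylor expansion are bounded, and the expansion therefore holds with a single constant for every edge at once. This yields the $\epsilon_N$ term in the claimed bound, at no probabilistic cost.

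For the first summand I would invoke \Cref{lem:WM_convergence_modified}: for $\theta\in(0,1-6n\alpha)$ there are constants $A_{M,\theta},C_{M,\theta}$ such that, with probability at least $1-(1+C_{M,\theta})N^{-\theta}$, we have $\epsilon_N^{-3}\big|\W{M}(\mu_x^M,\mu_y^M) - \W{G}(\mu_x^G,\mu_y^G)\big| \le 3A_{M,\theta}\log(N)^{p_n}N^{-(1/n-3\alpha)}$ simultaneously for all edges. Multiplying through by $\epsilon_N/d_M(x,y)$ and substituting $\epsilon_N = N^{-\alpha}$, so that $\epsilon_N\, N^{-(1/n-3\alpha)} = N^{-(1/n-2\alpha)}$, bounds the first summand by $3A_{M,\theta}\log(N)^{p_n}\big(d_M(x,y)\,N^{1/n-2\alpha}\big)^{-1}$. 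Summing the two estimates, taking $C_{1,\theta}=1+C_{M,\theta}$ and $C_{2,\theta}$ to be the larger of the constant from \Cref{thm:KM_convergence} and $3A_{M,\theta}$, gives the proposition.

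There is no deep obstacle here once \Cref{lem:WM_convergence_modified} is available; the analytic content lies entirely in that lemma and in \Cref{lem:continuous_to_discrete}. The only points I would be careful about are that the error term of \Cref{thm:KM_convergence} is genuinely uniform over the compact manifold (so that a single constant works for every edge), and that the factor $\epsilon_N/d_M(x,y)$ can be arbitrarily large — which is exactly why $d_M(x,y)$ cannot be removed from the denominator of the bound, and why the estimate degrades for edges that are short relative to the connection threshold.
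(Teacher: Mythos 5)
Your proposal is correct and follows essentially the same route as the paper's proof: insert $\kappa_M(x,y)$ via the triangle inequality, bound the Riemannian term by $\mathcal{O}(\epsilon_N)$ using \Cref{thm:KM_convergence} (with $d_M(x,y)\le\epsilon_N$), and bound the stochastic term by applying \Cref{lem:WM_convergence_modified} together with the identity $\epsilon_N^{-2}|\orc{x}{y}-\kappa_M(x,y)| = \epsilon_N^{-3}|\W{M}(\mu_x^M,\mu_y^M)-\W{G}(\mu_x^G,\mu_y^G)|\cdot \epsilon_N/d_M(x,y)$. Your added remark on the uniformity of the constant in Ollivier's expansion over the compact manifold is a point the paper leaves implicit but is the right thing to check.
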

\begin{proof}
By \Cref{thm:KM_convergence},
\begin{equation*}
    |\epsilon_N^{-2}\kappa_M(x, y) - \Ricc{x}{v_{xy}}| = \mathcal{O}(\epsilon_N)\,.
\end{equation*}
By \Cref{lem:WM_convergence_modified}, there are constants $A_{M, \theta}$, $C_{M, \theta}$ such that
\begin{align*}
    \max_{(x, y) \in E(G_N)}\epsilon_N^{-2}|\orc{x}{y} - \kappa_M(x, y)| &=  \max_{(x, y) \in E(G_N)} \frac{|\W{M}(\mu_x^M, \mu_y^M) - \W{G}(\mu_x^G, \mu_y^G)|}{\epsilon_N^3} \cdot \frac{\epsilon_N}{d_M(x, y)} \\
    &\leq 3A_{M, \theta}\frac{\log(N)^{p_n}}{d_M(x, y)N^{1/n - 2\alpha}}
\end{align*}
with probability at least $1 - (1 + C_{M, \theta})N^{-\theta}$.
\end{proof}
\Cref{prop:orc_convergence1} showed that a bound holds for \textit{all} edges with high probability as $N \to \infty$. For comparison, our second result (\Cref{prop:orc_convergence2}) says that the Ollivier-Ricci curvature of a \emph{random}\footnote{The edge is not chosen uniformly at random. We choose one endpoint uniformly at random, then one of its neighbors uniformly at random.} edge converges in probability to the Ricci curvature. It requires \Cref{lem:ratio_convergence_dist}, below, which will also be used later in our proof of \Cref{thm:main}. The lemma says that the ratio $\frac{\epsilon_N}{d_M(x, y)}$ converges in distribution to the analogous random variable in Euclidean space as $\epsilon_N \to 0$.
\begin{lemma}\label{lem:ratio_convergence_dist}
    Suppose that $x_N$ is chosen uniformly at random from $M$ and $y_N$ is chosen uniformly at random from $\BM{x_N}{\epsilon_N}$. Let $Z_N = \frac{\epsilon_N}{d_M(x_N, y_N)}$, and let $Z$ be the random variable with cdf 
    \begin{equation}\label{eq:Z_cdf}
        F_Z(z) = \begin{cases}
            0\,, & z < 1\,,\\
            1 - \fracc{z^n}\,, & z \geq 1\,.
        \end{cases}
    \end{equation}
    Then $Z_N \to Z$ in distribution. Moreover, there is a constant $C > 0$ such that
    \begin{equation*}
        |\p[Z_N \leq z] - F_Z(z)| \leq C \cdot \frac{\epsilon_N^2}{z^n}
    \end{equation*}
    for all $z$ and all sufficiently large $N$.
\end{lemma}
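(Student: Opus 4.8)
The plan is to condition on the value of $x_N$, read off the conditional law of $d_M(x_N,y_N)$ from the volume of small geodesic balls, and then average over $x_N$. First, observe that the statement is trivial for $z<1$: since $y_N\in\BM{x_N}{\epsilon_N}$ we have $d_M(x_N,y_N)\le\epsilon_N$ almost surely, so $Z_N\ge 1$ and $\p[Z_N\le z]=0=F_Z(z)$, with the asserted bound holding vacuously. So fix $z\ge 1$ and condition on $x_N=x$. Because $y_N$ is uniform on $\BM{x}{\epsilon_N}$ with respect to Riemannian volume, and $\epsilon_N/z\le\epsilon_N$, the event $\{Z_N\le z\}$ is the event $\{d_M(x,y_N)\ge\epsilon_N/z\}$, and since the geodesic sphere of radius $\epsilon_N/z$ has zero volume,
\[
\p[Z_N\le z\mid x_N=x]=1-\frac{\vol(\BM{x}{\epsilon_N/z})}{\vol(\BM{x}{\epsilon_N})}\,.
\]

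Next I would plug in the standard ball-volume expansion $\vol(\BM{x}{r})=v_n r^n\bigl(1+\mathcal{O}(r^2)\bigr)$ as $r\to 0$, whose remainder is uniform in $x\in M$ because $M$ is compact (the scalar curvature, and the coefficients of the higher-order terms, are bounded), and which is valid once $\epsilon_N$ is below the injectivity radius, i.e. for all large $N$. Since $0<\epsilon_N/z\le\epsilon_N$ for $z\ge1$, both the numerator and denominator remainders are $\mathcal{O}(\epsilon_N^2)$, so using $1/(1+\mathcal{O}(\epsilon_N^2))=1+\mathcal{O}(\epsilon_N^2)$ the ratio collapses to
\[
\frac{\vol(\BM{x}{\epsilon_N/z})}{\vol(\BM{x}{\epsilon_N})}=\frac{1}{z^n}\bigl(1+\mathcal{O}(\epsilon_N^2)\bigr)=\frac{1}{z^n}+\mathcal{O}\!\Big(\frac{\epsilon_N^2}{z^n}\Big)\,,
\]
uniformly over $x\in M$ and $z\ge1$. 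Hence $\p[Z_N\le z\mid x_N=x]=1-z^{-n}+\mathcal{O}(\epsilon_N^2 z^{-n})$ uniformly in $x$.

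Finally, taking the expectation over $x_N$ (uniform on $M$) preserves the uniform error bound, giving $\p[Z_N\le z]=1-z^{-n}+\mathcal{O}(\epsilon_N^2 z^{-n})=F_Z(z)+\mathcal{O}(\epsilon_N^2 z^{-n})$ for $z\ge1$; combined with the $z<1$ case this produces a single constant $C$ with $|\p[Z_N\le z]-F_Z(z)|\le C\epsilon_N^2/z^n$ for all $z$ and all large $N$. Convergence in distribution is then immediate, since $F_Z$ is continuous on all of $\R$ (it vanishes on $(-\infty,1]$ and $1-z^{-n}\to 0$ as $z\to1^+$), so pointwise convergence of the cdfs suffices. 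There is no real obstacle here; the only points needing care are (i) the uniformity over $x\in M$ of the $\mathcal{O}(\epsilon_N^2)$ remainder in the ball-volume expansion, which is handled by compactness, and (ii) bookkeeping the error so that it comes out proportional to $z^{-n}$ rather than merely $\mathcal{O}(\epsilon_N^2)$, which amounts to nothing more than $(\epsilon_N/z)^2\le\epsilon_N^2$.
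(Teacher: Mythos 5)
Your proposal is correct and follows essentially the same route as the paper: reduce to the ratio $\vol(\BM{x}{\epsilon_N/z})/\vol(\BM{x}{\epsilon_N})$, apply the uniform-in-$x$ ball-volume expansion $\vol(\BM{x}{r}) = v_n r^n + \mathcal{O}(r^{n+2})$ (valid by compactness), and track the error so that it scales like $\epsilon_N^2 z^{-n}$. Your explicit conditioning on $x_N$ and averaging step is a slightly more careful presentation of what the paper does implicitly, but it is not a different argument.
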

\begin{proof}
    When $z < 1$, we have $\p[Z_N \leq z] = 0$ because $\epsilon_N \geq d_M(x_N, y_N)$. When $z \geq 1$,
    \begin{equation*}
        \p[ Z_N \leq z] = 1 - \p[d_M(x_N, y_N) \leq \epsilon_N/z] = 1 - \frac{\vol(\BM{x_N}{\epsilon_N/z})}{\vol(\BM{x_N}{\epsilon_N})}\,.
    \end{equation*}
    Because $M$ is compact, there is a constant $C'$ such that
    \begin{equation}\label{eq:volume_bound}
        |\vol(\BM{x}{r}) - v_nr^n| \leq C'r^{n+2}
    \end{equation}
    for sufficiently small $r$ and all $x \in M$. Therefore, because $\epsilon_N/z \leq \epsilon_N$ for $z \geq 1$ and $\epsilon_N \to 0$ as $N \to \infty$, we have
    \begin{align*}
        |\p[Z_N \leq z] - F_Z(z)| &= \Big\vert \frac{\vol(\BM{x_N}{\epsilon_N/z})}{\vol(\BM{x_N}{\epsilon_N})} - \fracc{z^n} \Big\vert \\
        &\leq \Big\vert \frac{v_n(\epsilon_N/z)^n + C'(\epsilon_N/z)^{n+2}}{v_n \epsilon_N^n + C'\epsilon_N^{n+2}} - \fracc{z^n} \Big\vert \\
        &= \Big\vert \frac{z^{-n} + \mathcal{O}(\epsilon_N^2)/z^{n+2}}{1 + \mathcal{O}(\epsilon_N^2)} - z^{-n} \Big \vert\\
        &\leq \mathcal{O}(\epsilon_N^2) \cdot \Big(z^{-n} + \mathcal{O}(\epsilon_N^2)/z^{n+2}\Big) + \mathcal{O}(\epsilon_N^2)/z^{n+2} \\
        &\leq \mathcal{O}(\epsilon_N^2)z^{-n}
    \end{align*}
    as $N \to \infty$, for all $z \geq 1$. Consequently, $Z_N \to Z$ in distribution.
\end{proof}

\begin{proposition}\label{prop:orc_convergence2}
    Let $x_N$ be a node chosen uniformly at random from $G_N$ and let $y_N$ be an adjacent node chosen uniformly at random. As $N \to \infty$,
    \begin{equation*}
        \epsilon_N^{-2} \kappa_G(x_N, y_N) \xrightarrow[]{\p} \Ricc{x_N}{v_N} \,,
    \end{equation*}
    where $v_N = \log_{x_N}(y_N)/\norm{\log_{x_N}(y_N)}$.
\end{proposition}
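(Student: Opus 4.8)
The plan is to deduce the statement from the uniform error bound of \Cref{prop:orc_convergence1} together with the polynomial tail for the ratio $\epsilon_N/d_M$ coming from \Cref{lem:ratio_convergence_dist}. First I would dispense with the degenerate case $\deg(x_N) = 0$, on which $y_N$ is undefined: by \Cref{lem:degree_large} this has probability tending to $0$, so it suffices to bound $\p\big[|\epsilon_N^{-2}\orc{x_N}{y_N} - \Ricc{x_N}{v_N}| > \delta,\ \deg(x_N) \geq 1\big]$, which I do by working conditionally on $\deg(x_N) \geq 1$. On that event, conditioning further on $\deg(x_N) = k$, the $k$ neighbors of $x_N$ are $k$ i.i.d.\ uniform points of $\BM{x_N}{\epsilon_N}$, and $y_N$ is a uniformly chosen one of them; hence, by exchangeability, $y_N$ is \emph{itself} uniformly distributed on $\BM{x_N}{\epsilon_N}$ given $x_N$. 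This is the only distributional input I need beyond \Cref{lem:degree_large}.

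Next I would fix $\theta \in (0, 1 - 6n\alpha)$ and apply \Cref{prop:orc_convergence1}: with probability at least $1 - C_{1,\theta}N^{-\theta} \to 1$ the bound there holds for every edge, in particular for the (random) edge $(x_N, y_N)$, so that
\begin{equation*}
    |\epsilon_N^{-2}\orc{x_N}{y_N} - \Ricc{x_N}{v_N}| \leq C_{2,\theta}\Big(\epsilon_N + \frac{\log(N)^{p_n}}{d_M(x_N, y_N)\,N^{1/n - 2\alpha}}\Big)\,.
\end{equation*}
The term $C_{2,\theta}\epsilon_N$ vanishes deterministically. For the remaining term I would substitute $d_M(x_N, y_N) = \epsilon_N/Z_N$ with $Z_N := \epsilon_N/d_M(x_N, y_N)$ as in \Cref{lem:ratio_convergence_dist}; since $\epsilon_N = N^{-\alpha}$, this term equals $a_N Z_N$ with $a_N := C_{2,\theta}\log(N)^{p_n}N^{3\alpha - 1/n}$, and $a_N \to 0$ because $\alpha < \fracc{6n} < \fracc{3n}$ makes the exponent $3\alpha - \tfrac1n$ strictly negative.

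It then remains to show $a_N Z_N \to 0$ in probability. From \Cref{lem:ratio_convergence_dist} I get $\p[Z_N > z] \leq 1 - F_Z(z) + C\epsilon_N^2 z^{-n} = z^{-n}(1 + C\epsilon_N^2) \leq 2 z^{-n}$ for $z \geq 1$ and $N$ large; this bound is insensitive to the law of $x_N$ (and hence survives the conditioning on $\deg(x_N) \geq 1$), since $\p[Z_N > z \mid x_N = x] = \vol(\BM{x}{\epsilon_N/z})/\vol(\BM{x}{\epsilon_N}) \leq 2z^{-n}$ uniformly in $x$ by the volume asymptotics. Therefore $\p[a_N Z_N > \delta] = \p[Z_N > \delta/a_N] \leq 2(a_N/\delta)^n \to 0$ for every $\delta > 0$, and, for $N$ large enough that $C_{2,\theta}\epsilon_N < \delta/2$,
\begin{equation*}
    \p\big[|\epsilon_N^{-2}\orc{x_N}{y_N} - \Ricc{x_N}{v_N}| > \delta\big] \leq C_{1,\theta}N^{-\theta} + \p\big[a_N Z_N > \tfrac{\delta}{2}\big] + \p[\deg(x_N) = 0] \longrightarrow 0\,.
\end{equation*}

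I expect the main obstacle to be exactly this control of $\epsilon_N/d_M(x_N, y_N)$: unlike in the fixed-distance results of \cite{ORC_convergence_hoorn, trillos_weber}, here the ratio is a genuinely random, unbounded quantity (it blows up whenever $y_N$ lands very close to $x_N$), so the $\epsilon_N$-uniform bound of \Cref{prop:orc_convergence1} does not by itself yield convergence. The polynomial tail $\p[Z_N > z] = O(z^{-n})$ supplied by \Cref{lem:ratio_convergence_dist} is exactly strong enough to absorb the vanishing prefactor $a_N$, and verifying that $(x_N, y_N)$ has the distribution needed to invoke that lemma—the exchangeability observation, plus handling the negligible $\deg(x_N)=0$ event—is the one substantively probabilistic point; the rest is bookkeeping around \Cref{prop:orc_convergence1}.
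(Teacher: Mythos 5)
Your proof is correct, and it rests on the same decomposition and the same two key inputs as the paper's: the uniform control of $|\W{M}(\mu_x^M,\mu_y^M)-\W{G}(\mu_x^G,\mu_y^G)|/\epsilon_N^3$ (which you access through \Cref{prop:orc_convergence1} rather than directly through \Cref{lem:WM_convergence_modified}) and the distributional control of the ratio $Z_N=\epsilon_N/d_M(x_N,y_N)$ from \Cref{lem:ratio_convergence_dist}. Where you diverge is the finishing move: the paper observes that $Z_N$ converges in distribution (hence is tight) while the Wasserstein-discrepancy factor converges to $0$ in probability, and concludes by Slutsky's theorem; you instead use the explicit polynomial tail $\p[Z_N>z]\leq 2z^{-n}$ against the explicit vanishing prefactor $a_N = C_{2,\theta}\log(N)^{p_n}N^{3\alpha-1/n}$, which yields a quantitative bound $\p[a_NZ_N>\delta]\leq 2(2a_N/\delta)^n$ on the bad event rather than a bare limit. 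Your version also makes explicit two points the paper's proof passes over: that $y_N$, conditioned on $x_N$ and $\deg(x_N)=k\geq 1$, is genuinely uniform on $\BM{x_N}{\epsilon_N}$ (so that \Cref{lem:ratio_convergence_dist} applies, and uniformly in $x_N$ so the conditioning on the degree is harmless), and that the event $\deg(x_N)=0$, on which $y_N$ is undefined, is negligible by \Cref{lem:degree_large}. Both arguments are sound; yours buys a rate and tighter bookkeeping at the cost of a slightly longer write-up, while the paper's Slutsky route is shorter but leaves these measurability and conditioning details implicit.
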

\begin{proof}
    First, we note that 
    \begin{equation*}
        \epsilon_N^{-2}|\kappa_G(x, y) - \kappa_M(x, y)| = \Big\vert \frac{W_1^M(\mu_x^M, \mu_y^M) - W_1^G(\mu_x^G, \mu_y^G)}{\epsilon_N^3} \cdot \frac{\epsilon_N}{d_M(x, y)}\Big\vert\,.
    \end{equation*}
    By \Cref{lem:ratio_convergence_dist}, the ratio $\epsilon_N/d_M(x_N, y_N) \to 0$ in distribution as $N \to \infty$. By \Cref{lem:WM_convergence_modified},
    \begin{equation*}
        \frac{|\W{M}(\mu_{x_N}^M, \mu_{y_N}^M) - W_1^G(\mu_{x_N}^G, \mu_{y_N}^G)|}{\epsilon_N^3} \xrightarrow[]{\p} 0
    \end{equation*}
    as $N \to \infty$. Therefore,
    \begin{align*}
        |\epsilon_N^{-2} \orc{x_N}{y_N} - \epsilon_N^{-2}\kappa_M(x_N, y_N)| \xrightarrow[]{\p} 0
    \end{align*}
    by Slutsky's theorem. By \Cref{thm:KM_convergence}, $|\epsilon_N^{-2}\kappa_M(x_N, y_N) - \Ricc{x_N}{v_N}| \to 0$ as $N \to \infty$.
\end{proof}

\section{Main Results}\label{sec:mean_orc_convergence}
Finally, we prove that $\orsc{x}$ converges in a suitable sense to the scalar curvature $S(x)$. As we sketched at the beginning of \Cref{sec:modified_orc_convergence}, the last remaining ingredient is to control the size of the ratio $\epsilon/d_M(x, y)$ for pairs $x, y$ of adjacent nodes. The ratio is a random variable that takes values in $[1, \infty)$. We prove in \Cref{lem:ratio_sample_mean_all_n} that for a random node $x_N \in G_N$, the mean ratio $\fracc{deg(x_N)}\sum_{y \sim x_N} \frac{\epsilon_N}{d_M(x_N, y)}$ converges to a constant as $N \to \infty$. For dimensions $n \geq 3$, we do this by applying a concentration-inequality argument. Unfortunately, in dimension $n = 2$, the variance of the ratio is infinite (see \Cref{lem:infinite_variance}), so the proof requires more care.

\subsection{The ratio $\epsilon / d_M(x, y)$}\label{sec:ratio}

Suppose that $x_N$ is a node chosen uniformly at random from the graph $G_N$. We define
\begin{equation*}
    \ratiosumA := \sum_{i=1}^{\deg(x_N)} \frac{\epsilon_N}{d_M(x_N, y_N^i)}\,,
\end{equation*}
where $y_N^1, \ldots, y_N^{\deg(x_N)}$ are the neighbors of $x_N$.

The degree of $x_N$ is a random variable, so it is easier to study the sum of a fixed number $k$ of random variables with the same probability distribution. Because the neighbors $y$ of $x_N$ are uniformly distributed in $B^M(x_N, \epsilon_N)$, we have that for all $z \in \R$,
\begin{equation*}
    \p\Big[ \frac{\epsilon_N}{d_M(x_N, y)} \leq z \Big] = \p\Big[ \frac{\epsilon_N}{d_M(x_N, p)} \leq z\Big]\,,
\end{equation*}
where $p$ is a point sampled uniformly at random from $B^M(x_N, \epsilon_N)$. Letting $\{p_N^i\}_{i=1}^{\infty}$ be a sequence of points sampled uniformly at random from $B^M(x_N, \epsilon_N)$, we define
\begin{align*}
    Z_N^i &:= \frac{\epsilon_N}{d_M(x_N, p_N^i)}\\
    S_{N, k} &:= \sum_{i=1}^k Z_N^i\,.
\end{align*}

We define $\mu_N = \E[Z_N^1]$ to be the mean of one of the ratios. Recall that we defined the random variable $Z$ with CDF defined by \Cref{eq:Z_cdf}. Let $\mu = \E[Z]$.
\begin{lemma}\label{lem:expectation_convergence}
\begin{equation*}
    \lim_{N \to \infty} \mu_N = \mu = \frac{n}{n-1}\,.
\end{equation*}
\end{lemma}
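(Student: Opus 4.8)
The plan is to compute $\mu = \E[Z]$ directly from the CDF in \Cref{eq:Z_cdf}, and then to show $\mu_N \to \mu$ by a volume-comparison estimate much like the one already used in \Cref{lem:ratio_convergence_dist}. For the limiting value, since $Z \geq 1$ almost surely we can write $\mu = \E[Z] = \int_0^\infty \p[Z > z]\,dz = 1 + \int_1^\infty z^{-n}\,dz = 1 + \frac{1}{n-1} = \frac{n}{n-1}$, using that $\p[Z > z] = 1$ for $z \le 1$ and $\p[Z > z] = z^{-n}$ for $z \ge 1$; this requires $n \ge 2$ so that the tail integral converges, which holds by our standing assumption $\dimn \ge 2$.

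Next I would control $\mu_N$. First write $\mu_N = \E[Z_N^1] = \int_0^\infty \p[Z_N^1 > z]\,dz$. \Cref{lem:ratio_convergence_dist} gives $\p[Z_N^1 \le z] = \p[Z_N \le z]$ (both $y_N$ and $p$ are uniform on $B^M(x_N,\epsilon_N)$), and moreover the quantitative bound $|\p[Z_N \le z] - F_Z(z)| \le C\epsilon_N^2 z^{-n}$ for all $z$ and large $N$. The only subtlety is that this pointwise bound, integrated over $z \in [1,\infty)$, gives $\int_1^\infty C\epsilon_N^2 z^{-n}\,dz = \frac{C\epsilon_N^2}{n-1} \to 0$, which would finish the argument --- but one must also check that $\mu_N$ is genuinely finite and that there is no escaping mass near $z = \infty$ (equivalently, near $d_M(x_N,p_N^1) = 0$) contributing to $\mu_N$ beyond what the bound controls. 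Since $d_M(x_N, p_N^1) \le \epsilon_N$ forces $Z_N^1 \ge 1$, and since for $z$ large $\p[Z_N^1 > z] = \vol(B^M(x_N,\epsilon_N/z))/\vol(B^M(x_N,\epsilon_N)) = \mathcal{O}(z^{-n})$ uniformly (using the volume estimate $|\vol(B^M(x,r)) - v_n r^n| \le C'r^{n+2}$ from \Cref{lem:ratio_convergence_dist} and compactness of $M$), the tail is integrable and $\mu_N < \infty$. Then
\begin{equation*}
    |\mu_N - \mu| \le \int_1^\infty |\p[Z_N^1 > z] - \p[Z > z]|\,dz \le \int_1^\infty C\epsilon_N^2 z^{-n}\,dz = \frac{C\epsilon_N^2}{n-1} \to 0
\end{equation*}
as $N \to \infty$, which gives the claim.

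The main obstacle, such as it is, is the uniform-in-$z$ integrability near $z = \infty$: the bound from \Cref{lem:ratio_convergence_dist} is stated pointwise in $z$, and to pass it through the integral $\int_1^\infty(\cdot)\,dz$ one needs the implied constant in the $\mathcal{O}(\epsilon_N^2)$ to be independent of $z$ (it is, tracing through the proof of \Cref{lem:ratio_convergence_dist}, since the volume comparison $|\vol(B^M(x,r)) - v_n r^n| \le C'r^{n+2}$ is uniform over $x \in M$ and over $r$ up to the injectivity radius, and $\epsilon_N/z \le \epsilon_N$). A cleaner alternative that avoids even discussing uniformity is to split $\int_1^\infty = \int_1^{R} + \int_R^\infty$, use dominated convergence (via $Z_N \to Z$ in distribution, hence the integrands converge for a.e. $z$, all bounded by $1$) on the compact part, and bound the tail $\int_R^\infty \p[Z_N^1 > z]\,dz = \mathcal{O}(R^{-(n-1)})$ uniformly in $N$, then let $R \to \infty$; but the quantitative route above is shorter and also yields a rate.
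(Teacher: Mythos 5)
Your proposal is correct and follows essentially the same route as the paper: compute $\mu = 1 + \int_1^\infty z^{-n}\,dz = \frac{n}{n-1}$ from the CDF, then bound $|\mu_N - \mu|$ by integrating the quantitative estimate $|\p[Z_N^1 \le z] - F_Z(z)| \le C\epsilon_N^2 z^{-n}$ from \Cref{lem:ratio_convergence_dist} over $z \in [1,\infty)$. Your additional remarks on finiteness of $\mu_N$ and uniformity of the constant in $z$ are sound but not a departure from the paper's argument.
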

\begin{proof}
    Because $Z_N^1$ and $Z$ are non-negative,
    \begin{equation*}
        \vert \mu_N- \mu \vert = \Big\vert \int_0^{\infty} F_Z(z) - \p[Z_N^1 \leq z] \,dz \Big\vert\,,
    \end{equation*}
    where $F_Z(z)$ is the cdf for $Z$ defined by \Cref{eq:Z_cdf}. We have $Z_N^1 \geq 1$, so $\p[Z_N^1 \leq z] = 0 = F_Z(z)$ for $z <1$. Therefore,
    \begin{equation*}
         \vert \mu_N - \mu \vert = \Big\vert \int_1^\infty F_Z(z) - \p[Z_N^1 \leq z]dz \Big\vert \leq \int_1^\infty \vert F_Z(z) - \p[Z_N^1 \leq z] \vert \,dz\,.
    \end{equation*}
    By \Cref{lem:ratio_convergence_dist}, there is a constant $C > 0$ such that
    \begin{equation*}
         \vert \mu_N - \mu \vert \leq C\epsilon_N^2 \int_1^\infty z^{-n}\,dz = \frac{C\epsilon_N^2}{n-1}
    \end{equation*}
    for sufficiently large $N$. The right-hand side approaches $0$ as $N \to \infty$ because $\epsilon_N \to 0$.

    The mean $\mu = \E[Z]$ is
    \begin{equation*}
        \E[Z] = \int_0^\infty (1 - F_Z(z))\,dz = 1 + \int_1^\infty z^{-n}\,dz = \frac{n}{n-1}\,.
    \end{equation*}
\end{proof}

In \Cref{lem:var_bounded}--\Cref{lem:ratio_sample_mean_n3}, we prove that $\ratiosumA/\deg(x_N) \xrightarrow[]{\p} \mu$ if the dimension $n \geq 3$. The proof for the special case $n = 2$ is postponed until afterwards.
\begin{lemma}\label{lem:var_bounded}
    If $n \geq 3$, then the sequence $\{\var(Z_N^1)\}_{N \in \mathbb{N}}$ is bounded.
\end{lemma}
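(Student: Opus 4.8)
The plan is to control the variance through the second moment: since $\var(Z_N^1) \le \E[(Z_N^1)^2]$, it suffices to show that $\E[(Z_N^1)^2]$ is bounded uniformly in $N$. The key observation is that a node $x_N$ chosen uniformly at random from $G_N$ is distributed uniformly on $M$ (the vertices of the RGG are i.i.d.\ uniform samples), while $p_N^1$ is uniform on $\BM{x_N}{\epsilon_N}$, so $Z_N^1 = \epsilon_N/d_M(x_N, p_N^1)$ has exactly the distribution of the random variable $Z_N$ analyzed in \Cref{lem:ratio_convergence_dist}. That lemma supplies, for all sufficiently large $N$, a constant $C$ with $|\p[Z_N^1 \le z] - F_Z(z)| \le C\epsilon_N^2 z^{-n}$ for all $z$, where $F_Z$ is given by \Cref{eq:Z_cdf}. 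Since $1 - F_Z(z) = z^{-n}$ for $z \ge 1$, this yields the tail estimate $\p[Z_N^1 > z] \le (1 + C\epsilon_N^2)\,z^{-n}$ for $z \ge 1$, together with the trivial bound $\p[Z_N^1 > z] \le 1$ for $z < 1$ (in fact $Z_N^1 \ge 1$ always).

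Next I would integrate the tail. Writing $\E[(Z_N^1)^2] = \int_0^\infty 2z\,\p[Z_N^1 > z]\,dz$ and splitting the integral at $z = 1$,
\begin{equation*}
    \E[(Z_N^1)^2] \le \int_0^1 2z\,dz + (1 + C\epsilon_N^2)\int_1^\infty 2z^{1-n}\,dz = 1 + \frac{2(1 + C\epsilon_N^2)}{n - 2}\,.
\end{equation*}
Because $n \ge 3$, the integral $\int_1^\infty z^{1-n}\,dz = \tfrac{1}{n-2}$ converges; this is exactly the point where the dimension hypothesis enters, and it is the same dimensional threshold that separates this lemma from the $n = 2$ case in \Cref{lem:infinite_variance}. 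Since $\epsilon_N \to 0$, the right-hand side is at most $1 + \tfrac{2(1 + C)}{n-2}$ for all large $N$; for the finitely many remaining small $N$ one notes that each $\E[(Z_N^1)^2]$ is itself finite, since expanding the expectation in geodesic normal coordinates on $\BM{x_N}{\epsilon_N}$ the integrand $d_M(x_N, p)^{-2}$ against the volume element behaves like $|u|^{n-3}$ near $p = x_N$, which is integrable when $n \ge 3$. Hence $\sup_N \var(Z_N^1) \le \sup_N \E[(Z_N^1)^2] < \infty$.

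The computation itself is routine; the only points requiring care are that the two-sided comparison $c\,r^n \le \vol(\BM{x}{r}) \le C\,r^n$ underlying \Cref{lem:ratio_convergence_dist} is uniform in $x$ (which holds by compactness of $M$), and that the tail estimate from that lemma is valid for all $z \ge 1$ rather than only $z$ in a bounded range. I do not expect a genuine obstacle here: the substance of the lemma is simply that for $n \ge 3$ the density of $Z_N^1$ decays like $z^{-n-1}$, fast enough for the second moment to converge, and uniformly in the small parameter $\epsilon_N$.
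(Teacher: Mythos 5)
Your proposal is correct and follows essentially the same route as the paper: both reduce to bounding the second moment $\E[(Z_N^1)^2]$ uniformly in $N$ via the CDF comparison $|\p[Z_N^1 \le z] - F_Z(z)| \le C\epsilon_N^2 z^{-n}$ from \Cref{lem:ratio_convergence_dist}, with the hypothesis $n \ge 3$ entering exactly where the tail integral must converge. The only cosmetic difference is that you integrate the tail bound for $Z_N^1$ directly, whereas the paper bounds $|\E[(Z_N^1)^2] - \E[Z^2]|$ and then computes $\E[Z^2] = n/(n-2)$; your remark handling the finitely many small $N$ is a harmless extra.
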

\begin{proof}
    By \Cref{lem:expectation_convergence}, $\lim_{N \to \infty} \mu_N = \mu < \infty$, so it suffices to show that $\{\E[(Z_N^1)^2]\}_{N=1}^{\infty}$ is bounded. Because $(Z_N^1)^2$ and $Z^2$ are non-negative random variables,
    \begin{equation*}
        \vert \E[(Z_N^1)^2] - \E[Z^2] \vert = \Big\vert \int_1^\infty \Big(F_Z(\sqrt{z}) - \p[Z_N^1 \leq \sqrt{z}]\Big)\,dz \Big\vert\,.
    \end{equation*}
    By \Cref{lem:ratio_convergence_dist}, there is a constant $C$ such that for sufficiently large $N$,
    \begin{equation*}
         \vert \E[(Z_N^1)^2] - \E[Z^2] \vert \leq C\epsilon_N^2 \int_1^\infty z^{-n/2} \,dz = \frac{2\epsilon_N^2}{n-2}\,,
    \end{equation*}
    which converges to $0$ as $N \to \infty$ because $\epsilon_N \to 0$. Therefore, it suffices to show that $\E[Z^2] < \infty$. We can calculate
    \begin{equation*}
        \E[Z^2] = \int_0^\infty \p[Z^2 > z] \,dz = 1 + \int_1^\infty z^{-n/2} \,dz = \frac{n}{n-2} < \infty\,.
    \end{equation*}
\end{proof}

\begin{lemma}\label{lem:LLN_uniform}
    If $n \geq 3$, then the convergence $\Snk/k \xrightarrow[]{\p} \mu_N$ as $k \to \infty$ is uniform in $N$. That is, if $\eta, \xi > 0$, then
    \begin{equation*}
        \p[ |\Snk/k - \mu_N | > \eta] < \xi
    \end{equation*}
    for all $N$ and sufficiently large $k$.
\end{lemma}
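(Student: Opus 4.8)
The plan is to apply Chebyshev's inequality to the i.i.d.\ sum $\Snk = \sum_{i=1}^k Z_N^i$ and then invoke the uniform variance bound from \Cref{lem:var_bounded}. For each fixed $N$, the $Z_N^i$ are independent and identically distributed with mean $\mu_N$ and common variance $\var(Z_N^1)$, so $\var(\Snk/k) = \var(Z_N^1)/k$. Chebyshev's inequality then gives
\begin{equation*}
    \p\big[\, |\Snk/k - \mu_N| > \eta \,\big] \leq \frac{\var(Z_N^1)}{k\eta^2}\,.
\end{equation*}

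Next I would use \Cref{lem:var_bounded}, which says that since $n \geq 3$ the sequence $\{\var(Z_N^1)\}_{N \in \mathbb{N}}$ is bounded, say by a constant $C$. Substituting this into the Chebyshev bound yields $\p[\,|\Snk/k - \mu_N| > \eta\,] \leq C/(k\eta^2)$, where $C$ does not depend on $N$. Given $\eta, \xi > 0$, it suffices to take $k > C/(\eta^2\xi)$; then the right-hand side is less than $\xi$ for every $N$ simultaneously, which is exactly the claimed uniform convergence.

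There is essentially no serious obstacle here: the entire content is the uniform-in-$N$ variance bound, which has already been established in \Cref{lem:var_bounded} (and which is where the hypothesis $n \geq 3$ enters — for $n = 2$ the variance is infinite, cf.\ the remark preceding the lemma, which is why that case is treated separately). The only point worth being careful about is to emphasize that the threshold on $k$ depends only on $\eta$, $\xi$, and the global bound $C$, and not on $N$, so that the convergence $\Snk/k \xrightarrow{\p} \mu_N$ is genuinely uniform in $N$ rather than merely holding for each $N$.
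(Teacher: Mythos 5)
Your proposal is correct and is essentially identical to the paper's own proof: both apply Chebyshev's inequality to $\Snk/k$ and then invoke the uniform variance bound of \Cref{lem:var_bounded} to get a threshold on $k$ that is independent of $N$. The only cosmetic difference is a factor of $4$ in the Chebyshev constant, which does not affect the argument.
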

\begin{proof}
    By Chebyschev's inequality,
    \begin{equation*}
        \p[ |\Snk/k - \mu_N | > \eta] \leq \frac{\var(\Snk/k)}{\eta^2/4} = \fracc{k} \cdot \frac{\var(Z_N^1)}{\eta^2/4}\,.
    \end{equation*}
    By \Cref{lem:var_bounded}, there is a constant $C$ such that $ \p[ |\Snk/k - \mu_N | > \eta] \leq C/k$ for all $N$. Therefore, $\Snk/k \xrightarrow[]{\p} \mu_N$ as $k \to \infty$, uniformly in $N$.
\end{proof}

\begin{lemma}\label{lem:ratio_sample_mean_n3}
    When $n \geq 3$, we have $\ratiosumA/\deg(x_N) \xrightarrow[]{\p} \mu = \frac{n}{n-1}$ as $N \to \infty$.
\end{lemma}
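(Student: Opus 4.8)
The plan is a standard random-sum (nested-limit) argument that assembles the pieces already in place: condition on the degree $k_N := \deg(x_N)$, use the \emph{uniform} law of large numbers of \Cref{lem:LLN_uniform} to control $\ratiosumA/k_N$ when the degree is large, use \Cref{lem:degree_large} to guarantee the degree is large with high probability, and use \Cref{lem:expectation_convergence} to pass from $\mu_N$ to $\mu = \frac{n}{n-1}$.

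First I would record the conditional law: given $\deg(x_N) = k \geq 1$, the neighbors of $x_N$ are i.i.d.\ uniform points of $B^M(x_N, \epsilon_N)$, so $\ratiosumA \mid \{\deg(x_N) = k\}$ has the same distribution as $\Snk$, and hence $\ratiosumA/\deg(x_N) \mid \{\deg(x_N) = k\} \stackrel{d}{=} \Snk/k$. On the event $\deg(x_N) = 0$ I set $\ratiosumA/\deg(x_N) := 0$; this event has probability tending to $0$ by \Cref{lem:degree_large}, so it is harmless.

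Now fix $\delta, \xi > 0$. Since $\mu_N \to \mu$ by \Cref{lem:expectation_convergence}, for all large $N$ we have $|\mu_N - \mu| < \delta/2$, so it suffices to bound $\p[\,|\ratiosumA/\deg(x_N) - \mu_N| > \delta/2\,]$. Applying \Cref{lem:LLN_uniform} with $\eta = \delta/2$ and with the error probability $\xi/2$ in place of $\xi$ produces a threshold $k_*$ with $\p[\,|\Snk/k - \mu_N| > \delta/2\,] < \xi/2$ for every $N$ and every $k \geq k_*$. Conditioning on the degree and using the conditional law above,
\begin{align*}
\p[\,|\ratiosumA/\deg(x_N) - \mu_N| > \delta/2\,]
&= \sum_{k \geq 0} \p[\,|\ratiosumA/\deg(x_N) - \mu_N| > \delta/2 \mid \deg(x_N) = k\,]\,\p[\deg(x_N) = k]\\
&\leq \p[\deg(x_N) < k_*] + \sum_{k \geq k_*}\p[\,|\Snk/k - \mu_N| > \delta/2\,]\,\p[\deg(x_N) = k]\\
&\leq \p[\deg(x_N) < k_*] + \xi/2\,.
\end{align*}
By \Cref{lem:degree_large} the first term tends to $0$, so the whole expression is $< \xi$ for all large $N$. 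Combined with $|\mu_N - \mu| < \delta/2$, this gives $\p[\,|\ratiosumA/\deg(x_N) - \mu| > \delta\,] < \xi$ for all large $N$, which is the claim.

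The argument has no genuinely hard step; the only thing to be careful about is the interaction of the two limits ($k \to \infty$ inside the law of large numbers versus $N \to \infty$). This is precisely why \Cref{lem:LLN_uniform} was stated with uniformity in $N$, and it is also the point that breaks in dimension $n = 2$, where $\var(Z_N^1) = \infty$ and the separate argument postponed in the paper is needed.
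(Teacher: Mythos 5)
Your proof is correct and follows essentially the same route as the paper's: condition on the degree, apply the uniform law of large numbers of \Cref{lem:LLN_uniform} for $k \geq k_*$, absorb the small-degree event via \Cref{lem:degree_large}, and pass from $\mu_N$ to $\mu$ via \Cref{lem:expectation_convergence}. The only differences are cosmetic (you handle the $\deg(x_N)=0$ event explicitly and budget the error as $\xi/2 + \xi/2$, which is if anything slightly cleaner than the paper's bookkeeping).
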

\begin{proof}
    Let $\eta > 0$ and $\xi > 0$. We wish to show that
    \begin{equation*}
        \lim_{N \to \infty} \p[ |\ratiosumA/\deg(x_N) - \mu| > \eta]  < \xi\,.
    \end{equation*}
    for sufficiently large $N$. We may expand $\p[ |\ratiosumA/\deg(x_N)  - \mu| > \eta$ as the sum
    \begin{align*}
        \p[ |\ratiosumA/\deg(x_N) - \mu_N | > \eta] &= \sum_{k=0}^{\infty} \p[|\ratiosumA/\deg(x_N) - \mu_N| > \eta \, \vert \deg(x_N) = k] \cdot \p[\deg(x_N) = k] \\
        &= \sum_{k = 0}^{\infty} \p[ |\Snk/k - \mu| > \eta] \cdot \p[\deg(x_N) = k]\,.
    \end{align*}
    For sufficiently large $N$, we have $|\mu_N - \mu| < \eta/2$ by \Cref{lem:expectation_convergence}, so
    \begin{equation*}
        \p[ |\Snk/k - \mu_N | > \eta] \leq \sum_{k = 0}^{\infty} \p[ |\Snk/k - \mu_N| > \eta/2] \cdot \p[\deg(x_N) = k]
    \end{equation*}
    By \Cref{lem:LLN_uniform}, there is a $k_*$ such that if $k > k_*$, then
    \begin{equation*}
        \p[ |\Snk/k - \mu_N| > \eta/2] < \xi
    \end{equation*}
    for all $N$. Therefore,
    \begin{align*}
        \p[ |\Snk/k - \mu_N | > \eta] &\leq \sum_{k \leq k_*} \p[ |\Snk/k - \mu_N| > \eta/2] \cdot \p[\deg(x_N) = k] \\
        &\hspace{20mm} + \sum_{k > k_*} \p[ |\Snk/k - \mu_N| > \eta/2] \cdot \p[\deg(x_N) = k] \\
        &\leq \p[\deg(x_N) \leq k_*] + \xi
    \end{align*}
    for sufficiently large $N$. \Cref{lem:degree_large} shows that $\lim_{N \to \infty} \p[\deg(x_N) \leq k_*] = 0$, which concludes the proof.
\end{proof}

Unfortunately, the proof of \Cref{lem:ratio_sample_mean_n3} does not generalize to $n = 2$ because $Z_N^1$ and $Z$ have infinite variance when $n = 2$, so we cannot apply Chebyschev's inequality as we did in the proof of \Cref{lem:LLN_uniform}.
\begin{lemma}\label{lem:infinite_variance}
    The random variables $Z$ and $Z_N^1$ have infinite variance for sufficiently large $N$.
\end{lemma}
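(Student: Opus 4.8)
The plan is to compute the relevant second moments directly via the tail formula $\E[W^2] = \int_0^\infty \p[W > \sqrt t]\,dt$ for a nonnegative random variable $W$, and to observe that in dimension $n = 2$ the resulting integrals diverge logarithmically. Since $\E[Z] = \mu = \tfrac{n}{n-1} < \infty$ by \Cref{lem:expectation_convergence} and $\E[Z_N^1] = \mu_N < \infty$ for every $N$, in each case it suffices to show that the second moment is infinite. (Implicitly the lemma concerns the case $n = 2$, as made clear by the paragraph preceding its statement.)

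For $Z$, the explicit cdf in \Cref{eq:Z_cdf} gives $\p[Z > z] = z^{-n}$ for $z \geq 1$ and $\p[Z > z] = 1$ for $z < 1$, so
\begin{equation*}
    \E[Z^2] = \int_0^\infty \p[Z^2 > t]\,dt = 1 + \int_1^\infty \p[Z > \sqrt t]\,dt = 1 + \int_1^\infty t^{-n/2}\,dt\,,
\end{equation*}
which diverges when $n = 2$ (and equals $\tfrac{n}{n-2}$ when $n \geq 3$, consistent with \Cref{lem:var_bounded}). Hence $\var(Z) = \infty$ for $n = 2$. For $Z_N^1$, I would first note that $Z_N^1$ has the same distribution as the random variable $Z_N$ appearing in \Cref{lem:ratio_convergence_dist}: a uniformly random node of $G_N$ is marginally uniformly distributed on $M$, and $p_N^1$ is uniform on the corresponding geodesic ball of radius $\epsilon_N$. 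Then \Cref{lem:ratio_convergence_dist} supplies a constant $C$ with $\p[Z_N^1 \leq z] \leq F_Z(z) + C\epsilon_N^2 z^{-n}$ for all $z$ and all sufficiently large $N$, so for $z \geq 1$,
\begin{equation*}
    \p[Z_N^1 > z] \geq 1 - F_Z(z) - C\epsilon_N^2 z^{-n} = (1 - C\epsilon_N^2)\,z^{-n} \geq \tfrac12 z^{-n}
\end{equation*}
once $N$ is large enough that $C\epsilon_N^2 \leq \tfrac12$. Feeding this tail lower bound into the layer-cake formula gives
\begin{equation*}
    \E[(Z_N^1)^2] = \int_0^\infty \p[Z_N^1 > \sqrt t]\,dt \geq \int_1^\infty \tfrac12\, t^{-n/2}\,dt = \infty \qquad (n = 2)\,,
\end{equation*}
so $\var(Z_N^1) = \infty$ for all sufficiently large $N$.

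The only point requiring any care is the distributional identification of $Z_N^1$ with the $Z_N$ of \Cref{lem:ratio_convergence_dist}; the rest is a routine tail-integral estimate and presents no real obstacle. One could alternatively bypass \Cref{lem:ratio_convergence_dist} entirely and obtain the cruder tail bound $\p[Z_N^1 > z] \geq \tfrac14 z^{-n}$ for $z \geq 1$ straight from the volume comparison $\tfrac12 v_n r^n \leq \vol(B^M(x,r)) \leq 2 v_n r^n$ valid for small $r$ (as already used in the proof of \Cref{lem:modified_sampling}), which is equally sufficient to force the $n=2$ integral to diverge.
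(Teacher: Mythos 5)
Your proposal is correct and follows essentially the same route as the paper: reduce to the second moment, apply the tail/layer-cake formula, and force divergence of $\int_1^\infty t^{-n/2}\,dt$ when $n=2$ via a lower bound on $\p[Z_N^1 > z]$. The only cosmetic difference is that you derive that tail lower bound from the stated conclusion of \Cref{lem:ratio_convergence_dist}, whereas the paper recalls the exact expression for $\p[Z_N^1 > \sqrt{z}]$ from that lemma's proof and applies the ball-volume bound directly --- precisely the alternative you mention at the end.
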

\begin{proof}
    By \Cref{lem:expectation_convergence}, it suffices to show that $\E[Z^2]$ and $\E[(Z_N^1)^2]$ are infinite for sufficiently large $N$. For $Z$, this is true because
    \begin{equation*}
        E[Z^2] = \int_0^\infty \p[Z > \sqrt{z}] \,dz = 1 + \int_1^\infty z^{-1} \,dz\,,
    \end{equation*}
    and the integral diverges. For $Z_N^1$, we recall from the proof of \Cref{lem:ratio_convergence_dist} that 
    \begin{equation*}
        \p[Z_N^1 > \sqrt{z}] = \frac{\vol(\BM{x_N}{\epsilon_N/\sqrt{z}})}{\vol(\BM{x_N}{\epsilon_N})}
    \end{equation*}
    The ball-volume bound in \Cref{eq:volume_bound} implies that
    \begin{equation*}
        \E[(Z_N^1)^2] = 1 + \int_1^\infty \p[Z_N^1 > \sqrt{z}] \,dx \geq 1 + \int_1^\infty \frac{\fracc{2}v_n(\epsilon_N/\sqrt{z})^2}{2v_n\epsilon_N^2} = 1 + \fracc{4}\int_1^\infty z^{-1}\, dz
    \end{equation*}
    for sufficiently large $N$, and once again the integral diverges.
\end{proof}
Below, we generalize \Cref{lem:ratio_sample_mean_n3} to dimension $n = 2$ by using a truncation argument to circumvent the problem of infinite variance. Most of the details are in \Cref{sec:n2_details}.
\begin{lemma}\label{lem:ratio_sample_mean_all_n}
    For any $n \geq 2$, we have that $T_N/\deg(x_N) \xrightarrow[]{\p} \mu$ as $N \to \infty$.
\end{lemma}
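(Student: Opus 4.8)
When $n \geq 3$ this is exactly \Cref{lem:ratio_sample_mean_n3}, so the content is the case $n = 2$; we defer the computations to \Cref{sec:n2_details} and outline the argument here. The obstacle is that $Z_N^1$ has infinite variance (\Cref{lem:infinite_variance}), so one cannot apply Chebyshev's inequality to $\Snk/k$ as in the proof of \Cref{lem:LLN_uniform}. The plan is to run the weak law for triangular arrays with a truncation at level $k$ (the number of summands): set $\overline{Z_{N,k}^{\,i}} := Z_N^i \wedge k$ and $\Snktrunc := \sum_{i=1}^k \overline{Z_{N,k}^{\,i}}$, and establish the three standard ingredients \emph{uniformly in $N$} for all sufficiently large $N$.

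The uniformity is supplied by \Cref{lem:ratio_convergence_dist}: for $N$ large and all $z \geq 1$ we have $\p[Z_N^1 > z] \leq 1 - F_Z(z) + C\epsilon_N^2 z^{-n} = (1 + C\epsilon_N^2)z^{-2} \leq 2z^{-2}$. From this one checks, with all $\mathcal{O}$-constants independent of $N$: (i) $\p[\Snk \neq \Snktrunc] \leq k\,\p[Z_N^1 > k] \leq 2/k$; (ii) $\E[(\overline{Z_{N,k}^{\,1}})^2] \leq 1 + 4\log k$, hence $\var(\Snktrunc/k) = k^{-1}\var(\overline{Z_{N,k}^{\,1}}) \leq k^{-1}(1 + 4\log k) \to 0$; and (iii) $|\E[\overline{Z_{N,k}^{\,1}}] - \mu_N| = \E[(Z_N^1 - k)^+] = \int_k^\infty \p[Z_N^1 > t]\,dt \leq 2/k \to 0$, where $\mu_N < \infty$ for large $N$ by \Cref{lem:expectation_convergence}.

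Combining (ii), (iii), and \Cref{lem:expectation_convergence} with Chebyshev's inequality gives, for every $\eta, \xi > 0$, a threshold $k_*$ so that $\p[|\Snktrunc/k - \mu| > \eta] < \xi/2$ for all $k > k_*$ and all sufficiently large $N$; adding the event in (i) upgrades this to $\p[|\Snk/k - \mu| > \eta] < \xi$ for all $k > k_*$ and large $N$. This is the $n = 2$ analogue of \Cref{lem:LLN_uniform}. Finally, one transfers the bound to $\ratiosumA/\deg(x_N)$ exactly as in the proof of \Cref{lem:ratio_sample_mean_n3}: since $\ratiosumA/\deg(x_N)$ given $\{\deg(x_N) = k\}$ is distributed as $\Snk/k$, we split $\p[|\ratiosumA/\deg(x_N) - \mu| > \eta]$ over the value of $\deg(x_N)$, bound the contribution of $k > k_*$ by $\xi$, and use \Cref{lem:degree_large} to drive $\p[\deg(x_N) \leq k_*] \to 0$. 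The one point requiring care is that our truncation estimates are uniform only for large $N$ (rather than for all $N$, as in \Cref{lem:LLN_uniform}), which still suffices since $N \to \infty$ in the conclusion; the main genuine work is verifying (i)--(iii) with the logarithmic second-moment growth, which is what makes the level-$k$ truncation the right choice in the borderline tail exponent $n = 2$.
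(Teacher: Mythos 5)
Your proposal is correct and follows essentially the same route as the paper: truncate $Z_N^i$ at level $k$, verify the three standard ingredients of the weak law for triangular arrays uniformly in (large) $N$ via the CDF rate in \Cref{lem:ratio_convergence_dist}, and then transfer to $\ratiosumA/\deg(x_N)$ by conditioning on the degree and invoking \Cref{lem:degree_large}, exactly as in \Cref{lem:ratio_sample_mean_n3}. One minor improvement over the paper's version: because you retain the factor $k^{-1}$ in the second-moment estimate $k^{-1}\E[(\overline{Z_{N,k}^{1}})^2] \leq k^{-1}(1+4\log k)$, your uniform law of large numbers carries no residual $C\epsilon_N^2\log(k)$ error term, so the final summation over $k$ does not require the paper's additional step of controlling $\epsilon_N^2\,\E[\log\deg(x_N)]$ via Jensen's inequality.
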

\begin{proof}
    Let $\eta > 0$ and $\xi > 0$. We wish to show that
    \begin{equation*}
        \lim_{N \to \infty} \p[ |\ratiosumA/\deg(x_N) - \mu| > \eta] < \xi\,.
    \end{equation*}
    for sufficiently large $N$. We may expand $\p[ |\ratiosumA/\deg(x_N)  - \mu| > \eta$ as the sum
    \begin{align*}
        \p[ |\ratiosumA/\deg(x_N) - \mu_N | > \eta] &= \sum_{k= 0}^\infty \p[ |\ratiosumA/\deg(x_N) - \mu_N| > \eta \,\vert\, \deg(x_N) = k] \cdot \p[\deg(x_N) = k]\\
        &= \sum_{k = 0}^{\infty} \p[ |\Snk/k - \mu| > \eta] \cdot \p[\deg(x_N) = k]\,.
    \end{align*}
    By \Cref{lem:uniform_LLN_n2}, there is a constant $C$ and integers $N_*$, $k_*$ such that for $N \geq N_*$ and $k \geq k_*$,
    \begin{equation*}
        \p[ |\Snk/k - \mu| > \eta] < \xi + C \epsilon_N^2 \log(k)\,.
    \end{equation*}
    Therefore,
    \begin{align*}
         \p[ |\ratiosumA/\deg(x_N) - \mu_N | > \eta] &\leq \sum_{k \leq k_*} \p[ |\Snk/k - \mu_N| > \eta/2] \cdot \p[\deg(x_N) = k] \\
         &\qquad\qquad + \sum_{k > k_*} \p[ |\Snk/k - \mu_N| > \eta/2] \cdot \p[\deg(x_N) = k] \\
         &\leq \p[\deg(x_N) \leq k_*] + \sum_{k > k_*}\Big(\xi + C\epsilon_N^2 \log(k)\Big) \p[\deg(x_N) = k] \\
         &\leq \p[\deg(x_N) \leq k_*] + \xi \p[\deg(x_N) \geq k] + C\epsilon_N^2 \E[\log(\deg(x_N))] \\
         &\leq \p[\deg(x_N) \leq k_*] + \xi + C\epsilon_N^2 \E[\log(\deg(x_N))] \\
         &\leq \p[\deg(x_N) \leq k_*] + \xi + C\epsilon_N^2 \log\E[\deg(x_N)]\,,
    \end{align*}
    where the last inequality follows from applying Jensen's inequality. We have $\lim_{N \to \infty} \p[\deg(x_N) \leq k_*] = 0$ by \Cref{lem:degree_large}, so to finish the proof, it suffices to control $\epsilon_N^2 \log\E[\deg(x_N)]$ as $N \to \infty$. By construction, \[\E[\deg(x_N)] = (N-1) \frac{\vol(\BM{x_N}{\epsilon_N})}{\vol(M)} \propto N \epsilon_N^n,\] so
    \begin{equation*}
        \epsilon_N^2 \log(\E[\deg(x_N)]) = \mathcal{O}\Big(\epsilon_N^2 \log(N \epsilon_N^n)\Big) = \mathcal{O}\Bigg( \frac{\log(N^{1 - n\alpha})}{N^{2\alpha}}\Bigg)\,,
    \end{equation*}
    where $1 - n\alpha > 0$ by hypothesis. Therefore, $\lim_{N \to \infty} \epsilon_N^2 \log (\E[\deg(x_N)]) = 0$.
\end{proof}

\subsection{Proof that scalar Ollivier-Ricci curvature converges to scalar curvature}

\begin{lemma}\label{lem:orsc_rsc}
     As $N \to \infty$,
    \begin{equation*}
        \fracc{\epsilon_N^4} \Big\vert \sorc{x_N} - \frac{\rsc(x_N)}{2(n+2)}\Big\vert \xrightarrow[]{\p} 0\,.
    \end{equation*}
\end{lemma}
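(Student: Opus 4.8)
The plan is to reduce the lemma to the edgewise convergence of Ollivier--Ricci curvature to Ricci curvature. The key structural observation is that $\sorc{x_N}$ and $\rsc(x_N)$ are the \emph{same} degree-average applied to two edgewise quantities that share the weight $d_M(x_N,y)^2$. Writing $\log_{x_N}y = d_M(x_N,y)\,v_{x_N y}$ with $v_{x_N y}$ a unit vector, the $\rsc$ summand is $\Riccc{x_N}{\log_{x_N}y}\cdot\log_{x_N}y = d_M(x_N,y)^2\,\Ricc{x_N}{v_{x_N y}}$, while the $\sorc$ summand is $w(x_N,y)^2\orc{x_N}{y} = d_M(x_N,y)^2\orc{x_N}{y}$ since $w(x_N,y)=d_M(x_N,y)$. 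Factoring out the common weight, the quantity to control is
\[
\frac{\sorc{x_N}}{\epsilon_N^4} - \frac{1}{2(n+2)}\cdot\frac{\rsc(x_N)}{\epsilon_N^2} = \frac{1}{\deg(x_N)}\sum_{y\sim x_N}\frac{d_M(x_N,y)^2}{\epsilon_N^2}\left(\frac{\orc{x_N}{y}}{\epsilon_N^2} - \frac{\Ricc{x_N}{v_{x_N y}}}{2(n+2)}\right),
\]
so it suffices to show that the right-hand side tends to $0$ in probability. The factor $\epsilon_N^2$ attached to $\rsc(x_N)$ is the one forced by the scalings of \Cref{lem:rsc_convergence} and \Cref{thm:main}: it is the \emph{scaled} curvature $\epsilon_N^{-2}\orc{x_N}{y}$, and not $\orc{x_N}{y}$ itself, that approximates $\tfrac{1}{2(n+2)}\Ricc{x_N}{v_{x_N y}}$, and carrying this factor is exactly what makes $\epsilon_N^{-4}\sorc{x_N}$ comparable to $\epsilon_N^{-2}\cdot\tfrac{1}{2(n+2)}\rsc(x_N)$.

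For the edgewise term I would combine \Cref{thm:KM_convergence} with the uniform Wasserstein bound \Cref{lem:WM_convergence_modified}, giving for every edge
\[
\frac{\orc{x_N}{y}}{\epsilon_N^2} - \frac{\Ricc{x_N}{v_{x_N y}}}{2(n+2)} = \frac{\W{M}(\mu_{x_N}^M,\mu_y^M) - \W{G}(\mu_{x_N}^G,\mu_y^G)}{\epsilon_N^3}\cdot\frac{\epsilon_N}{d_M(x_N,y)} + \mathcal{O}\!\left(\epsilon_N + d_M(x_N,y)\right).
\]
Multiplying by the weight $d_M(x_N,y)^2/\epsilon_N^2$, which is at most $1$ because $d_M(x_N,y)\le\epsilon_N$ on every edge, cancels the ratio $\epsilon_N/d_M(x_N,y)$ and replaces it by $d_M(x_N,y)/\epsilon_N\le1$. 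Each summand is then bounded, uniformly in $y$, by
\[
\max_{(x,y)\in E(G_N)}\frac{|\W{M}(\mu_x^M,\mu_y^M) - \W{G}(\mu_x^G,\mu_y^G)|}{\epsilon_N^3} + \mathcal{O}(\epsilon_N).
\]
By \Cref{lem:WM_convergence_modified} the maximum is at most $3A_{M,\theta}\log(N)^{p_n}/N^{1/n-3\alpha}$ with probability at least $1 - (1+C_{M,\theta})N^{-\theta}$, and $1/n - 3\alpha > 1/(2n) > 0$ since $\alpha < 1/(6n)$. Hence on this high-probability event the entire degree-average is dominated by a single deterministic sequence tending to $0$, which gives convergence in probability; the event $\deg(x_N)=0$, on which the difference vanishes, has probability tending to $0$ by \Cref{lem:degree_large}.

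The step I expect to be the crux — and the reason the squared weight $w(x,y)^2$ in the definition of $\sorc{x_N}$ is the correct one — is the cancellation in the second paragraph. The edgewise error of $\epsilon_N^{-2}\orc{x_N}{y}$ is \emph{not} uniformly small: it carries the factor $\epsilon_N/d_M(x_N,y)$, which blows up for short edges, mirroring the degradation of the ORC--Ricci relationship as $d_M(x_N,y)\to 0$ recorded in \Cref{prop:orc_convergence1}. Individual short-edge contributions can therefore be large, and what rescues the sum is precisely the weight $d_M(x_N,y)^2$, which suppresses exactly those edges; quantitatively $d_M(x_N,y)^2/\epsilon_N^2$ beats $\epsilon_N/d_M(x_N,y)$ because $d_M(x_N,y)\le\epsilon_N$. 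Once this is in place the remainder is comparatively soft: the bound is uniform over edges and deterministic on the good event of \Cref{lem:WM_convergence_modified}, so — in contrast to \Cref{lem:rsc_convergence} — no variance or concentration estimate is needed, and in this route the ratio $\epsilon_N/d_M(x_N,y)$ never survives with a positive power, so the delicate average-ratio analysis of \Cref{lem:ratio_sample_mean_all_n} can be avoided at this step.
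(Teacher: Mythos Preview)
Your argument is correct and is genuinely different from the paper's, and in fact simpler. The paper, after writing the same degree-average with the factor $d_M(x_N,y)^2/\epsilon_N^2$, immediately discards that factor via $d_M(x_N,y)^2/\epsilon_N^2\le 1$; this leaves the ratio $\epsilon_N/d_M(x_N,y)$ intact in the Wasserstein term, and the paper then bounds
\[
\deg(x_N)^{-1}\sum_{y\sim x_N}\frac{|\W{M}(\mu_{x_N}^M,\mu_y^M)-\W{G}(\mu_{x_N}^G,\mu_y^G)|}{\epsilon_N^3}\cdot\frac{\epsilon_N}{d_M(x_N,y)}
\]
by pulling out the maximum of the Wasserstein quotient (\Cref{lem:WM_convergence_modified}) and separately showing that $\deg(x_N)^{-1}\sum_{y\sim x_N}\epsilon_N/d_M(x_N,y)$ converges in probability to $n/(n-1)$ (\Cref{lem:ratio_sample_mean_all_n}). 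That last step is the reason for all of \Cref{sec:ratio} and the appendix, including the truncation argument needed when $n=2$ because the ratio has infinite variance. Your route keeps the weight $d_M(x_N,y)^2/\epsilon_N^2$ through the decomposition so that it cancels one power of $\epsilon_N/d_M(x_N,y)$ and leaves $d_M(x_N,y)/\epsilon_N\le 1$; the summand is then bounded uniformly by the edge-maximum Wasserstein quotient plus $\mathcal{O}(\epsilon_N)$, and \Cref{lem:WM_convergence_modified} alone finishes the job. The gain is that \Cref{lem:ratio_sample_mean_all_n} and the supporting lemmas are not needed for this lemma at all. The paper's route, on the other hand, makes the role of the ratio explicit and yields \Cref{lem:ratio_sample_mean_all_n} as a result of independent interest, but for the purpose of proving \Cref{lem:orsc_rsc} your cancellation is the sharper move.
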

\begin{proof}
    Assume that $N$ is sufficiently large so that $\epsilon_N$ is less than the injectivity radius of $M$. Therefore, $d_M(x_N, y)$ is less than the injectivity radius for all nodes $y$ that are adjacent to $x_N$. For $y \sim x_N$, let $v_y = \frac{\log_{x_N}(y)}{\norm{\log_{x_N}(y)}} = \frac{\log_{x_N}(y)}{d_M(x_N, y)}$. For any $x_N \in G_N$, we have
    \begin{align*}
        &\fracc{\epsilon_N^4}\Big| \orsc{x_N} - \frac{\rsc(x_N)}{2(n+2)}\Big| \\
        &\qquad\leq \deg(x_N)^{-1} \sum_{y \sim x} \fracc{\epsilon_N^2} \Big| \frac{d_M(x_N, y)^2}{\epsilon_N^2} \cdot \orc{x_N}{y} - \frac{\Ricc{x_N}{\log_{x_N}(y)}}{2(n+2)} \Big| \\
        &\qquad= \deg(x_N)^{-1} \sum_{y \sim x} \frac{d_M(x_N, y)^2}{\epsilon_N^2} \Big|\orc{x_N}{y} - \frac{\Ricc{x_N}{v_y}}{2(n+2)}\Big| \\
        &\qquad= \deg(x_N)^{-1} \sum_{y \sim x} \frac{d_M(x_N, y)^2}{\epsilon_N^2} \cdot \epsilon_N^{-2} \Big\vert \orc{x_N}{y} - \frac{\epsilon_N^2 \Ricc{x_N}{v_y}}{2(n+2)} \Big| \\
        &\qquad= \deg(x_N)^{-1}\sum_{y \sim x}  \epsilon_N^{-2}  \Big\vert \orc{x_N}{y}- \frac{\epsilon_N^2 \Ricc{x_N}{v_y}}{2(n+2)}\Big| \,.
    \end{align*}
    By comparing to $\kappa_M(x_N, y)$, we see that
    \begin{align*}
        &\fracc{\epsilon_N^4}\Big| \orsc{x_N} - \frac{\rsc(x_N)}{2(n+2)}\Big| \\
        &\qquad\leq \deg(x_N)^{-1} C^{-2} \sum_{y \sim x_N} \epsilon_N^{-2} 
    \Bigg( \vert \orc{x_N}{y} - \kappa_M(x_N, y)| + \Big\vert \kappa_M(x_N, y) - \frac{\epsilon_N^2 \Ricc{x_N}{v_y}}{2(n+2)}\Big| \Bigg)\\
        &\qquad= \deg(x_N)^{-1} C^{-2} \sum_{y \sim x_N} 
    \frac{|\W{M}(\mu_{x_N}^M, \mu_y^M) - \W{G}(\mu_{x_N}^G, \mu_y^G)|}{\epsilon_N^3}\cdot \frac{\epsilon_N}{d_M(x_N, y)} \\
    &\qquad\qquad\qquad + \deg(x_N)^{-1} C^{-2} \sum_{y \sim x_N} \epsilon_N^{-2}\Big\vert \kappa_M(x_N, y) - \frac{\epsilon_N^2 \Ricc{x_N}{v_y}}{2(n+2)}\Big|\,.
    \end{align*}
    
    By \Cref{thm:KM_convergence},
    \begin{equation*}
        \epsilon_N^{-2}\Big\vert \kappa_M(x_N, y) - \frac{\epsilon_N^2 \Ricc{x_N}{v_y}}{2(n+2)} \Big| = \mathcal{O}\Big(\epsilon_N + d_M(x_N, y) \Big) = \mathcal{O}(\epsilon_N)
    \end{equation*}
    for any points $x_N$ and $y$. Therefore, it suffices to show that
    \begin{equation*}
        \deg(x_N)^{-1}\sum_{y \sim x_N} 
    \frac{|\W{M}(\mu_{x_N}^M, \mu_y^M) - \W{G}(\mu_{x_N}^G, \mu_y^G)|}{\epsilon_N^3}\cdot \frac{\epsilon_N}{d_M(x_N, y)} \to 0
    \end{equation*}
    in probability as $N \to \infty$. We upper bound the sum by
    \begin{align*}
        &\deg(x_N)^{-1}\sum_{y \sim x_N} 
    \frac{|\W{M}(\mu_{x_N}^M, \mu_y^M) - \W{G}(\mu_{x_N}^G, \mu_y^G)|}{\epsilon_N^3}\cdot \frac{\epsilon_N}{d_M(x_N, y)} \\
    &\qquad\qquad \leq \Bigg(\max_{y \sim x_N}  \frac{|\W{M}(\mu_{x_N}^M, \mu_y^M) - \W{G}(\mu_{x_N}^G, \mu_y^G)|}{\epsilon_N^3}\Bigg) \cdot \deg(x_N)^{-1} \sum_{y \sim x_N} \frac{\epsilon_N}{d_M(x_N, y)}\,.
    \end{align*}
    Applying \Cref{lem:WM_convergence_modified} and \Cref{lem:ratio_sample_mean_all_n} completes the proof.
\end{proof}

\begin{theorem}\label{thm:main}
We have
    \begin{equation*}
        \Big|\frac{2(n+2)^2}{\epsilon_N^4} \sorc{x_N} - S(x_N)\Big| \xrightarrow[]{\p} 0
    \end{equation*}
as $N \to \infty$.
\end{theorem}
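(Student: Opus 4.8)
The plan is to derive \Cref{thm:main} as essentially a one-step corollary of the two convergence statements already in hand: \Cref{lem:rsc_convergence}, which gives $\frac{n+2}{\epsilon_N^2}\rsc(x_N) - S(x_N) \xrightarrow{\p} 0$, and \Cref{lem:orsc_rsc}, which says that after the appropriate rescaling $\sorc{x_N}$ agrees with $\tfrac{1}{2(n+2)}\rsc(x_N)$ up to an error that vanishes in probability. All of the genuine analytic work has already been carried out upstream---the modified Hutchinson argument of \Cref{sec:mean_ricci_curvature}, the uniform Wasserstein estimate \Cref{lem:WM_convergence_modified}, and the control of the heavy-tailed ratio $\epsilon_N/d_M(x,y)$ in \Cref{lem:ratio_sample_mean_all_n}---so what remains for the theorem is purely to patch these pieces together.

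Concretely, I would introduce the intermediate quantity $\frac{n+2}{\epsilon_N^2}\rsc(x_N)$ and split
\[
\frac{2(n+2)^2}{\epsilon_N^4}\sorc{x_N} - S(x_N)
= 2(n+2)^2\!\left(\frac{\sorc{x_N}}{\epsilon_N^4} - \frac{\rsc(x_N)}{2(n+2)\,\epsilon_N^2}\right)
+ \left(\frac{n+2}{\epsilon_N^2}\rsc(x_N) - S(x_N)\right).
\]
The second summand converges to $0$ in probability by \Cref{lem:rsc_convergence}, and the first summand is a fixed multiple of the quantity bounded in \Cref{lem:orsc_rsc} and hence also converges to $0$ in probability. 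Since a sum of two random variables, each tending to $0$ in probability, also tends to $0$ in probability---for any $\delta>0$ one has $\p[|A_N+B_N|>\delta]\le \p[|A_N|>\delta/2]+\p[|B_N|>\delta/2]\to 0$---the left-hand side converges to $0$ in probability, which is exactly the assertion of the theorem.

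Because everything hard is contained in the lemmas we are allowed to invoke, I do not expect a substantive obstacle in this last step: it is a two-line triangle-inequality argument. The only points requiring care are cosmetic---matching the powers of $\epsilon_N$ and the constant $2(n+2)$ consistently between the rescalings used in \Cref{lem:rsc_convergence} and \Cref{lem:orsc_rsc}, and recalling that $S(x_N)$ and the relevant geodesic-ball volume ratios are bounded uniformly in $x_N$ because $M$ is compact, which is already implicit in those lemmas.
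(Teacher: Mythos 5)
Your proposal is correct and is essentially identical to the paper's own proof, which likewise writes the error as the sum of the \Cref{lem:orsc_rsc} term and the \Cref{lem:rsc_convergence} term and concludes by the triangle inequality for convergence in probability. Your bookkeeping (comparing $\frac{2(n+2)^2}{\epsilon_N^4}\sorc{x_N}$ to $\frac{(n+2)}{\epsilon_N^2}\rsc(x_N)$) matches the quantity actually controlled in the proof of \Cref{lem:orsc_rsc}, so no further work is needed.
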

\begin{proof}
    By \Cref{lem:orsc_rsc},
    \begin{equation*}
        \Big| \frac{2(n+2)^2}{\epsilon_N^4} \sorc{x_N} - \frac{(n+2)\rsc(x_N)}{\epsilon_N^2} \Big| \to 0
    \end{equation*}
    in probability as $N \to \infty$. By \Cref{lem:rsc_convergence},
    \begin{equation*}
        \Big| \frac{(n+2)\rsc(x_N)}{\epsilon_N^2}- S(x_N)\Big| \to 0
    \end{equation*}
    in probability as $N \to \infty$.
\end{proof}

\section{Numerical Experiments}
We calculate our scalar ORC for all nodes in a set of random geometric graphs that are sampled from unit spheres of dimensions $n \in \{2, 3, 4\}$. Our spheres have constant scalar curvatures $S = 2, 6, 12$, respectively. According to \Cref{thm:main}, $\frac{2(n+2)^2}{\epsilon_N^4}\sorc{x}$ converges to the scalar curvature $S$ of the sphere as the number of nodes $N \to \infty$, where $\epsilon_N$ is the connectivity threshold. In \Cref{fig:sphere_convergence}, we plot the convergence of the mean scaled SORC (averaged over all nodes in the graph) as the number $N$ of nodes increases.

\begin{figure}
    \centering
    \subfloat[Sparser RGGs]{\includegraphics[width = .45\textwidth]{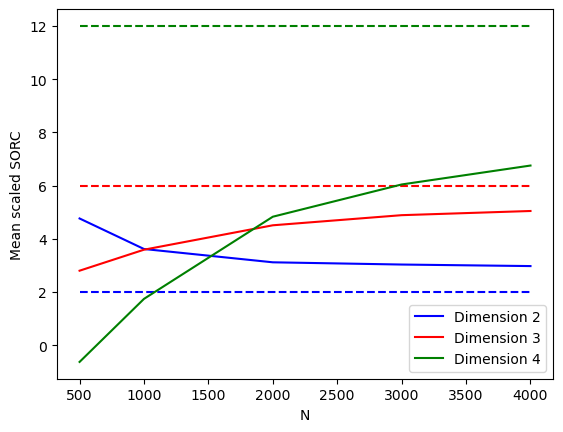}}
    \hspace{5mm}
    \subfloat[Denser RGGs]{\includegraphics[width = .45\textwidth]{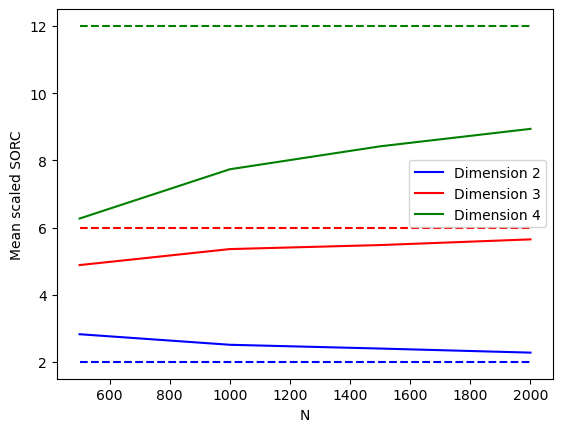}}
    \caption{The graphs are RGGs with nodes sampled from spheres of dimensions $n \in \{2, 3, 4\}$. The solid lines are the mean $\frac{2(n+2)^2}{\epsilon_N^4}\sorc{x}$ as a function of the number $N$ of nodes in the graph. The dashed lines are the scalar curvatures of the corresponding spheres. The connectivity threshold $\epsilon_N$ is set to $C_n N^{-\alpha_n}$, where $\alpha_n = \fracc{6.01n}$ so that $\alpha$ satisfies our condition $\alpha \in (0, \fracc{6n})$. In (A), the constant $C_n$ is set so that the average degree at $N = 1000$ is approximately $50$. In (B), the constant $C_n$ is set so that the average degree at $N = 1000$ is approximately $100$. More precisely, if $k$ is the desired average degree at $N = 1000$, we set $C_n$ such that $k = N \frac{v_n \epsilon_N^n}{\vol(S^n)}$, which is approximately equal to the expected degree of a node, ignoring curvature.}
    \label{fig:sphere_convergence}
\end{figure}

\section{Conclusions}

In this paper, we defined ``scalar Ollivier-Ricci curvature,'' a function $\sorc{x}$ on the nodes of a weighted graph. The scalar Ollivier-Ricci curvature at a node $x$ is a weighted sum of the Ollivier-Ricci curvature of each incident edge, where higher-weight edges are given higher weight in the $\sorc{x}$ sum. We proved (\Cref{thm:main}) that $\sorc{x}$ converges in probability to a scaled version of scalar curvature, as the  number of nodes $N \to \infty$ in a random geometric graph.

Somewhat counterintuitively, our definition of $\sorc{x}$ places higher weight on $\orc{x}{y}$ when $x$ and $y$ are farther apart (but still connected) in a random geometric graph. This reflects what we see in \Cref{prop:orc_convergence1} (and also \cite{ORC_convergence_hoorn, trillos_weber}); when $x$ and $y$ are too close together, the error grows between Ollivier-Ricci curvature and Ricci curvature. If $x$ and $y$ are very close, then their $1$-hop neighborhoods have significant overlap, so their edge is likely to have positive Ollivier-Ricci curvature $\kappa_G(x, y)$, regardless of the true underlying curvature of the manifold. On the other hand, if $x$ and $y$ are a little farther apart, but still close enough to be connected, then curvature plays a larger role in the connectivity of their respective neighborhoods, so that Ollivier-Ricci curvature is a good approximation to Ricci curvature.

Our theoretical analysis assumes a framework that is not always true in real-world networks, so we suggest the following pre-processing steps when our assumptions are violated. First, our theoretical analysis assumes that edge weights are geodesic distances, so that edges with lower weight correspond to points that are closer to each other. However, it is often the opposite case in real-world networks where higher weight indicates a stronger connection. In those cases, we suggest applying a monotonically decreasing function $f(w)$ to the weights $w$ in order to obtain weights $\tilde{w}$ in which lower weight indicates a closer connection. Second, our theoretical analysis assumes that there is a connectivity threshold $\epsilon$ at which points $x, y$ are not connected by an edge if their geodesic distance is greater than $\epsilon$. In real-world networks, one should apply a threshold so that each node is only connected to other nodes that have a reasonably strong connection with it---or to nodes that are reasonably close to it, depending on context. Otherwise, our definition of $\sorc{x}$ will put very high weight on edges that are not truly relevant to $x$. With these two pre-processing steps, one can sensibly calculate scalar Ollivier-Ricci curvature on real-world networks that are not necessarily random geometric graphs.
\appendix

\section{Technical Details}\label{sec:n2_details}
Define the random variables $Z_N^i, Z$ and their respective means $\mu_N, \mu$ as in \Cref{sec:ratio}. For every integer $k$, define the truncated random variables
\begin{align*}
    \overline{Z_{N, k}^i} := Z_N^i \mathbf{1}_{|Z_N^i| \leq k}\,, \\
    \Ztrunc := Z \mathbf{1}_{|Z| \leq k}
\end{align*}
and their respective means
\begin{align*}
    \mu_{N, k} &:= \E[Z_{N, k}^1]\,,\\
    \mu_k &:=  \E[\Ztrunc]\,.
\end{align*}
Define the sum $\Snk = \sum_{i=1}^k Z_N^i$ as in \Cref{sec:ratio}, and define the analogous truncation sum  
\begin{equation*}
    \Snktrunc = \sum_{i=1}^k \overline{Z_{N, k}^i}\,.
\end{equation*}

\begin{lemma}\label{lem:truncated_mean_convergence}
Let $\eta > 0$. There are $N_* \in \mathbb{N}$ and $k_* \in \mathbb{N}$ such that for $N \geq N_*$ and $k \geq k_*$, we have
    \begin{equation*}
        |\mu_{N, k} - \mu| < \eta\,.
    \end{equation*}
\end{lemma}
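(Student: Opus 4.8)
The plan is to bound $|\mu_{N,k}-\mu|$ via a two-term triangle inequality that isolates an $N$-dependent error and a $k$-dependent error:
\[
|\mu_{N,k} - \mu| \le |\mu_{N,k} - \mu_N| + |\mu_N - \mu|.
\]
The second term is handled immediately: by \Cref{lem:expectation_convergence}, $\mu_N \to \mu$, so $|\mu_N-\mu|<\eta/2$ once $N$ is large.

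For the first term, split $Z_N^1$ according to whether it exceeds $k$ to get $\mu_N = \mu_{N,k} + \E[Z_N^1\,\mathbf{1}_{Z_N^1>k}]$, so that $\mu_N-\mu_{N,k}=\E[Z_N^1\,\mathbf{1}_{Z_N^1>k}]\ge 0$. Since $Z_N^1\ge 1$ almost surely, the layer-cake formula gives
\[
\mu_N - \mu_{N,k} = k\,\p[Z_N^1>k] + \int_k^\infty \p[Z_N^1>t]\,dt.
\]
I then apply the quantitative estimate of \Cref{lem:ratio_convergence_dist} (valid since $Z_N^1$ has the same law as the ratio $Z_N$ there), namely $\p[Z_N^1>t]\le (1+C\epsilon_N^2)\,t^{-n}$ for all $t\ge 1$ and all sufficiently large $N$. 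Substituting and integrating, and using $n\ge 2$, both summands are at most a fixed multiple of $k^{1-n}$; once $N$ is large enough that $\epsilon_N\le 1$, this multiple is a constant $C_1=C_1(C,n)$ depending on neither $N$ nor $k$, and $C_1 k^{1-n}\to 0$ as $k\to\infty$.

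Assembling: given $\eta$, choose $N_*$ so that for $N\ge N_*$ the hypotheses of \Cref{lem:ratio_convergence_dist} hold, $\epsilon_N\le 1$, and $|\mu_N-\mu|<\eta/2$; this fixes $C_1$ in the uniform bound $\mu_N-\mu_{N,k}\le C_1 k^{1-n}$ valid for all $N\ge N_*$ and all $k$. Then choose $k_*$ with $C_1 k_*^{1-n}<\eta/2$, so that both terms are $<\eta/2$ whenever $N\ge N_*$ and $k\ge k_*$. The one point that needs care is the order of quantifiers: the tail bound on $\mu_N-\mu_{N,k}$ must hold \emph{uniformly} in $N\ge N_*$, which is exactly what the $N$-independent leading constant in \Cref{lem:ratio_convergence_dist} provides (the correction $C\epsilon_N^2$ being uniformly bounded for $N$ large). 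I expect no other obstacle; in particular the argument is uniform across dimensions $n\ge 2$ and does not degenerate at $n=2$, where $\mu=2<\infty$ even though $Z_N^1$ has infinite variance (\Cref{lem:infinite_variance}).
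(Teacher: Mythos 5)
Your proof is correct, but it uses a different decomposition from the paper's. You write $|\mu_{N,k}-\mu|\le|\mu_{N,k}-\mu_N|+|\mu_N-\mu|$, handling the second term by \Cref{lem:expectation_convergence} and the first by an explicit layer-cake bound on the tail $\E[Z_N^1\mathbf{1}_{Z_N^1>k}]=k\,\p[Z_N^1>k]+\int_k^\infty\p[Z_N^1>t]\,dt\le C_1k^{1-n}$, with the constant made uniform in $N$ via the quantitative CDF estimate of \Cref{lem:ratio_convergence_dist}. The paper instead passes through the truncated limiting variable: $|\mu_{N,k}-\mu|\le|\mu_{N,k}-\mu_k|+|\mu_k-\mu|$, bounding the first term by $\mathcal{O}(\epsilon_N^2k^{-n+1})$ (again via \Cref{lem:ratio_convergence_dist}, applied to the CDFs of the truncations) and the second by dominated convergence. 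In effect the roles of $N$ and $k$ in the two legs are swapped: in the paper's version the $N$-dependent leg is quantitative and the $k$-dependent leg is a soft DCT argument, whereas your version gives explicit rates in both legs ($\mathcal{O}(\epsilon_N^2)$ from \Cref{lem:expectation_convergence} and $\mathcal{O}(k^{1-n})$ from the tail), reuses an already-proved lemma rather than introducing $\mu_k$, and correctly identifies and resolves the one delicate point, namely that the truncation-error bound must be uniform in $N\ge N_*$. Both arguments are valid and of comparable length; yours is arguably the cleaner of the two since it yields an explicit joint rate and avoids the paper's slightly awkward identity for the CDF of a truncated variable.
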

\begin{proof}
    For any $N$ and $k$,
    \begin{equation*}
        |\mu_{N, k} - \mu_N| \leq |\mu_{N, k} - \mu_k| + |\mu_k - \mu|\,.
    \end{equation*}
    The second term corresponds to convergence in the Euclidean case as the truncation parameter $k \to \infty$. There is no $N$ dependence. Because $Z$ is a non-negative random variable,
    \begin{align*}
        \mu &:= \int_0^{\infty} \p[Z > z] \,dz \\
        \mu_k &:= \int_0^{\infty} \p[Z \mathbf{1}_{Z \leq k} > z] \,dz\,.
    \end{align*}
    We note that $\lim_{k \to \infty} \p[Z \mathbf{1}_{Z \leq k} > z] = \p[Z > z]$, and the latter is an upper bound at each $k$. By the dominated convergence theorem, $\vert \mu_k - \mu\vert \to 0$ as $k \to \infty$.

    We now turn to the first term, $|\mu_{N, k} - \mu_N|$. Because $\Zntrunc$ and $\Ztrunc$ are non-negative random variables, we have
    \begin{align*}
        \mu_{N, k} &:= \int_0^{\infty} \p[\Zntrunc > z]\,dz\,, \\
        \mu_k &:= \int_0^{\infty} \p[\Ztrunc > z]\,dz\,.
    \end{align*}
    Therefore,
    \begin{equation*}
        \mu_{N, k} - \mu_k = \int_0^{\infty} \p[\Ztrunc \leq z] - \p[\Zntrunc \leq z]\, \,dz.
    \end{equation*}
    For any non-negative random variable $Y$ and its truncation $\overline{Y_k} := Y \mathbf{1}_{|Y|\leq k}$, we have
    \begin{equation*}
        \p[\overline{Y_k} \leq y] = \begin{cases}
            \p[Y \leq k] - \p[Y \leq y] \,, & y \leq k \,, \\
            1 \,, & \text{otherwise}\,.
        \end{cases}
    \end{equation*}
    Therefore,
    \begin{equation*}
        |\mu_{N, k} - \mu_k| = \int_0^k |\p[Z\leq k] - \p[Z_N^1 \leq k]| + |\p[Z \leq z] - \p[Z_N^1\leq z] | \,dz \,.
    \end{equation*}
    Because $Z_N^1 \geq 1$ and $Z\geq 1$, 
    \begin{equation*}
        |\mu_{N, k} - \mu_k| = \int_1^k |\p[Z\leq k] - \p[Z_N^1 \leq k]| + |\p[Z \leq z] - \p[Z_N^1\leq z] | \,dz\,.
    \end{equation*}
    By \Cref{lem:ratio_convergence_dist}, there is a constant $C$ such that for sufficiently large $N$,
    \begin{equation*}
        |\p[Z \leq z] - \p[Z_N^1\leq z] | \leq C \epsilon_N^2/z^n
    \end{equation*}
    for all $z \geq 1$ (including $z = k$). Therefore,
    \begin{equation*}
        |\mu_{N, k} - \mu_k| \leq C\epsilon_N^2 \Big( k^{-n+1} + \int_1^k z^{-n} \,dz\Big) = \mathcal{O}(\epsilon_N^2 k^{-n+1})\,,
    \end{equation*}
    which implies the lemma.
\end{proof}

\begin{lemma}\label{lem:kprobZ}
    \begin{equation*}
        \lim_{z \to \infty} z \p[Z > z] = 0\,.
    \end{equation*}
\end{lemma}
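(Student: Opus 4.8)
The plan is to read off $z\,\p[Z>z]$ directly from the explicit distribution function in \Cref{eq:Z_cdf}. For $z \geq 1$ we have $\p[Z > z] = 1 - F_Z(z) = z^{-n}$, and hence $z\,\p[Z>z] = z^{1-n} = z^{-(n-1)}$. The only substantive point is that $Z$ is built from the manifold dimension $n$, which satisfies $n \geq 2$ by the standing hypothesis in \Cref{sec:setting}, so the exponent $n-1$ is at least $1$. Therefore $z^{-(n-1)} \to 0$ as $z \to \infty$, which is exactly the claim.

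There is essentially no obstacle: one checks that the tail is exactly $z^{-n}$ rather than something heavier (immediate from \Cref{eq:Z_cdf}) and that $n - 1 \geq 1 > 0$ (immediate from $n \geq 2$). In fact for $z < 1$ the quantity $z\,\p[Z > z] = z$ also tends to $0$ as $z \to 0$, but only the behavior as $z \to \infty$ is relevant here, and there the estimate $z\,\p[Z>z] = z^{1-n}$ settles it.

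It is worth noting why this estimate is needed: the random variable $Z$ has finite mean $\mu = n/(n-1)$ but, when $n = 2$, infinite variance (\Cref{lem:infinite_variance}), so the Chebyshev-based argument used for $n \geq 3$ fails. The decay $z\,\p[Z>z] \to 0$ is the hypothesis that drives the Feller-type truncation argument for a weak law of large numbers for the $Z_N^i$ in dimension two; if one wanted the corresponding statement for the finite-$N$ variables $Z_N^1$ rather than the limit $Z$, one would combine the computation above with the uniform tail estimate $|\p[Z_N^1 \leq z] - F_Z(z)| \leq C\epsilon_N^2 z^{-n}$ from \Cref{lem:ratio_convergence_dist}, but the present lemma concerns only $Z$ and follows from the one-line computation described here.
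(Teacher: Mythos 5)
Your proof is correct, but it takes a different route from the paper. You read the tail directly off the explicit distribution function: from \Cref{eq:Z_cdf}, $\p[Z>z]=z^{-n}$ for $z\geq 1$, so $z\,\p[Z>z]=z^{1-n}\to 0$ because $n\geq 2$. The paper instead gives the generic argument for any nonnegative integrable random variable: it bounds $z\,\p[Z>z]\leq \E[Z\,\mathbf{1}_{Z>\lfloor z\rfloor}]$ and shows the truncated expectation vanishes by dominated convergence, using only $\E[Z]=\tfrac{n}{n-1}<\infty$ (this is the standard fact cited from Durrett). Your computation is shorter and sharper --- it gives the explicit rate $z^{-(n-1)}$ rather than just convergence to zero --- and it is entirely legitimate here since $Z$ has a known closed-form law. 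The paper's version buys generality (it would survive any modification of the model in which $Z$ remained integrable but lost its explicit form) and matches the textbook hypothesis $z\,\p[|X|>z]\to 0$ of the weak law it is feeding into, but for this specific $Z$ nothing is lost by your direct calculation. Your closing remarks about why the lemma is needed (infinite variance when $n=2$, the Feller-type truncation) and how one would transfer the estimate to $Z_N^1$ via \Cref{lem:ratio_convergence_dist} accurately reflect how the paper uses the result in \Cref{lem:k_truncated}.
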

\begin{proof}
    This is a standard proof (see e.g., Theorem 2.2.14 in \cite{durrett_probability}), which we include here for convenience. For any $z$,
    \begin{equation*}
        z\p[Z > z] \leq \E[Z \mathbf{1}_{Z > z}] \leq \E[Z \mathbf{1}_{Z > \lfloor z \rfloor}]\,.
    \end{equation*}
    Therefore, it suffices to show that $\lim_{m \to \infty} \E[Z \mathbf{1}_{Z >m}] = 0$, where $m \in \mathbb{Z}$. Let $f_m(z) = \p[Z \mathbf{1}_{Z > m} > z]$ and let $g(z) = \p[Z > z]$. Because $Z$ is non-negative, $\E[Z \mathbf{1}_{Z >m}] = \int_0^{\infty} f_m(z)\,dz$. The function $g(z)$ is an upper bound for $f_m(z)$, and it is integrable because $\int_0^{\infty} \,g(z) \,dz = \E[Z] < \infty$. Additionally, $\lim_{m \to \infty} f_m(z) \leq \lim_{m \to \infty} \p[Z > m] = 0$, so by the dominated convergence theorem, $\lim_{m \to \infty} \E[Z \mathbf{1}_{Z >m}] = 0$.
\end{proof}
\begin{lemma}\label{lem:k_truncated}
Let $\eta > 0$. There are $N_* \in \mathbb{N}$ and $k_* \in \mathbb{N}$ such that for $N \geq N_*$ and $k \geq k_*$, we have
    \begin{equation*}
        k \p[|Z_N^1| > k] < \eta\,.
    \end{equation*}
\end{lemma}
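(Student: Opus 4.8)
The plan is to control $k\,\p[|Z_N^1| > k]$ by comparing it against the corresponding Euclidean quantity $k\,\p[Z > k]$, which we already know tends to $0$, and to absorb the discrepancy using the quantitative distributional estimate of \Cref{lem:ratio_convergence_dist}. First I would note that $Z_N^1 \geq 1$ deterministically, so $|Z_N^1| = Z_N^1$, and split
\[
k\,\p[Z_N^1 > k] \;\leq\; k\,\p[Z > k] \;+\; k\,\big|\p[Z_N^1 > k] - \p[Z > k]\big|.
\]
The first term has no $N$-dependence: by \Cref{lem:kprobZ} we have $k\,\p[Z > k] \to 0$ as $k \to \infty$ (alternatively, directly from \Cref{eq:Z_cdf}, $\p[Z > k] = k^{-n}$ for $k \geq 1$, so $k\,\p[Z>k] = k^{1-n} \leq k^{-1}$ since $n \geq 2$), so I can fix $k_*$ with $k\,\p[Z > k] < \eta/2$ for all $k \geq k_*$.

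For the second term I would invoke \Cref{lem:ratio_convergence_dist}, which furnishes a constant $C$ with $|\p[Z_N^1 \leq z] - F_Z(z)| \leq C\epsilon_N^2/z^n$ for all $z$ and all sufficiently large $N$; evaluating at $z = k$ gives $k\,|\p[Z_N^1 > k] - \p[Z > k]| \leq C\epsilon_N^2 k^{1-n} \leq C\epsilon_N^2$, again using $n \geq 2$ and $k \geq 1$. Since $\epsilon_N = N^{-\alpha} \to 0$, I choose $N_*$ at least as large as the threshold demanded by \Cref{lem:ratio_convergence_dist} and large enough that $C\epsilon_N^2 < \eta/2$ for all $N \geq N_*$. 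Combining the two bounds yields $k\,\p[|Z_N^1| > k] < \eta$ for every $N \geq N_*$ and $k \geq k_*$, which is the claim.

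I do not expect any genuine obstacle here: the lemma is precisely the uniform-in-$N$ upgrade of the elementary tail fact $k\,\p[Z>k]\to 0$, and the uniformity is bought entirely by the $z^{-n}$ decay in the error term of \Cref{lem:ratio_convergence_dist}, which (because $n \geq 2$) remains summable against an extra factor of $k$ when evaluated at $z=k$, leaving only a factor $\epsilon_N^2$ that vanishes as $N\to\infty$. The only small point requiring attention is that the comparison bound of \Cref{lem:ratio_convergence_dist} must be applied at the specific value $z = k$ for the range of $N$ we use; this is legitimate since that lemma's estimate holds for \emph{all} $z$ once $N$ is large enough, and we simply take $N_*$ to exceed its threshold.
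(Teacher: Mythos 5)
Your proposal is correct and follows essentially the same route as the paper's proof: the same triangle-inequality decomposition against $k\,\p[Z>k]$, the same appeal to \Cref{lem:kprobZ} for the Euclidean tail, and the same use of the quantitative bound from \Cref{lem:ratio_convergence_dist} at $z=k$ to get the $C\epsilon_N^2 k^{1-n}$ control of the discrepancy. The only difference is that you spell out the choice of $N_*$ and $k_*$ slightly more explicitly, which the paper leaves implicit.
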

\begin{proof}
    For all $N$ and $k$,
    \begin{equation*}
        k \p[|Z_N^1| > k] \leq k|\p[Z_N^1 > k] - \p[Z > k]| + k\p[Z > k]\,.
    \end{equation*}
    By \Cref{lem:kprobZ}, $\lim_{k \to \infty} k \p[Z > k] = 0$, so it suffices to control $k|\p[Z_N^1 > k] - \p[Z > k]|$. By \Cref{lem:ratio_convergence_dist}, there is a constant $C$ such that
    \begin{equation*}
        k|\p[Z_N^1 > k] - \p[Z > k]| \leq C\cdot \epsilon_N^2 k^{-n +1}\,,
    \end{equation*}
    which is $\mathcal{O}(\epsilon_N^2 k^{-1})$ for $n \geq 2$.
\end{proof}

\begin{lemma}\label{lem:truncated_variance}
    Let $\xi > 0$. Then there is a constant $C$, an integer $N_* \in \mathbb{N}$, and an integer $k_* \in \mathbb{N}$ such that for $N \geq N_*$ and $k \geq k_*$, we have
    \begin{equation*}
        k^{-1} \E[ (\Zntrunc)^2] < \xi + C \epsilon_N^2 \log(k) \,.
    \end{equation*}
\end{lemma}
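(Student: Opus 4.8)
The plan is to estimate the truncated second moment $\E[(\Zntrunc)^2]$ directly with the layer‑cake (Fubini) formula and then divide by $k$. Since $Z_N^1 \geq 1$ always, we have $\Zntrunc = Z_N^1 \mathbf{1}_{Z_N^1 \leq k}$, and applying the layer‑cake formula to the nonnegative random variable $(Z_N^1)^2 \mathbf{1}_{Z_N^1 \leq k}$, followed by the substitution $t = u^2$,
\[
\E[(\Zntrunc)^2] = \int_0^{k^2} \p[\sqrt{t} < Z_N^1 \leq k]\, dt = 2\int_0^k u\, \p[u < Z_N^1 \leq k]\, du \leq 2\int_0^k u\, \p[Z_N^1 > u]\, du\,.
\]
I would then split this integral at $u = 1$. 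On $[0,1]$ we have $\p[Z_N^1 > u] = 1$ because $Z_N^1 \geq 1$, so this piece contributes at most $1$ to the bound. On $[1,k]$, invoke \Cref{lem:ratio_convergence_dist}: since $x_N$ is uniform on $M$ and $p_N^1$ is uniform on $\BM{x_N}{\epsilon_N}$, the random variable $Z_N^1$ has the same law as the $Z_N$ of that lemma, so there is a constant $C'$ and an integer $N_*$ with $\p[Z_N^1 > u] \leq (1 + C'\epsilon_N^2)\, u^{-n}$ for all $u \geq 1$ and all $N \geq N_*$.

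Carrying out the resulting elementary integral $\int_1^k u^{1-n}\, du$ is the one place where the dimension matters: it equals $\log k$ when $n = 2$ and is bounded by $1/(n-2)$ when $n \geq 3$. Since $\log k \to \infty$, once $k$ exceeds some absolute $k_0$ both cases are dominated by $2(1 + C'\epsilon_N^2)\log k$, giving
\[
\E[(\Zntrunc)^2] \leq 1 + 2(1 + C'\epsilon_N^2)\log k \,.
\]
Dividing by $k$ and using $1/k \leq 1$ in the error term yields $k^{-1}\E[(\Zntrunc)^2] \leq \frac{1 + 2\log k}{k} + 2C'\epsilon_N^2 \log k$. The first summand tends to $0$ as $k \to \infty$, so I would choose $k_* \geq k_0$ with $\frac{1 + 2\log k}{k} < \xi$ for all $k \geq k_*$; then for $N \geq N_*$ and $k \geq k_*$ the stated bound holds with $C := 2C'$ (which depends on neither $N$ nor $k$).

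The only subtlety is the $n = 2$ case, where $Z_N^1$ itself has infinite second moment (\Cref{lem:infinite_variance}); it is exactly the truncation that makes the second moment finite, at the price of the $\log k$ growth, and one must check that after dividing by $k$ this growth is slow enough to be swallowed by $\xi$ together with the $\epsilon_N^2 \log k$ error. Everything else is a routine layer‑cake computation combined with the quantitative CDF estimate of \Cref{lem:ratio_convergence_dist}, so I expect no real obstacle beyond bookkeeping the constants and the order of quantifiers ($N_*$ from \Cref{lem:ratio_convergence_dist}, then $k_*$ chosen afterward).
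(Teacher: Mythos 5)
Your proposal is correct and follows essentially the same route as the paper's proof: both rest on the layer-cake identity $\E[(\Zntrunc)^2]=\int_0^k 2z\,\p[\Zntrunc>z]\,dz$, the quantitative CDF bound of \Cref{lem:ratio_convergence_dist}, the split at $z=1$, and the observation that $n=2$ is the worst case producing the $\log k$ term. The only (harmless) difference is that you integrate the explicit tail $\p[Z>u]=u^{-n}$ directly, whereas the paper isolates $k^{-1}\int_0^k 2z\,\p[Z>z]\,dz$ and sends it to $0$ by dominated convergence; the two computations are interchangeable here.
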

\begin{proof}
    By Lemma 2.2.13 in \cite{durrett_probability}, we have that
    \begin{equation*}
        k^{-1}\E[ (\Zntrunc)^2] = k^{-1}\int_0^{\infty} 2 z \p[ \Zntrunc > z] \,dz\,.
    \end{equation*}
    Because $\Zntrunc$ is truncated,
    \begin{equation*}
        k^{-1}\int_0^{\infty} 2 z \p[ \Zntrunc > z] \,dz = k^{-1}\int_0^k 2 z \p[ \Zntrunc > z] \,dz \leq k^{-1}\int_0^k 2z \p[Z_N^1 > z] \,dz\,.
    \end{equation*}
    We note that
    \begin{equation*}
        k^{-1} \int_0^k 2z \p[Z_N^1 > z] \,dz \leq k^{-1} \int_0^k 2z|\p[Z_N^1 > z] - \p[Z > z]| \,dz + k^{-1} \int_0^k 2z\p[Z > z] \,dz\,.
    \end{equation*}
    Putting the equations above together,
    \begin{equation}\label{eq:truncated_var_1}
         k^{-1}\E[ (\Zntrunc)^2] \leq k^{-1} \int_0^k 2z|\p[Z_N^1 > z] - \p[Z > z]|dz + k^{-1} \int_0^k 2z\p[Z > z]dz\,.
    \end{equation}
When $z \in [0, 1]$, we have $\p[Z_N^1 > z] = \p[Z > z] = 1$ because $Z_N^1 \geq 1$ and $Z \geq 1$. By \Cref{lem:ratio_convergence_dist}, there is a constant $C$ such that
 \begin{equation*}
     |\p[Z_N^1 > z] - \p[Z > z]| \leq C\epsilon_N^2 z^{-n}
 \end{equation*}
 for sufficiently large $N$ and all $z \geq 1$. Therefore,
 \begin{equation*}
     k^{-1} \int_0^k 2z|\p[Z_N^1 > z] - \p[Z > z]|\,dz = k^{-1} \int_1^k 2z|\p[Z_N^1 > z] - \p[Z > z]|\,dz \leq 2C \epsilon_N^2 \int_1^k z^{-n+1} \,dz\,.
 \end{equation*}
The worst-case scenario is $n = 2$, in which case the right-hand side is
 \begin{equation*}
     2C \epsilon_N^2 \log(k)\,,
 \end{equation*}
 so by \Cref{eq:truncated_var_1},
 \begin{equation*}
      k^{-1}\E[ (\Zntrunc)^2] \leq 2C \epsilon_N^2 \log(k) + k^{-1} \int_0^k 2z \p[Z > z]\,dz\,.
 \end{equation*}
    
    By a change of variables to the integral on the right-hand side,
    \begin{equation*}
        k^{-1}\int_0^k 2 z \p[ Z > z] \,dz = \int_0^1 2kz \p[Z > kz] \,dz\,.
    \end{equation*}
    The right-hand side is independent of $N$ and goes to $0$ as $k \to \infty$ by the dominated convergence theorem, so it is bounded above by $\xi$ for sufficiently large $k$ and all $N$.
 
\end{proof}

\begin{lemma}\label{lem:uniform_LLN_n2}
    Let $\eta > 0$ and $\xi > 0$. Then there is a constant $C$, an integer $N_* \in \mathbb{N}$, and an integer $k_* \in \mathbb{N}$ such that for $N \geq N_*$ and $k \geq k_*$,
    \begin{equation*}
        \p[ |S_{N, k}/k - \mu| > \eta] < \xi + C \epsilon_N^2 \log(k)\,.
    \end{equation*}
\end{lemma}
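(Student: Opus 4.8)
The plan is to carry out the classical truncation argument for a weak law of large numbers for triangular arrays (cf.\ \cite{durrett_probability}), with the truncation level taken equal to the number of summands $k$; the three supporting estimates have already been isolated as \Cref{lem:truncated_mean_convergence}, \Cref{lem:k_truncated}, and \Cref{lem:truncated_variance}, so what remains is to assemble them and to track the error terms so that the final bound takes exactly the stated shape $\xi + C\epsilon_N^2\log k$. I would first decompose according to whether truncation changes the sum:
\[
\p[\,|S_{N,k}/k - \mu| > \eta\,] \le \p[\,S_{N,k} \ne \Snktrunc\,] + \p[\,|\Snktrunc/k - \mu| > \eta\,].
\]
The first term is bounded by $\sum_{i=1}^k \p[\,|Z_N^i| > k\,] = k\,\p[\,|Z_N^1| > k\,]$, which by \Cref{lem:k_truncated} (applied with tolerance $\xi/3$) is $< \xi/3$ once $N$ and $k$ are large enough. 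For the second term I would split $\Snktrunc/k - \mu = (\Snktrunc/k - \mu_{N,k}) + (\mu_{N,k} - \mu)$ and invoke \Cref{lem:truncated_mean_convergence} (with tolerance $\eta/2$) to make the deterministic piece satisfy $|\mu_{N,k}-\mu| < \eta/2$, so that $\p[\,|\Snktrunc/k - \mu| > \eta\,] \le \p[\,|\Snktrunc/k - \mu_{N,k}| > \eta/2\,]$ for $N,k$ large.

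To this last probability I would apply Chebyshev's inequality: since the $\overline{Z_{N,k}^i}$ are i.i.d.\ with mean $\mu_{N,k}$,
\[
\p[\,|\Snktrunc/k - \mu_{N,k}| > \eta/2\,] \;\le\; \frac{4}{\eta^2}\cdot\frac{\var(\overline{Z_{N,k}^1})}{k} \;\le\; \frac{4}{\eta^2}\cdot\frac{\E[(\overline{Z_{N,k}^1})^2]}{k},
\]
and then \Cref{lem:truncated_variance} (applied with tolerance $\eta^2\xi/12$) bounds $k^{-1}\E[(\overline{Z_{N,k}^1})^2]$ by $\eta^2\xi/12 + C'\epsilon_N^2\log k$, making this contribution $< \xi/3 + (4C'/\eta^2)\epsilon_N^2\log k$. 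Summing the three pieces, setting $C := 4C'/\eta^2$, and taking $N_*$, $k_*$ to be the maxima of the finitely many thresholds produced above yields $\p[\,|S_{N,k}/k-\mu|>\eta\,] < \xi + C\epsilon_N^2\log k$, as claimed.

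The only genuinely nonroutine point is the presence of the $\epsilon_N^2\log k$ term, which is forced by the case $n=2$. There $Z$ has a heavy tail with exponent $n=2$, so the truncated second moment $\E[(\overline{Z_{N,k}^1})^2]$ is not bounded uniformly in $k$ but grows like $\log k$, and the clean $O(1/k)$ Chebyshev bound available when $n \ge 3$ is lost. The argument is saved by the observation (made inside \Cref{lem:truncated_variance}) that the portion of the truncated second moment responsible for this growth is controlled by the distributional discrepancy $|\p[Z_N^1 > z] - \p[Z > z]| \le C\epsilon_N^2 z^{-n}$ from \Cref{lem:ratio_convergence_dist}, hence carries a factor $\epsilon_N^2 \to 0$; the care required is to route this factor cleanly through the Chebyshev step (which is why the resulting constant $C$ depends on $\eta$) and to recall that the remaining $N$-free tail $k^{-1}\int_0^k 2z\,\p[Z>z]\,dz$ vanishes as $k \to \infty$ by dominated convergence. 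Everything else — splitting $\xi$ into thirds, matching tolerances, and taking finite maxima — is routine bookkeeping.
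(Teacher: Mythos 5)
Your proposal is correct and follows essentially the same route as the paper: truncate at level $k$, bound $\p[S_{N,k}\neq\Snktrunc]$ by $k\,\p[|Z_N^1|>k]$ via \Cref{lem:k_truncated}, shift the mean using \Cref{lem:truncated_mean_convergence}, and apply Chebyshev together with \Cref{lem:truncated_variance}. The only differences — performing the truncation split before rather than after the mean-centering, and dividing $\xi$ into thirds rather than halves — are immaterial.
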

\begin{proof}
    By \Cref{lem:truncated_mean_convergence}, there are $N_1$ and $k_1$ such that if $N \geq N_1$ and $k \geq k_1$, then
    \begin{equation*}
        |\mu_{N, k} - \mu| < \eta / 2\,,
    \end{equation*}
    so
    \begin{equation*}
        \p[ |S_{N, k}/k - \mu| > \eta] < \p[ |S_{N, k}/k - \mu_{N, k}| > \eta/2]\,.
    \end{equation*}
    Then we have
    \begin{equation*}
        \p[ |S_{N, k}/k - \mu_{N, k}| > \eta/2] \leq \p[S_{N, k} \neq \Snktrunc] + \p[ |\Snktrunc/k - \mu_{N, k} | > \eta/2]\,.
    \end{equation*}
    We can upper bound $\p[S_{N, k} \neq \Snktrunc]$ by observing that $S_{N, k} \neq \Snktrunc$ only if $|Z_N^i| > k$ for some $1 \leq i \leq k$. By applying a union bound, we have that 
    $\p[S_{N, k} \neq \Snktrunc] \leq k \p[|Z_N^1| > k]$. Therefore, by \Cref{lem:k_truncated}, there are $N_2 > N_1$ and $k_2 > k_1$ such that
    \begin{equation*}
        \p[S_{N, k} \neq \Snktrunc] < \xi/2
    \end{equation*}
    for all $N > N_2$ and $k > k_2$. 
    
    By Chebyschev's inequality and straightforward simplification,
    \begin{align*}
        \p[ |\Snktrunc/k - \mu_{N, k} | > \eta/2] 
        &\leq 4\eta^{-2} \var(\Snktrunc/k) \\
        &= 4\eta^{-2}k^{-1} \var(\Zntrunc) \\
        &\leq  4\eta^{-2}k^{-1} \E[ (\Zntrunc)^2]\,.
    \end{align*}
    By \Cref{lem:truncated_variance}, there is a constant $C$ and integers $N_3 \geq N_2$, $k_3 \geq k_2$ such that 
    \begin{equation*}
        4\eta^{-2}k^{-1} \E[ (\Zntrunc)^2] < \xi/2 + C \epsilon_N^2 \log(k) \,.
    \end{equation*}
    for all $N \geq N_3$ and $k \geq k_3$, which concludes the proof.
\end{proof}
\bibliographystyle{plain}
\bibliography{references}
\end{document}